\documentclass[oneside,reqno,english]{amsart}
\usepackage[T1]{fontenc}
\usepackage[latin9]{inputenc}
\setcounter{secnumdepth}{2}
\usepackage{amstext}
\usepackage{amsthm}
\usepackage{amssymb}
\usepackage{stmaryrd}

\makeatletter
\numberwithin{equation}{section}
\numberwithin{figure}{section}
\theoremstyle{plain}
\newtheorem{thm}{\protect\theoremname}
\theoremstyle{plain}
\newtheorem{lem}[thm]{\protect\lemmaname}
\theoremstyle{plain}
\newtheorem{cor}[thm]{\protect\corollaryname}
\theoremstyle{plain}
\newtheorem{prop}[thm]{\protect\propositionname}

\numberwithin{thm}{section}

\usepackage{babel}
\providecommand{\corollaryname}{Corollary}
\providecommand{\lemmaname}{Lemma}
\providecommand{\propositionname}{Proposition}
\providecommand{\theoremname}{Theorem}

\makeatother

\usepackage{babel}
\providecommand{\corollaryname}{Corollary}
\providecommand{\lemmaname}{Lemma}
\providecommand{\propositionname}{Proposition}
\providecommand{\theoremname}{Theorem}

\begin{document}
\title[Polaron Hydrogenic Atom in a Strong Magnetic Field]{Ground State of the Polaron Hydrogenic Atom in a Strong Magnetic
Field }
\address{\hspace{-4ex}$\,$Institut for Matematik, Aarhus Universitet, Ny
Munkegade 8000 Aarhus C, Danmark, and Institut for Matematiske Fag,
Aalborg Universitet, Skjernvej 4A 9220 Aalborg Øst, Danmark}
\address{\hspace{-4ex}ghanta@math.au.dk}
\thanks{\hspace{-3.5ex}The paper is based on work supported by NSF DMS-1600560.}
\author{Rohan Ghanta }
\begin{abstract}
The ground-state electron density of a polaron bound to a Co-ulomb
potential in a homogeneous magnetic field--the transverse coordinates
integrated out--converges pointwise and weakly in the strong magnetic
field limit to the square of a hyperbolic secant function.
\end{abstract}

\maketitle

\section{Introduction}

\noindent A non-relativistic Hydrogen atom in a strong magnetic field
interacting with the quantized longitudinal optical modes of an ionic
crystal is considered within the framework of Fröhlich's 1950 polaron
model \cite{Fr=00003D0000F6hlich}. Starting with Platzman's variational
treatment in 1962 the polaron Hydrogenic atom has been of interest
for describing an electron bound to a donor impurity in a semiconductor
\cite{Platzman}. Its first rigorous examination however came much
later in 1988 from Löwen who disproved several longstanding claims
about a self-trapping transition \cite{L=0000F6wen}.

A study of the polaron Hydrogenic atom in strong magnetic fields was
initiated by Larsen in 1968 for interpreting cyclotron resonance measurements
in InSb \cite{Larsen}. The model has since been considered in formal
analogy to the Hydrogen atom in a magnetic field though the latter
was understood rigorously again much later in 1981 by Avron et al.
who proved several properties including the non-degeneracy of the
ground state \cite{Avron-Herbst-Simon}. Whether or not these Hydrogenic
properties indeed persist when a coupling to a quantized field is
turned on remains to be seen.

In any case polarons are the simplest Quantum Field Theory models,
yet their most basic features such as the effective mass, ground-state
energy and wave function cannot be evaluated explicitly. And while
several successful theories have been proposed over the years to approximate
the energy and effective mass of various polarons, they are built
entirely on unjustified, even questionable, Ansätze for the wave function.
The paper provides for the first time an explicit description of the
ground-state wave function of a polaron in an asymptotic regime.

For the polaron Hydrogenic atom in a homogeneous magnetic field its
ground-state electron density in the magnetic-field direction is shown
to converge in the strong field limit to the square of a hyperbolic
secant function: a sharp contrast to the paradigmatic Gaussian variational
wave functions \cite{Zorkani Belhissi Kartheuser},\cite{Shi Devreese}
\& ref. therein. The explicit limiting function is realized as a density
of the minimizer of a one-dimensional problem with a delta-function
potential describing the second leading-order term of the ground-state
energy cf. \cite{Baumgartner-Solovej-Yngvason},\cite{Frank-Geisinger},\cite{Lieb-Solovej-Yngvason}.

\section{Model and Main Result}

\noindent The Fröhlich model is defined by the Hamiltonian 
\begin{equation}
\mathbb{H}(B):=H_{B}-\partial_{3}^{2}-\beta\left|x\right|^{-1}+\mathcal{N}+\frac{\sqrt{\alpha}}{2\pi}\int_{\mathbb{R}^{3}}\left(\frac{a_{k}e^{ik\cdot x}}{\left|k\right|}+\frac{a_{k}^{\dagger}e^{-ik\cdot x}}{\left|k\right|}\right)dk\label{eq:h1}
\end{equation}
acting on the Hilbert space $\mathcal{H}:=L^{2}(\mathbb{R}^{3})\otimes\mathcal{F}$
where $\mathcal{F}:=\oplus_{n\geq0}\varotimes_{s}^{n}L^{2}(\mathbb{R}^{3})$
is a symmetric phonon Fock space over $L^{2}(\mathbb{R}^{3})$. The
creation and annihilation operators for a phonon mode $a_{k}^{\dagger}$
and $a_{k}$ act on $\mathcal{F}$ and satisfy $[a_{k},a_{k^{\prime}}^{\dagger}]=\delta(k-k^{\prime})$.
The energy of the phonon field is described by the operator $\mathcal{N}=\int_{\mathbb{R}^{3}}a_{k}^{\dagger}a_{k}\,dk.$
The kinetic energy of the electron is described by the operator $H_{B}-\partial_{3}^{2}$
acting on $L^{2}(\mathbb{R}^{3})$, where $H_{B}=\sum_{j=1,2}\left(-i\partial_{j}+A_{j}(\mathbf{x})\right)^{2}$
is the two-dimensional Landau Hamiltonian with the magnetic vector
potential $A\left(x_{1},x_{2},x_{3}\right)=B/2\left(-x_{2},x_{1},0\right)$
corresponding to a homogeneous magnetic field of strength $B\geq0$
in the $x_{3}$-direction; the transverse coordinates are denoted
by $x_{\perp}=(x_{1},x_{2})$. Furthermore $\inf\text{spec}\,H_{B}=B$.
The parameters $\alpha\geq0$, $\beta>0$ denote the strengths of
the Coulombic electron-phonon interaction and the localizing Coulomb
potential; the coupling function $1/\left|k\right|$ is proportional
to the square-root of the Fourier transform of the Coulomb interaction.
The ground-state energy is 
\begin{equation}
E_{0}(B):=\inf\left\{ \left\langle \Psi,\,\mathbb{H}(B)\Psi\right\rangle :\ \|\Psi\|=1,\,\Psi\in H_{A}^{1}\left(\mathbb{R}^{3}\right)\otimes\text{dom}\left(\sqrt{\mathcal{N}}\right)\right\} ,\label{eq:2-1}
\end{equation}
where $H_{A}^{1}\left(\mathbb{R}^{3}\right)$ is a magnetic Sobolev
space of order one. A ground state exists since $-i\nabla-\beta|x|^{-1}$
has a negative energy bound state in $L^{2}\left(\mathbb{R}^{3}\right)$
\cite{Griesemer-Lieb-Loss}.

Unlike previous treatments here the arguments remain valid for all
values of the parameters $\alpha\geq0$, $\beta>0$. First the large
$B$ asymptotics of the ground-state energy is derived; the main result
is given as Theorem \ref{thm:Wavefunction} below. Since the pioneering
work of Larsen the model has been considered only in the perturbative
regime $\alpha\ll\beta$, and the ground-state energy $E_{0}(B)$
has been approximated as the Hydrogenic energy 
\[
E_{_{H}}(B):=\text{inf spec }H_{B}-\partial_{3}^{2}-\beta\left|x\right|^{-1}
\]
with a supposedly small correction from the electron-phonon interaction.
The large $B$ asymptotics of the Hydrogenic energy was derived rigorously
in 1981 by Avron et al. \cite{Avron-Herbst-Simon} using ideas from
\cite{Blankenbecler-Goldberger-Simon},\cite{Simon}:
\begin{align}
E_{_{H}}(B)=B & -\frac{\beta^{2}}{4}\left(\ln B\right)^{2}+\beta^{2}\ln B\ln\ln B-\beta^{2}\left(-\gamma_{E}/2+\ln2\right)\ln B-\beta^{2}\left(\ln\ln B\right)^{2}\nonumber \\
 & +2\beta^{2}\left(-\gamma_{E}/2-1+\ln2\right)\ln\ln B+\mathcal{O}(1)\ \ \text{as}\ \ B\rightarrow\infty\label{eq:Avron-Herbst-Simon}
\end{align}
with $\gamma_{E}$ the Euler-Mascheroni constant, and the expansion
can be carried out to arbitrary order. The first three terms are understood
heuristically: For large $B$ the electron is tightly bound in the
transverse plane to the lowest Landau orbit while localized in the
magnetic-field direction by a one-dimensional effective Coulomb potential
that behaves to leading order like a delta well of strength $\beta\ln(B/(\ln B)^{2})$
\cite{Brummelhuis-Ruskai},\cite{Ruskai-Werner},\cite{Schiff Snyder};
see (\ref{eq:uppercut}) and Appendix B below. The electron motion
is effectively one-dimensional cf. \cite{Brummelhuis-Duclos}. The
second and third leading-order terms describe the dominant asymptotic
behavior of the ground-state energy of this one-dimensional electron
confined along the magnetic field. The pronounced anisotropy in the
system is reflected by the characteristic length scales of the electron
density in the transverse and the magnetic-field direction $1/\sqrt{B}$
and $1/\ln B$ respectively.

The above Hydrogenic heuristics still apply when a coupling to the
phonon field is introduced i.e. $\alpha>0$. For large $B$ the phonons
cannot follow the electron's rapid motion in the transverse plane
and so resign themselves to dressing its entire Landau orbit: not
only is the electron again localized in the magnetic-field direction
by the one-dimensional effective Coulomb potential, but also the electron-phonon
coupling function is now proportional to the square-root of the Fourier
transform of the same effective Coulomb interaction cf. \cite{Kochetov-Leschke-Smondyrev},\cite{Smondyrev}
and property (k) in \cite{Ruskai-Werner}; the system behaves as a
one-dimensional strongly coupled polaron to leading order with interaction
strength $\alpha\ln(B/(\ln B)^{2})$ confined along the magnetic field
by a delta well of strength $\beta\ln(B/(\ln B)^{2})$, i.e. in the
effective one-dimensional model the electron-phonon coupling is mediated
by the magnetic field. The analogous large $B$ asymptotics of the
polaron Hydrogenic energy is derived to second order: 
\begin{thm}
\label{thm:Energy Asymptotics}Let $E_{0}(B)$ be as defined in (\ref{eq:2-1})
above. Then 
\begin{equation}
E_{0}(B)=B+\mathfrak{e}_{_{0}}\left(\ln B\right)^{2}+\mathcal{O}\left(\left(\ln B\right)^{3/2}\right)\text{ }\text{as }B\rightarrow\infty\text{ with}\label{eq:thm1}
\end{equation}
\begin{align}
\mathfrak{e}_{_{0}}: & =\inf\left\{ \int_{\mathbb{R}}\left|\varphi^{\prime}\right|^{2}dx-\frac{\alpha}{2}\int_{\mathbb{R}}\left|\varphi\right|^{4}dx-\beta\left|\varphi(0)\right|^{2}:\int_{\mathbb{R}}\left|\varphi\right|^{2}dx=1\right\} \label{eq: p}\\
 & =-\frac{1}{48}\left(\alpha^{2}+6\alpha\beta+12\beta^{2}\right).\label{eq:p-2}
\end{align}
\end{thm}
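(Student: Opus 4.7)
The plan is to establish matching upper and lower bounds of the form $E_{0}(B)=B+(\ln B)^{2}\mathfrak{e}_{0}+\mathcal{O}((\ln B)^{3/2})$, and separately to verify the closed formula (\ref{eq:p-2}) by analyzing the one-dimensional minimization problem (\ref{eq: p}). For the latter, symmetric decreasing rearrangement applied to $|\varphi|$ forces any minimizer to be even, nonnegative, and strictly decreasing on $(0,\infty)$; away from the origin it satisfies the focusing cubic ODE $-\varphi''-\alpha\varphi^{3}=\mu\varphi$, while the distributional $\delta$-term imposes the jump condition $\varphi'(0^{+})-\varphi'(0^{-})=-\beta\,\varphi(0)$. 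All $L^{2}$ solutions take the form $\varphi_{\ast}(x)=a\,\mathrm{sech}(b|x|+c)$; the jump condition yields $\tanh c=\beta/(2b)$, the constraint $\int|\varphi_{\ast}|^{2}=1$ gives $a^{2}=b/(2(1-\tanh c))$, and minimizing the resulting one-parameter expression over $b$ produces $b=(2\beta+\alpha)/4$, which on substitution recovers (\ref{eq:p-2}).

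For the upper bound I would evaluate $\mathbb{H}(B)$ on a factorized trial vector $\Psi_{B}=\pi_{0}(x_{\perp})\,\varphi_{\ell}(x_{3})\otimes W(f)\Omega$, where $\pi_{0}$ is the normalized lowest Landau orbital, $\varphi_{\ell}(x)=\sqrt{\ell}\,\varphi_{\ast}(\ell x)$ with $\ell:=\ln(B/(\ln B)^{2})$ rescales the minimizer $\varphi_{\ast}$ of (\ref{eq: p}) to length $1/\ell$, and $W(f)\Omega$ is a Weyl coherent state with classical field $f\in L^{2}(\mathbb{R}^{3})$ chosen to cancel the first-order variation in the quadratic phonon terms. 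Direct evaluation gives $\langle\Psi_{B},H_{B}\Psi_{B}\rangle=B$, while integrating out $\pi_{0}$ in the Coulomb and electron--phonon terms produces, respectively, an effective attractive 1D interaction well-approximated by a delta well of strength $\beta\ell$ (see \cite{Brummelhuis-Ruskai} and (\ref{eq:Avron-Herbst-Simon})) and, after optimizing over $f$, a 1D self-interaction of the form $-(\alpha\ell/2)\int|\varphi_{\ell}|^{4}dx_{3}$ (cf.\ property (k) in \cite{Ruskai-Werner} and \cite{Kochetov-Leschke-Smondyrev, Smondyrev}). Rescaling back to $\varphi_{\ast}$ produces a total energy $B+\ell^{2}\mathfrak{e}_{0}+o(\ell^{2})$, and since $\ell^{2}=(\ln B)^{2}+\mathcal{O}((\ln B)(\ln\ln B))$ the upper half of (\ref{eq:thm1}) follows.

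The lower bound is more delicate. I would decompose $\mathcal{H}$ via the spectral projections $\Pi_{0}$ and $\Pi_{0}^{\perp}=1-\Pi_{0}$ of $H_{B}$ onto the lowest Landau level and its complement; since $(H_{B}-B)\Pi_{0}^{\perp}\geq 2B\,\Pi_{0}^{\perp}$ dominates every other term for $B$ large, the analysis reduces, at the cost of an $\mathcal{O}(1)$ error in $E_{0}(B)-B$, to an effective Hamiltonian on $L^{2}(\mathbb{R})\otimes\mathcal{F}$ with coupling function $|k|^{-1}\exp(-|k_{\perp}|^{2}/(4B))$ and an effective one-dimensional Coulomb potential $-\beta(|\pi_{0}|^{2}\ast|x|^{-1})(x_{3})$ approximable to leading order by $-\beta\ell\,\delta(x_{3})$. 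A Pekar-type completion of the square on the phonon sector bounds the quadratic form below by $-\alpha(2\pi)^{-2}\int|k|^{-2}|\hat{\rho}(k)|^{2}dk$ with $\rho$ the electron density, and for densities of the projected tensor form this classical functional reduces, to leading order in $\ell$, to $-(\alpha\ell/2)\int|\psi|^{4}dx_{3}$. Combining the reductions bounds $E_{0}(B)-B$ from below by the rescaled functional in (\ref{eq: p}), whose infimum is $\ell^{2}\mathfrak{e}_{0}$, and careful bookkeeping of the subleading corrections absorbs them into $\mathcal{O}((\ln B)^{3/2})$.

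The hardest step will be the lower bound, and within it the quantitative control of the electron--phonon interaction in terms of the Pekar functional with remainders of the stated order: the three reductions---Landau-level truncation, replacement of the effective Coulomb potential by a delta well, and the strong-coupling polaron bound---each produce a remainder that must be tracked and balanced, which is delicate precisely because the effective couplings $\alpha\ell$ and $\beta\ell$ diverge with $B$. The natural route is to adapt the operator inequalities of \cite{Frank-Geisinger} together with the delta-well analyses of \cite{Baumgartner-Solovej-Yngvason, Lieb-Solovej-Yngvason} to the present setting, where the phonon field is kept quantized rather than integrated out and the confinement is provided by the magnetic field itself.
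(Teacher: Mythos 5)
Your overall architecture---product trial state with a coherent phonon field for the upper bound, Landau-level projection plus reduction to a one-dimensional Pekar-type problem for the lower bound, and the Euler--Lagrange/sech analysis for the explicit constant---is the same as the paper's, and the upper-bound and constant computations are essentially correct as sketched (the paper's Lemmas 4.2--4.3 and 3.1; note only that before the Euler--Lagrange computation can determine $\mathfrak{e}_{_{0}}$ you must first prove a minimizer of (\ref{eq: p}) \emph{exists}, which the paper does by concentration--compactness using the binding inequality $\mathfrak{e}_{_{0}}<\mathfrak{e}_{_{T}}$, and that once the ODE coefficient matching $a^{2}=2b^{2}/\alpha$ is imposed there is no residual parameter left to minimize over).

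The genuine gap is in the lower bound, at the step you describe as ``a Pekar-type completion of the square on the phonon sector bounds the quadratic form below by $-\alpha(2\pi)^{-2}\int|k|^{-2}|\hat{\rho}(k)|^{2}dk$.'' A completion of the square in the field operators gives only the state-independent bound $\int_{\Gamma}(a_{k}^{\dagger}a_{k}+\tfrac{\sqrt{\alpha}}{2\pi}|k|^{-1}(a_{k}e^{ikx}+a_{k}^{\dagger}e^{-ikx}))\,dk\geq-\tfrac{\alpha}{4\pi^{2}}\int_{\Gamma}|k|^{-2}dk$, which is divergent without a cutoff and in any case carries no information about the electron density; the density-dependent Pekar functional is an \emph{upper} bound (from product states) but is emphatically not obtained as a lower bound by completing the square. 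Producing it as a lower bound is the hard core of the strong-coupling problem and is exactly what the Lieb--Thomas machinery is for: an ultraviolet cutoff controlled by the kinetic energy (Lemma 5.4 here), reduction to finitely many block modes, localization of the electron on a scale $J$ so that within each block the field may be replaced by a classical (coherent-state) amplitude, and only then a completion of the square---now over classical fields---yielding the infimum $I$ over one-particle wave functions together with error terms $\mathcal{M}$, $J^{-2}$, $\mathcal{K}_{3}^{2}J^{2}R/(\gamma\mathcal{M}^{2})$ whose balancing is what produces the $(\ln B)^{3/2}$ remainder. You gesture at adapting \cite{Frank-Geisinger} at the end, but the mechanism you actually name would not close the argument. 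A second, smaller issue: replacing $P_{0}^{B}|x|^{-1}P_{0}^{B}$ by $-\beta(|\pi_{0}|^{2}\ast|x|^{-1})(x_{3})$ is only valid for states of exact product form; the lowest Landau level is infinitely degenerate, and for general projected states one needs the pointwise kernel bound $\|P_{0}^{B}\Psi\|_{\mathcal{F}}^{2}(x_{\perp},x_{3})\leq(B/2\pi)\widetilde{\Psi}(x_{3})^{2}$ together with the bathtub principle (Lemma 5.3 of the paper) to extract the delta well with controlled error.
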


\noindent Here the second leading-order term describes the dominant
asymptotic behavior of the ground-state energy of the effective one-dimensional
strongly coupled polaron confined along the magnetic field. It is
evaluated explicitly by minimizing a nonlinear functional. Furthermore
the cross term in (\ref{eq:p-2}) indicates for large $B$ the effect
of the electron-phonon interaction is not perturbative.

The large $B$ asymptotics for the polaron Hydrogenic energy is argued
differently from Avron et al.'s proof of (\ref{eq:Avron-Herbst-Simon})
and generalizes the result of Frank and Geisinger who proved (\ref{eq:thm1})-(\ref{eq:p-2})
when $\beta=0$ using upper and lower bounds to the ground-state energy
\cite{Frank-Geisinger}. Their upper bound is established with a trial
wave function. Their lower bound is established by showing that the
Hamiltonian when restricted to the lowest Landau level is bounded
from below in the sense of quadratic forms by an essentially one-dimensional
strong-coupling Hamiltonian; the strategy from \cite{Lieb Thomas}
is then used to arrive at the nonlinear minimization problem for the
second leading-order term along with lower order error terms cf. \cite{Ghanta}.

For proving Theorem \ref{thm:Energy Asymptotics} Frank et al.'s strategy
in \cite{Frank-Geisinger} applies mutatis mutandis. Here the argument
for the upper bound is simplified using the effective Coulomb potential,
given below in (\ref{eq:uppercut}), which plays an essential role
in Avron et al.'s proof of (\ref{eq:Avron-Herbst-Simon}) but is conspicuously
absent from \cite{Frank-Geisinger}. Here the argument for the lower
bound uses the bathtub principle \cite{Lieb-Loss} to extract a delta
function from the Coulomb potential in the lowest Landau level; just
like in \cite{Frank-Geisinger} the error terms are bounded by $\mathcal{O}((\ln B)^{3/2})$,
but--and this has been demonstrated recently in \cite{Frank-Seiringer}
for a strongly coupled polaron--the error terms in the lower bound
should be much smaller. Moreover the upper bound suggests the third
leading-order term again analogously to its Hydrogenic counterpart
in (\ref{eq:Avron-Herbst-Simon}) is $-4\mathfrak{e}_{_{0}}\ln B\ln\ln B$.

The expansion in (\ref{eq:thm1}) should be carried out to higher
order. I conjecture the first six leading-order terms behave analogously
to their Hydrogenic counterparts in (\ref{eq:Avron-Herbst-Simon}):
these should describe the leading asymptotics for the minimization
of a nonlinear functional arising naturally in the proof of the upper
bound and given below in (\ref{eq:classic}), which is a classical
approximation to the Fröhlich model of the strongly coupled one-dimensional
polaron confined along the magnetic field: in this classical approximation
the electron-phonon interaction is replaced with the one-dimensional
effective Coulomb self-interaction of the electron. The second leading-order
term above arises from this classical approximation by estimating
the one-dimensional effective Coulomb interaction as a delta interaction
of strength $\ln B$. Furthermore, in the seventh leading-order term
there should be an order-one quantum correction to the classical approximation
cf. \cite{Frank-Seiringer},\cite{Gross}. These higher-order asymptotics
should be provable with better control of the error terms when arguing
the lower bound, cf. \cite{Frank-Seiringer}, and by making full use
of the one-dimensional effective Coulomb potential: the electron-phonon
interaction of the one-dimensional Hamiltonian that is derived both
here and in Frank et al.'s proof \cite{Frank-Geisinger} of the lower
bound is described using an artificial coupling function, given in
(\ref{eq:Coupling Function}) below, when instead it should really
be argued that the coupling is proportional to the square-root of
the Fourier transform of the effective Coulomb potential in (\ref{eq:uppercut});
then the strategy from \cite{Lieb Thomas} can be used to arrive at
the classical approximation in (\ref{eq:classic}) now as a lower
bound. 

The two-term asymptotics of the ground-state energy achieved in Theorem
\ref{thm:Energy Asymptotics} suffices for arguing the main result: 
\begin{thm}
\label{thm:Wavefunction} Let $\Psi^{(B)}\in\mathcal{H}$ be any approximate
ground-state wave function satisfying $\left\langle \Psi^{(B)},\mathbb{H}(B)\Psi^{(B)}\right\rangle =E_{0}(B)+\mathfrak{o}(\left(\ln B\right)^{2})$.
The one-dimensional minimization problem in (\ref{eq: p}) admits
up to complex phase a unique minimizer 
\[
\phi_{0}\left(x_{3}\right)=\frac{\alpha+2\beta}{\sqrt{8\alpha}\cosh\left(\left(\frac{\alpha+2\beta}{4}\right)\left|x_{3}\right|+\tanh^{-1}\left(\frac{2\beta}{\alpha+2\beta}\right)\right)},
\]
and for $W$ a sum of a bounded Borel measure on the real line and
a $L^{\infty}\left(\mathbb{R}\right)$ function 
\begin{align}
 & \lim_{B\rightarrow\infty}\ \frac{1}{\left(\ln B\right)}\int_{\mathbb{R}}W(x_{3})\left(\int_{\mathbb{R}^{2}}\left\Vert \Psi^{(B)}\right\Vert _{\mathcal{F}}^{2}\left(x_{\perp},\frac{x_{3}}{\left(\ln B\right)}\right)dx_{\perp}\right)dx_{3}\nonumber \\
 & =\int_{\mathbb{R}}W(x_{3})\,\phi_{0}(x_{3})^{2}\,dx_{3}.\label{eq:limiting density}
\end{align}
\end{thm}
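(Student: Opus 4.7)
The plan is to combine the two-term energy asymptotics of Theorem \ref{thm:Energy Asymptotics} with the quadratic-form lower bound that underlies its proof: after the rescaling $x_{3}\mapsto x_{3}/\ln B$, the square root of the marginal density of $\Psi^{(B)}$ will be a minimizing sequence for \eqref{eq: p}, and then uniqueness of $\phi_{0}$ together with a concentration-compactness argument (translation invariance being broken by the delta term $-\beta|\varphi(0)|^{2}$) delivers the stated convergence.

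First I would establish that $\phi_{0}$ is, up to complex phase, the unique minimizer of \eqref{eq: p}. Existence is standard: the 1D Gagliardo--Nirenberg and Sobolev trace bounds $\int|\varphi|^{4}\le C\|\varphi\|_{L^{2}}^{3}\|\varphi'\|_{L^{2}}$ and $|\varphi(0)|^{2}\le 2\|\varphi\|_{L^{2}}\|\varphi'\|_{L^{2}}$ give coercivity in $H^{1}(\mathbb{R})$. Any minimizer satisfies the Euler--Lagrange equation $-\varphi''-\alpha|\varphi|^{2}\varphi=\mu\,\varphi$ on $\mathbb{R}\setminus\{0\}$ with jump condition $\varphi'(0^{+})-\varphi'(0^{-})=-2\beta\,\varphi(0)$; one may take $\varphi\ge 0$ (replacing $\varphi$ by $|\varphi|$ does not increase the functional), and quadrature of the focusing cubic NLS on each half-line then classifies the positive, decaying $L^{2}$-solutions as shifted sech profiles. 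The normalization $\|\varphi\|_{L^{2}}=1$ and the jump condition uniquely fix the scale and shift, giving the displayed $\phi_{0}$.

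Setting $L:=\ln B$ and $\phi^{(B)}(x_{3}):=\bigl(L^{-1}\!\int_{\mathbb{R}^{2}}\|\Psi^{(B)}\|_{\mathcal{F}}^{2}(x_{\perp},x_{3}/L)\,dx_{\perp}\bigr)^{1/2}$, the assertion \eqref{eq:limiting density} becomes $|\phi^{(B)}|^{2}\to\phi_{0}^{2}$ tested against bounded measures and $L^{\infty}$ functions. Running the lower-bound half of the proof of Theorem \ref{thm:Energy Asymptotics}---lowest-Landau-level projection, bathtub extraction of a 1D delta potential of strength $\beta L$ from the Coulomb term, and Lieb--Thomas coherent-state estimate of the phonon coupling---yields the quadratic-form inequality
\begin{equation*}
\mathbb{H}(B)\ \ge\ B+L^{2}\Bigl(\int\bigl|(\phi^{(B)})'\bigr|^{2}\,dx_{3}-\tfrac{\alpha}{2}\int|\phi^{(B)}|^{4}\,dx_{3}-\beta|\phi^{(B)}(0)|^{2}\Bigr)+\mathcal{O}(L^{3/2})
\end{equation*}
when evaluated on $\Psi^{(B)}$. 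Combined with the matching upper bound of Theorem \ref{thm:Energy Asymptotics} and the hypothesis on $\Psi^{(B)}$, this forces the parenthesis to converge to $\mathfrak{e}_{_{0}}$, so $\phi^{(B)}$ is a minimizing sequence for \eqref{eq: p}. The coercivity estimates bound $\{\phi^{(B)}\}$ in $H^{1}(\mathbb{R})$; vanishing is excluded since $\mathfrak{e}_{_{0}}<0$, and a direct scaling computation shows that for mass $m$ at the origin and $1-m$ escaping to infinity the energy is at least $-\frac{1}{48}\bigl(m^{3}\alpha^{2}+6m^{2}\alpha\beta+12m\beta^{2}+(1-m)^{3}\alpha^{2}\bigr)>\mathfrak{e}_{_{0}}$ for every $m\in[0,1)$, which excludes dichotomy. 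Concentration-compactness therefore gives strong convergence in $L^{2}\cap L^{4}$ to a minimizer, which by uniqueness equals $\phi_{0}$ up to phase; convergence of the kinetic energy then upgrades this to strong $H^{1}$-convergence, hence uniform convergence on $\mathbb{R}$, giving $|\phi^{(B)}|^{2}\to\phi_{0}^{2}$ in $L^{1}(\mathbb{R})$ (covering $L^{\infty}$ weights) and uniformly on $\mathbb{R}$ (covering bounded Borel measures).

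The main obstacle is the quadratic-form lower bound above. One must extract a 1D nonlinear functional evaluated on $\phi^{(B)}$ alone from the full 3D Fr\"ohlich Hamiltonian, controlling the Landau-level projection error, the bathtub error in replacing the Coulomb singularity by a 1D delta, and---most delicately---the Lieb--Thomas error in replacing the quantized phonon field by a classical self-interaction with an auxiliary coupling function, each to $\mathfrak{o}(L^{2})$. This is the quantitative strengthening, at the level of the quadratic form rather than merely of the infimum, of the $\mathcal{O}(L^{3/2})$ energy bound in Theorem \ref{thm:Energy Asymptotics}.
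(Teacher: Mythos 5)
There is a genuine gap at the center of your argument: the claimed quadratic-form lower bound
\[
\left\langle \Psi^{(B)},\mathbb{H}(B)\Psi^{(B)}\right\rangle \geq B+\left(\ln B\right)^{2}\,\mathcal{E}_{_{0}}\bigl(\phi^{(B)}\bigr)+\mathcal{O}\bigl(\left(\ln B\right)^{3/2}\bigr),
\]
with the one-dimensional functional evaluated at the marginal density of the given state, is not a ``quantitative strengthening'' of the lower bound in Theorem \ref{thm:Energy Asymptotics} -- it is a structurally different and unproven statement. The Lieb--Thomas coherent-state argument that underlies the lower bound proceeds by representing the block phonon operators by coherent-state labels $z$, obtaining for each fixed $z$ a linear electron Hamiltonian, and then minimizing over \emph{both} $z$ and the electron wave function; the output is the number $I=\inf_{\|\phi\|_{2}=1}[\cdots]$, i.e.\ a uniform operator lower bound by a constant, with all information about the particular state $\Psi^{(B)}$ discarded. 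One cannot recover $\mathcal{E}_{_{0}}(\phi^{(B)})$ from this by convexity either: the kinetic term is convex in the density and the delta term is linear, but the self-interaction $-\tfrac{\alpha}{2}\int\rho^{2}$ is \emph{concave} in $\rho$, so Jensen's inequality runs the wrong way. (This is exactly the opposite of the Thomas--Fermi situation, where the repulsive, positive-definite Coulomb energy does permit a density-functional lower bound on general states.) A near-ground state could a priori be an entangled superposition of well-separated localized pieces whose averaged density has much larger Pekar energy than the average of the pieces' energies; ruling this out is essentially the content of the theorem, so your step presupposes what is to be proved.

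The paper circumvents precisely this obstruction with a Feynman--Hellmann-type device: perturb the Hamiltonian by $-\epsilon(\ln B)^{2}W((\ln B)x_{3})$, which is \emph{linear} in the density, so that the variational principle applied to $\Psi^{(B)}$ sandwiches the rescaled density integral between difference quotients $(\mathfrak{e}_{_{0}}-\mathfrak{e}_{_{\epsilon}})/\epsilon$; the two-term asymptotics of the perturbed energy $E_{\epsilon}(B)$ (Corollaries \ref{cor:Upper} and \ref{cor:Lower Bound}) again only require infima, and the conclusion follows from differentiability of $\epsilon\mapsto\mathfrak{e}_{_{\epsilon}}$ at $\epsilon=0$ (Theorem \ref{thm:Differentiation}), which is where the uniqueness of $\phi_{0}$ and the compactness of minimizing sequences actually enter. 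Your treatment of uniqueness of $\phi_{0}$ and the concentration-compactness dichotomy exclusion for the one-dimensional problem is essentially the paper's Lemma 3.1 and Appendix A and is fine; but those tools are applied in the paper to minimizing sequences of the one-dimensional functional produced by the perturbation argument, not to $\phi^{(B)}$ directly, because no mechanism is available to certify that $\phi^{(B)}$ is a minimizing sequence.
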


\noindent By choosing $W$ as a delta-function potential pointwise
convergence is obtained. When $\alpha=0$ the limiting density in
(\ref{eq:limiting density}) is $\sqrt{\left(\beta/2\right)}\exp(-\beta|x_{3}|/2)$
cf. \cite{Froese Waxler},\cite{Rau Spruch}. The idea of the proof
is to add to the Hamiltonian $\epsilon$ times the one-dimensional
potential $W$ scaled appropriately in the magnetic-field direction
cf. \cite{Baumgartner},\cite{Frank-Merz-Siedentop-Simon},\cite{Iatchenko-Lieb-Siedentop},\cite{Lieb-Simon},\cite{Lieb-Solovej-Yngvason},\cite{Lieb-Yau}.
For 
\begin{equation}
\mathbb{H}_{\epsilon}(B):=\mathbb{H}(B)-\epsilon\left(\ln B\right)^{2}W\left(\left(\ln B\right)x_{3}\right)\label{eq:pert}
\end{equation}
and $E_{\epsilon}(B)$ the corresponding ground-state energy it is
argued vis-à-vis Theorem \ref{thm:Energy Asymptotics} that 
\begin{equation}
E_{\epsilon}(B)=B+\mathfrak{e}_{_{\epsilon}}\left(\ln B\right)^{2}+\mathcal{O}\left(\left(\ln B\right)^{3/2}\right)\ \text{as}\ B\rightarrow\infty\label{eq:1-3}
\end{equation}
with 
\begin{equation}
\mathfrak{e}_{_{\epsilon}}:=\inf_{\left\Vert \varphi\right\Vert _{2}=1}\left\{ \int_{\mathbb{R}}\left|\varphi'\right|^{2}dx-\frac{\alpha}{2}\int_{\mathbb{R}}\left|\varphi\right|^{4}dx-\beta\left|\varphi(0)\right|^{2}-\epsilon\int_{\mathbb{R}}W(x)\left|\varphi\right|^{2}dx\right\} .\label{eq:pert2}
\end{equation}
By the variational principle $E_{\epsilon}\leq\left\langle \Psi^{(B)},\mathbb{H_{\epsilon}}(B)\Psi^{(B)}\right\rangle $,
and the expectation value on the right-hand side evaluates to
\[
E_{0}(B)+\mathfrak{o}(\left(\ln B\right)^{2})-\epsilon\left(\ln B\right)\int_{\mathbb{R}}W\left(x_{3}\right)\left(\int_{\mathbb{R}^{2}}\left\Vert \Psi^{(B)}\right\Vert _{\mathcal{F}}^{2}\left(x_{\perp},\frac{x_{3}}{\ln B}\right)dx_{\perp}\right)dx_{3}.
\]
Then for $\epsilon>0$
\[
\frac{E_{0}(B)-E_{\epsilon}(B)}{\epsilon\left(\ln B\right)^{2}}\geq\frac{1}{\left(\ln B\right)}\int_{\mathbb{R}}W\left(x_{3}\right)\left(\int_{\mathbb{R}^{2}}\left\Vert \Psi^{(B)}\right\Vert _{\mathcal{F}}^{2}\left(x_{\perp},\frac{x_{3}}{\ln B}\right)dx_{\perp}\right)dx_{3}+\mathfrak{o}\left(1\right),
\]
and taking the limit $B\rightarrow\infty$ by Theorem \ref{thm:Energy Asymptotics}
and (\ref{eq:1-3}) 
\[
\frac{\mathfrak{e}_{_{0}}-\mathfrak{e}_{_{\epsilon}}}{\epsilon}\geq\limsup_{B\rightarrow\infty}\frac{1}{\left(\ln B\right)}\int_{\mathbb{R}}W\left(x_{3}\right)\left(\int_{\mathbb{R}^{2}}\left\Vert \Psi^{(B)}\right\Vert _{\mathcal{F}}^{2}\left(x_{\perp},\frac{x_{3}}{\ln B}\right)dx_{\perp}\right)dx_{3}.
\]

\noindent For $\epsilon<0$ the above inequality is reversed with
``lim sup'' replaced by ``lim inf.'' Hence the theorem follows
if the quotient on the left-hand side has a limit as $\epsilon\rightarrow0$.
The map $\epsilon\mapsto\mathfrak{e}_{_{\epsilon}}$ is differentiable
at $\epsilon=0$ for all $W$ if and only if the one-dimensional problem
for the energy $\mathfrak{e}_{_{0}}$ in (\ref{eq: p}) admits up
to complex phase a unique minimizer: Uniqueness is established by
explicitly solving the corresponding Euler-Lagrange equation cf. \cite{Lieb-Solovej-Yngvason},
and by the variational principle and a compactness argument 
\begin{equation}
\lim_{\epsilon\rightarrow0}\frac{\mathfrak{e}_{_{0}}-\mathfrak{e}_{_{\epsilon}}}{\epsilon}=\int_{\mathbb{R}}W(x_{3})\,\phi_{0}(x_{3})^{2}\,dx_{3}.\label{eq:diffurentiation}
\end{equation}

In Section 3 the differentiation of the one-dimensional energy (\ref{eq:diffurentiation})
is proved as Theorem \ref{thm:Differentiation}. In Section 4 an upper
bound to the ground-state energy is proved as Theorem \ref{Upper Bound}.
In Section 5 a lower bound to the ground-state energy is proved as
Theorem \ref{thm:Lower Bound}, and Theorem \ref{thm:Energy Asymptotics}
follows from Theorem \ref{Upper Bound} and Theorem \ref{thm:Lower Bound}.
In Section \ref{sec:Proof-of-Theorem 2.2} the main result Theorem
\ref{thm:Wavefunction} is proved.

\section{Differentiating the One-Dimensional Energy}

\noindent The one-dimensional problem for the energy $\mathfrak{e}_{_{0}}$
in (\ref{eq: p}) shall be denoted 
\[
\mathfrak{e}_{_{0}}:=\inf_{\left\Vert \varphi\right\Vert _{2}=1}\mathcal{E}_{_{0}}\left(\varphi\right)
\]
with 
\begin{equation}
\mathcal{E}_{_{0}}\left(\varphi\right):=\int_{\mathbb{R}}\left|\varphi'\right|^{2}dx-\frac{\alpha}{2}\int_{\mathbb{R}}\left|\varphi\right|^{4}dx-\beta\left|\varphi(0)\right|^{2}.\label{eq: Pekar Functional}
\end{equation}

\begin{lem}
The minimization problem in (\ref{eq: p}) for the energy $\mathfrak{e}_{_{0}}$
admits up to complex phase a unique minimizer 
\begin{equation}
\phi_{_{0}}\left(x\right)=\frac{\alpha+2\beta}{\sqrt{8\alpha}\cosh\left(\left(\frac{\alpha+2\beta}{4}\right)\left|x\right|+\tanh^{-1}\left(\frac{2\beta}{\alpha+2\beta}\right)\right)}\label{eq:min-min}
\end{equation}
and 
\[
\mathfrak{e}_{_{0}}=-\frac{1}{48}\left(\alpha^{2}+6\alpha\beta+12\beta^{2}\right).
\]
\end{lem}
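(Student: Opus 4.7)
The plan is to split the argument into three stages: existence of a minimizer; reduction to an even, nonnegative $\mathrm{sech}$-type ansatz; and explicit identification of the three parameters.

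\emph{Existence and symmetries.} On the constraint $\|\varphi\|_2 = 1$ the Sobolev embedding $H^{1}(\mathbb{R}) \hookrightarrow L^{\infty}(\mathbb{R})$ and the one-dimensional Gagliardo--Nirenberg inequality give $|\varphi(0)|^2 \leq 2\|\varphi'\|_2$ and $\|\varphi\|_4^4 \leq 2\|\varphi'\|_2$, so $\mathcal{E}_0$ is bounded below and a minimizing sequence is bounded in $H^{1}(\mathbb{R})$. I would extract a weak limit and upgrade convergence via Lions' concentration--compactness lemma: vanishing is excluded by $\mathfrak{e}_{0} < 0$ (seen on an explicit trial function $\lambda^{1/2} e^{-\lambda |x|}$ at suitable $\lambda$), and dichotomy by strict subadditivity $\mathfrak{e}_{0}(1) < \mathfrak{e}_{0}(m) + \mathfrak{e}_{0}(1-m)$, which follows from the attractive delta together with the focusing cubic. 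Replacing a complex minimizer $\varphi$ by $|\varphi|$ weakly decreases $\|\varphi'\|_2$ by Kato's (diamagnetic) inequality, with equality forcing constant phase---this already gives the ``up to complex phase'' uniqueness. Symmetric decreasing rearrangement $\varphi^{*}$ then preserves the $L^2$ and $L^4$ norms, decreases $\|\varphi'\|_2$ by P\'olya--Szeg\H{o}, and raises $|\varphi(0)|^2$ to $\|\varphi\|_{\infty}^{2}$, so $\mathcal{E}_{0}(\varphi^{*}) \leq \mathcal{E}_{0}(\varphi)$; one may therefore take the minimizer to be real, nonnegative, even, and nonincreasing on $[0,\infty)$.

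\emph{ODE analysis.} The Euler--Lagrange equation reads
$$-\phi_{0}'' - \alpha \phi_{0}^{3} - \beta\delta(x)\phi_{0} = \mu\phi_{0},$$
and a first-integral argument combined with $\phi_{0} \in L^{2}$ rules out $\mu \geq 0$, so $\mu = -B^{2}$ with $B>0$. On $(0,\infty)$ this is the one-dimensional focusing cubic NLS, whose positive decaying solutions form the family $A\,\mathrm{sech}(B(x-x_{0}))$ with $2B^{2} = \alpha A^{2}$; evenness and the decrease on $[0,\infty)$ then force $\phi_{0}(x) = A\,\mathrm{sech}(B|x| + C)$ with $C \geq 0$. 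Integrating the EL equation across $0$ yields the jump condition $\phi_{0}'(0^{+}) - \phi_{0}'(0^{-}) = -\beta\phi_{0}(0)$, which by evenness becomes $2B\tanh C = \beta$. The normalization $\|\phi_{0}\|_{2}^{2} = 2A^{2}B^{-1}(1-\tanh C) = 1$, the NLS relation $2B^{2} = \alpha A^{2}$, and the jump relation constitute three equations in $(A,B,C)$ that solve uniquely to
$$B = \frac{\alpha+2\beta}{4}, \qquad A = \frac{\alpha+2\beta}{\sqrt{8\alpha}}, \qquad C = \tanh^{-1}\!\frac{2\beta}{\alpha+2\beta},$$
exactly matching the announced formula.

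\emph{The energy and the main obstacle.} Testing the EL equation against $\phi_{0}$ gives the Pohozaev-type identity $\mathcal{E}_{0}(\phi_{0}) = \mu + \tfrac{\alpha}{2}\|\phi_{0}\|_{4}^{4}$; using the antiderivative $\tanh u - \tfrac{1}{3}\tanh^{3} u$ of $\mathrm{sech}^{4}u$ to evaluate $\|\phi_{0}\|_{4}^{4} = (\alpha+3\beta)/12$ and substituting $\mu = -(\alpha+2\beta)^{2}/16$ collapses everything to $-\tfrac{1}{48}(\alpha^{2} + 6\alpha\beta + 12\beta^{2})$. The main obstacle is the concentration--compactness step in the first stage---specifically the strict subadditivity of $\mathfrak{e}_{0}(\cdot)$---because the delta breaks translation invariance, so the standard scaling comparison has to be supplemented by an argument that localizing mass at the origin is strictly more efficient than placing it far away. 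Everything past existence is a deterministic ODE computation.
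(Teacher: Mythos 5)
Your proposal is correct, and its core --- passing to a nonnegative minimizer, the Euler--Lagrange equation with the delta term, the $\mathrm{sech}$ family on each half-line, the jump condition at the origin, and the three-parameter solve --- is exactly the paper's argument; all your constants, the value $\|\phi_0\|_4^4=(\alpha+3\beta)/12$, and the final energy check out. The one place you genuinely deviate is the existence step, and the obstacle you single out (strict subadditivity when the delta destroys translation invariance) is precisely what the paper's Appendix A is built around: instead of Lions' trichotomy it uses a quadratic partition of unity $\chi_R^2+\tilde\chi_R^2\equiv 1$ and bounds the far piece by the translation-invariant functional $\mathcal{E}_T$, the key input being the binding inequality $\mathfrak{e}_0<\mathfrak{e}_T$, which comes for free since inserting the translation-invariant minimizer $\phi_T$ into $\mathcal{E}_0$ gains $-\beta\,\phi_T(0)^2<0$; together with $\mathcal{E}_T(\varphi)\ge\mathfrak{e}_T\|\varphi\|_2^2$ for $\|\varphi\|_2\le 1$ this shows mass living far from the origin is strictly wasteful, excluding vanishing and dichotomy in one stroke. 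One small logical point: symmetric decreasing rearrangement only produces \emph{some} even minimizer, so if you want uniqueness of \emph{all} minimizers you should not assume evenness but derive it from the ODE, as the paper implicitly does --- continuity of $\psi$ at $0$ forces the two half-line $\mathrm{sech}$ translates to be either equal or mirror images, and the equal (globally smooth) case is excluded by the jump condition since $\beta\,\psi(0)>0$.
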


\begin{proof}
The existence of a minimizer is shown in Appendix A. Any minimizer
up to multiplication by a complex phase is a nonnegative, $C^{2}\left(\mathbb{R}\backslash\left\{ 0\right\} \right)$
function in $H^{1}\left(\mathbb{R}\right)$ solving the Euler-Lagrange
equation $-\psi''-\alpha\psi^{3}-\beta\delta(x)\psi=-\lambda\psi$
with
\begin{equation}
-\lambda=\mathfrak{e}_{_{0}}-\frac{\alpha}{2}\int_{\mathbb{R}}\psi^{4}dx<0.\label{eq:lambda}
\end{equation}
Or equivalently it must solve 
\begin{equation}
-\psi''-\alpha\psi^{3}=-\lambda\psi\ \ \text{for}\ \left|x\right|>0\label{eq:u1}
\end{equation}
and satisfy the boundary condition 
\begin{equation}
\lim_{\epsilon\rightarrow0^{+}}\left[\psi'\left(-\epsilon\right)-\psi'(\epsilon)\right]=\beta\psi(0).\label{eq:u2}
\end{equation}
The first integral of (\ref{eq:u1}) is 
\begin{equation}
\psi'^{2}=-\frac{\alpha}{2}\psi^{4}+\lambda\psi^{2}\ \ \text{for}\ \left|x\right|>0.\label{eq:u3}
\end{equation}
Any nonnegative, $C^{2}\left(\mathbb{R}\backslash\{0\}\right)$ solution
of (\ref{eq:u3}) in $H^{1}(\mathbb{R})$ satisfying the boundary
condition in (\ref{eq:u2}) must be of the form 
\begin{equation}
\psi=\sqrt{\frac{2\lambda}{\alpha}}\frac{1}{\cosh\left(\sqrt{\lambda}\left(\left|x\right|-\tau\right)\right)}\label{form}
\end{equation}
for some $\tau$ cf. \cite{Frank Luminy}. The boundary condition
and that $\left\Vert \psi\right\Vert _{2}=1$ require respectively
\[
\tanh\left(\tau\sqrt{\lambda}\right)=-\frac{\beta}{2\sqrt{\lambda}}\ \ \ \text{and}\ \ \ \frac{\alpha}{4\sqrt{\lambda}}=1+\tanh\left(\tau\sqrt{\lambda}\right).
\]
Any minimizer up to complex phase must therefore be of the form in
(\ref{form}) with 
\[
\lambda=\left(\frac{\alpha+2\beta}{4}\right)^{2}\ \ \ \text{and}\ \ \ \tau=-\frac{4}{\alpha+2\beta}\tanh^{-1}\left(\frac{2\beta}{\alpha+2\beta}\right).
\]
The explicit calculation of $\mathfrak{e}_{_{0}}$ now follows from
(\ref{eq:lambda}). 
\end{proof}
\begin{lem}
\label{lem:Every-minimizing-sequence}If $\left\{ \phi_{_{n}}\right\} _{n=1}^{\infty}$
is a minimizing sequence for $\mathfrak{e}_{_{0}}$, then $\left|\phi_{_{n}}\right|$
converges in $H^{1}\left(\mathbb{R}\right)$ to the minimizer $\phi_{_{0}}$
given in (\ref{eq:min-min}). 
\end{lem}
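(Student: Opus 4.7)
The plan is to prove strong $H^{1}$ convergence via a concentration-compactness argument, exploiting the attractive delta interaction (the hypothesis $\beta>0$) to exclude loss of mass at infinity. I would first replace $\phi_{n}$ by $|\phi_{n}|$, still a minimizing sequence since $\bigl|\,|\varphi|'\,\bigr|\leq|\varphi'|$ a.e.\ and the remaining terms of $\mathcal{E}_{0}$ depend only on $|\varphi|$, so I may assume $\phi_{n}\geq 0$. For a uniform $H^{1}(\mathbb{R})$ bound I would combine the Gagliardo--Nirenberg inequality $\|\varphi\|_{4}^{4}\leq C\|\varphi'\|_{2}\|\varphi\|_{2}^{3}$ with the one-dimensional Sobolev trace $|\varphi(0)|^{2}\leq 2\|\varphi'\|_{2}\|\varphi\|_{2}$ and $\|\phi_{n}\|_{2}=1$ to absorb the two negative terms into a small multiple of $\|\phi_{n}'\|_{2}$. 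After passing to a subsequence $\phi_{n}\rightharpoonup\phi\geq 0$ weakly in $H^{1}$; by the compact embedding $H^{1}(\mathbb{R})\hookrightarrow C^{1/2}_{\mathrm{loc}}(\mathbb{R})$ and Arzel\`a--Ascoli, $\phi_{n}\to\phi$ locally uniformly, and in particular $\phi_{n}(0)\to\phi(0)$.

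The central step will be to show that $m:=\|\phi\|_{2}^{2}=1$, i.e.\ no mass escapes to infinity. Setting $u_{n}:=\phi_{n}-\phi$, the Brezis--Lieb lemma gives $\int|\phi_{n}|^{4}=\int|\phi|^{4}+\int|u_{n}|^{4}+o(1)$, the cross term in $\int|\phi_{n}'|^{2}$ vanishes by weak convergence, and local uniform convergence yields $|\phi_{n}(0)|^{2}\to|\phi(0)|^{2}$, so
\[
\mathfrak{e}_{_{0}}=\mathcal{E}_{0}(\phi)+\lim_{n\to\infty}\Big(\int_{\mathbb{R}}|u_{n}'|^{2}\,dx-\frac{\alpha}{2}\int_{\mathbb{R}}|u_{n}|^{4}\,dx\Big).
\]
The $L^{2}$-scaling $\varphi\mapsto\sqrt{m}\tilde{\varphi}$ together with the explicit formula of the preceding lemma gives $\mathcal{E}_{0}(\phi)\geq -(m^{3}\alpha^{2}+6m^{2}\alpha\beta+12m\beta^{2})/48$, and an analogous scaling of the delta-free problem at mass $1-m$ bounds the remainder by $-(1-m)^{3}\alpha^{2}/48$. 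Their sum exceeds $\mathfrak{e}_{_{0}}$ by
\[
\frac{1-m}{48}\bigl[3m\alpha^{2}+6\alpha\beta(1+m)+12\beta^{2}\bigr],
\]
which is strictly positive for $m<1$ precisely because $\beta>0$. This forces $m=1$.

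Once $\|\phi_{n}\|_{2}\to\|\phi\|_{2}=1$, strong $L^{2}$ convergence together with the uniform $L^{\infty}$ bound (from the $H^{1}$ bound in one dimension) upgrades to strong $L^{4}$ convergence by interpolation. Then $\mathcal{E}_{0}(\phi_{n})\to\mathfrak{e}_{_{0}}=\mathcal{E}_{0}(\phi)$ forces $\|\phi_{n}'\|_{2}\to\|\phi'\|_{2}$, which combined with weak convergence gives strong $H^{1}$ convergence to $\phi$; by the uniqueness statement of the preceding lemma $\phi=\phi_{_{0}}$, and a standard subsequence-of-subsequence argument upgrades the convergence to the full original sequence. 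I expect the main obstacle to be the strict subadditivity step above: the Brezis--Lieb splitting and the scaling identities are routine, but the sign of the gap--and hence the success of concentration-compactness--rests squarely on the hypothesis $\beta>0$ together with the explicit formula for $\mathfrak{e}_{_{0}}$ established in the preceding lemma.
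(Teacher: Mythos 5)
Your argument is correct, but it implements the decisive compactness step by a genuinely different mechanism than the paper. The paper's proof of this lemma is itself only a reduction: $\{|\phi_n|\}$ is again minimizing by Theorem 7.8 of Lieb--Loss, Appendix A shows every minimizing sequence has a subsequence converging in $H^1(\mathbb{R})$ to some minimizer, and uniqueness plus a subsequence-of-subsequences argument finishes. The substance is in Appendix A, where escape of mass to infinity is excluded by an IMS localization: a quadratic partition of unity $\chi_R^2+\tilde{\chi}_R^2\equiv 1$ splits $\mathcal{E}_0(\psi_{n_k})$ into a near piece bounded below by $\mathfrak{e}_0\|\chi_R\psi_{n_k}\|_2^2$ and a far, translation-invariant piece bounded below by $\mathfrak{e}_T\|\tilde{\chi}_R\psi_{n_k}\|_2^2$, with localization errors $O(R^{-2})$ and a cross term controlled by Rellich--Kondrashov; the soft binding inequality $\mathfrak{e}_0\le\mathfrak{e}_T-\beta\phi_T(0)^2<\mathfrak{e}_T$ then forces $\|\chi_R\psi_{n_k}\|_2^2>1-\delta$. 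You instead split off the weak limit, $\phi_n=\phi+u_n$, use Brezis--Lieb for the quartic term and locally uniform convergence for the point evaluation, and rule out $m=\|\phi\|_2^2<1$ by quantitative strict subadditivity computed from the scaling identities $e(m)=-(m^3\alpha^2+6m^2\alpha\beta+12m\beta^2)/48$ and $e_T(s)=-s^3\alpha^2/48$; I verified your gap $(1-m)\bigl[3m\alpha^2+6\alpha\beta(1+m)+12\beta^2\bigr]/48$, which is indeed strictly positive for $0\le m<1$ precisely because $\beta>0$, so vanishing ($m=0$) and dichotomy ($0<m<1$) are excluded in one stroke. The trade-off: the paper's route needs only the qualitative inequality $\mathfrak{e}_0<\mathfrak{e}_T$ together with the linear-in-mass bounds $\mathcal{E}_0(\varphi)\ge\mathfrak{e}_0\|\varphi\|_2^2$ and $\mathcal{E}_T(\varphi)\ge\mathfrak{e}_T\|\varphi\|_2^2$ for $\|\varphi\|_2\le1$, and so would survive without any explicit solution; yours leans on the exact value of $\mathfrak{e}_0$ from the preceding lemma, which is legitimate here since that lemma is already established, and in return delivers an explicit energy gap. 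Your remaining steps --- the uniform $H^1$ bound, identification of the weak limit as a minimizer, upgrading to strong $H^1$ convergence via norm convergence of $\|\phi_n'\|_2$, and passing from subsequences to the full sequence --- coincide with Steps 2 and 3 of Appendix A.
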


\begin{proof}
\noindent By Theorem 7.8 in \cite{Lieb-Loss} $\left\{ \left|\phi_{_{n}}\right|\right\} _{n=1}^{\infty}$
is also a minimizing sequence for $\mathfrak{e}_{_{0}}$. It is argued
in Appendix A that every minimizing sequence for $\mathfrak{e}_{_{0}}$
has a subsequence converging in $H^{1}(\mathbb{R})$ to some minimizer.
Since $\phi_{_{0}}$ is up to complex phase the unique minimizer,
every subsequence of $\left\{ \left|\phi_{_{n}}\right|\right\} _{n=1}^{\infty}$
must converge in $H^{1}\left(\mathbb{R}\right)$ to $\phi_{_{0}}$. 
\end{proof}
\begin{thm}
\label{thm:Differentiation}Let $W$ be a sum of a bounded Borel measure
on the real line and a $L^{\infty}\left(\mathbb{R}\right)$ function.
For $\epsilon$ a real parameter consider the one-dimensional energy
\begin{equation}
\mathfrak{e}_{_{\epsilon}}:=\inf_{\left\Vert \varphi\right\Vert _{2}=1}\mathcal{E}_{_{\epsilon}}\left(\varphi\right),\label{eq:perturbation}
\end{equation}
where 
\begin{equation}
\mathcal{E}_{_{\epsilon}}\left(\varphi\right):=\mathcal{E}_{_{0}}\left(\varphi\right)-\epsilon\int_{\mathbb{R}}W\left(x\right)\left|\varphi\left(x\right)\right|^{2}dx\label{eq:perturbed functional}
\end{equation}
with the functional $\mathcal{E}_{_{0}}$ as given in (\ref{eq: Pekar Functional}).
Then the map $\epsilon\mapsto\mathfrak{e}_{_{\epsilon}}$ is differentiable
at $\epsilon=0$ and 
\[
\left.\frac{d}{d\epsilon}\right|_{\epsilon=0}\mathfrak{e}_{_{\epsilon}}=-\int_{\mathbb{R}}W(x)\,\phi_{0}(x)^{2}\,dx
\]
with $\phi_{_{0}}$ the minimizer given in (\ref{eq:min-min}) for
the energy $\mathfrak{e}_{_{0}}$. 
\end{thm}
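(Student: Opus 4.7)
The plan is to bracket the difference quotient $(\mathfrak{e}_{_{\epsilon}}-\mathfrak{e}_{_{0}})/\epsilon$ between two quantities both of which converge to $-\int_{\mathbb{R}}W\phi_{_{0}}^{2}\,dx$ as $\epsilon\to0$. The upper side is immediate from the trial state $\phi_{_{0}}$: this gives $\mathfrak{e}_{_{\epsilon}}\leq\mathcal{E}_{_{\epsilon}}(\phi_{_{0}})=\mathfrak{e}_{_{0}}-\epsilon\int W\phi_{_{0}}^{2}\,dx$, hence $(\mathfrak{e}_{_{\epsilon}}-\mathfrak{e}_{_{0}})/\epsilon\leq-\int W\phi_{_{0}}^{2}\,dx$ for $\epsilon>0$ and the reverse for $\epsilon<0$.

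For the matching bound in the opposite direction, fix for each $\epsilon$ a normalized approximate minimizer $\phi_{_{\epsilon}}$ with $\mathcal{E}_{_{\epsilon}}(\phi_{_{\epsilon}})\leq\mathfrak{e}_{_{\epsilon}}+\epsilon^{2}$. The variational principle applied to $\mathcal{E}_{_{0}}$ yields
$$
\mathfrak{e}_{_{0}}\leq\mathcal{E}_{_{0}}(\phi_{_{\epsilon}})=\mathcal{E}_{_{\epsilon}}(\phi_{_{\epsilon}})+\epsilon\int W|\phi_{_{\epsilon}}|^{2}\,dx\leq\mathfrak{e}_{_{\epsilon}}+\epsilon^{2}+\epsilon\int W|\phi_{_{\epsilon}}|^{2}\,dx,
$$
so for $\epsilon>0$ one obtains $(\mathfrak{e}_{_{\epsilon}}-\mathfrak{e}_{_{0}})/\epsilon\geq-\int W|\phi_{_{\epsilon}}|^{2}\,dx-\epsilon$, and the reverse inequality for $\epsilon<0$. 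The theorem thus reduces to the continuity statement $\int W|\phi_{_{\epsilon}}|^{2}\,dx\to\int W\phi_{_{0}}^{2}\,dx$ as $\epsilon\to0$.

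To run this argument I would first establish an $\epsilon$-uniform bound $\sup_{|\epsilon|\leq1}\|\phi_{_{\epsilon}}\|_{H^{1}(\mathbb{R})}\leq C$. Writing $W=\mu+V$ with $\mu$ a bounded Borel measure and $V\in L^{\infty}(\mathbb{R})$, and combining the trial-function upper bound $\mathfrak{e}_{_{\epsilon}}\leq\mathfrak{e}_{_{0}}+O(|\epsilon|)$ with the one-dimensional Sobolev estimate $\|\phi\|_{\infty}^{2}\leq2\|\phi\|_{2}\|\phi'\|_{2}$ and the Gagliardo--Nirenberg bound $\|\phi\|_{4}^{4}\leq C\|\phi\|_{2}^{3}\|\phi'\|_{2}$, each of the lower-order terms $\beta|\phi_{_{\epsilon}}(0)|^{2}$, $(\alpha/2)\|\phi_{_{\epsilon}}\|_{4}^{4}$ and $|\epsilon|\int|W||\phi_{_{\epsilon}}|^{2}\,dx$ is controlled by a constant plus a multiple of $\|\phi'_{_{\epsilon}}\|_{2}$; the resulting quadratic-in-$\|\phi'_{_{\epsilon}}\|_{2}$ inequality gives the desired uniform bound. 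This $H^{1}$-bound forces $\epsilon\int W|\phi_{_{\epsilon}}|^{2}\,dx=O(|\epsilon|)$, so $\mathcal{E}_{_{0}}(\phi_{_{\epsilon}})\to\mathfrak{e}_{_{0}}$; that is, $\{|\phi_{_{\epsilon}}|\}$ is a minimizing sequence for $\mathfrak{e}_{_{0}}$, and Lemma~\ref{lem:Every-minimizing-sequence} then supplies strong $H^{1}(\mathbb{R})$-convergence $|\phi_{_{\epsilon}}|\to\phi_{_{0}}$.

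The main obstacle, beyond the routine quadratic inequality, is selecting a topology strong enough to pass $\int W|\phi_{_{\epsilon}}|^{2}\,dx\to\int W\phi_{_{0}}^{2}\,dx$ when $W$ contains a singular measure piece: weak $H^{1}$-convergence alone would not suffice, since $\varphi\mapsto\int|\varphi|^{2}\,d\mu$ is not weakly continuous on $H^{1}(\mathbb{R})$ in general. This is precisely why I rely on the strong $H^{1}$-convergence from Lemma~\ref{lem:Every-minimizing-sequence}, which upgrades to uniform convergence via $H^{1}(\mathbb{R})\hookrightarrow C_{0}(\mathbb{R})$; then $|\int(|\phi_{_{\epsilon}}|^{2}-\phi_{_{0}}^{2})\,d\mu|\leq\||\phi_{_{\epsilon}}|^{2}-\phi_{_{0}}^{2}\|_{\infty}\|\mu\|\to0$, while the $L^{\infty}$ part is controlled by factoring $|\phi_{_{\epsilon}}|^{2}-\phi_{_{0}}^{2}=(|\phi_{_{\epsilon}}|-\phi_{_{0}})(|\phi_{_{\epsilon}}|+\phi_{_{0}})$ and applying the Cauchy--Schwarz inequality in $L^{2}(\mathbb{R})$.
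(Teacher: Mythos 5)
Your proposal is correct and follows essentially the same route as the paper: the same two-sided bracketing of the difference quotient via the trial state $\phi_{_{0}}$ and near-minimizers $\phi_{_{\epsilon}}$ with $\mathcal{E}_{_{\epsilon}}(\phi_{_{\epsilon}})\leq\mathfrak{e}_{_{\epsilon}}+\epsilon^{2}$, the same reduction to $\int W|\phi_{_{\epsilon}}|^{2}\to\int W\phi_{_{0}}^{2}$, and the same appeal to Lemma \ref{lem:Every-minimizing-sequence} for strong $H^{1}$-convergence, upgraded to uniform convergence to handle the measure part of $W$.
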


\begin{proof}
Writing $W=\mu+\omega$ where $\mu$ is a signed, bounded measure
on the real line and $\omega\in L^{\infty}\left(\mathbb{R}\right)$,
it follows from Hölder's inequality, the Sobolev inequality and completion
of the square that for $\varphi\in H^{1}\left(\mathbb{R}\right)$
\begin{align}
\mathcal{E}_{_{\epsilon}}\left(\varphi\right) & \geq\|\varphi^{\prime}\|_{2}^{2}-\|\varphi\|_{\infty}^{2}\left(\frac{\alpha}{2}\|\varphi\|_{2}^{2}+\left|\epsilon\right|\left|\mu\right|\left(\mathbb{R}\right)+\beta\right)-\left|\epsilon\right|\|\omega\|_{\infty}\|\varphi\|_{2}^{2}\nonumber \\
 & \geq\frac{3}{4}\|\varphi^{\prime}\|_{2}^{2}-\|\varphi\|_{2}^{2}\left(\frac{\alpha}{2}\|\varphi\|_{2}^{2}+\left|\epsilon\right|\left|\mu\right|\left(\mathbb{R}\right)+\beta\right)^{2}-\left|\epsilon\right|\|\omega\|_{\infty}\|\varphi\|_{2}^{2}.\label{eq:p1}
\end{align}
Hence $\mathfrak{e}_{_{\epsilon}}>-\infty$, and for each $\epsilon$
there is a $\phi_{_{\epsilon}}\in H^{1}\left(\mathbb{R}\right)$ satisfying
\begin{equation}
\mathcal{E}_{_{\epsilon}}\left(\phi_{_{\epsilon}}\right)\leq\mathfrak{e}_{_{\epsilon}}+\epsilon^{2}\ \text{and}\ \left\Vert \phi_{_{\epsilon}}\right\Vert _{_{2}}=1.\label{eq:d5}
\end{equation}
By the variational principle 
\[
\mathfrak{e}_{_{0}}\leq\mathcal{E}_{_{0}}\left(\phi_{_{\epsilon}}\right)=\mathcal{E}_{_{\epsilon}}\left(\phi_{_{\epsilon}}\right)+\epsilon\int_{\mathbb{R}}W\left(x\right)\left|\phi_{_{\epsilon}}\left(x\right)\right|^{2}dx\leq\mathfrak{e}_{_{\epsilon}}+\epsilon^{2}+\epsilon\int_{\mathbb{R}}W\left(x\right)\left|\phi_{_{\epsilon}}\left(x\right)\right|^{2}dx,
\]
and 
\[
\mathfrak{e}_{_{\epsilon}}\leq\mathcal{E}_{_{\epsilon}}\left(\phi_{_{0}}\right)=\mathfrak{e}_{_{0}}-\epsilon\int_{\mathbb{R}}W(x)\,\phi_{0}(x)^{2}\,dx.
\]
Then for $\epsilon>0$ 
\begin{equation}
-\int_{\mathbb{R}}W\left(x\right)\left|\phi_{_{\epsilon}}\left(x\right)\right|^{2}dx-\epsilon\leq\frac{\mathfrak{e}_{_{\epsilon}}-\mathfrak{e}_{_{0}}}{\epsilon}\leq-\int_{\mathbb{R}}W(x)\,\phi_{0}(x)^{2}\,dx.\label{eq:d8}
\end{equation}
For $\epsilon<0$ the inequalities in (\ref{eq:d8}) are reversed.
It suffices therefore to show for any sequence $\left\{ \epsilon_{n}\right\} _{n=1}^{\infty}$,
$\left|\epsilon_{n}\right|>0$ and $\epsilon_{n}\rightarrow0$ that
\begin{equation}
\lim_{n\rightarrow\infty}\int_{\mathbb{R}}W\left(x\right)\left|\phi_{_{\epsilon_{n}}}\left(x\right)\right|^{2}dx=\int_{\mathbb{R}}W(x)\,\phi_{0}(x)^{2}\,dx.\label{eq:d9}
\end{equation}

Since $\mathfrak{e}_{_{\epsilon}}>-\infty$, the concave map $\epsilon\mapsto\mathfrak{e}_{_{\epsilon}}$
is continuous. It follows from (\ref{eq:p1}), the Sobolev inequality
and (\ref{eq:d5}) that $\|\phi_{_{\epsilon_{n}}}\|_{_{\infty}}<C$
. Thus
\[
\lim_{n\rightarrow\infty}\epsilon_{n}\int_{\mathbb{R}}W\left(x\right)\left|\phi_{_{\epsilon_{n}}}\left(x\right)\right|^{2}dx=0,
\]
\begin{align*}
\mathfrak{e}_{_{0}}=\lim_{n\rightarrow\infty}\mathfrak{e}_{_{\epsilon_{n}}} & \geq\limsup_{n\rightarrow\infty}\left(\mathcal{E}_{_{0}}\left(\phi_{_{\epsilon_{n}}}\right)-\epsilon_{n}\int_{\mathbb{R}}W\left(x\right)\left|\phi_{_{\epsilon_{n}}}\left(x\right)\right|^{2}dx-\epsilon_{n}^{2}\right)\\
 & =\limsup_{n\rightarrow\infty}\,\mathcal{E}_{_{0}}\left(\phi_{_{\epsilon_{n}}}\right)
\end{align*}
and $\left\{ \phi_{_{\epsilon_{n}}}\right\} _{n=1}^{\infty}$ is a
minimizing sequence for $\mathfrak{e}_{_{0}}$. By Lemma \ref{lem:Every-minimizing-sequence}
$\left|\phi_{_{\epsilon_{n}}}\right|$ converges in $H^{1}\left(\mathbb{R}\right)$
to $\phi_{_{0}}$, and by Theorem 8.6 in \cite{Lieb-Loss} $\left|\phi_{_{\epsilon_{n}}}\right|$
converges also pointwise uniformly to $\phi_{_{0}}$ on bounded sets.
The convergence in (\ref{eq:d9}) now follows. 
\end{proof}

\section{Upper Bound to the Ground-State Energy}
\begin{thm}
\label{Upper Bound}There is a constant $C>0$ such that for $B>1$

\noindent 
\[
E_{0}(B)\leq B+\mathfrak{e}_{_{0}}\left(\ln B\right)^{2}-4\mathfrak{e}_{_{0}}\ln B\ln\ln B+C\ln B.
\]
\end{thm}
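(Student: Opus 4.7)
The strategy is to use the variational principle with a Pekar-type trial wave function of product form: take $\Psi = \Phi_B(x) \otimes |f_B\rangle$, where $|f_B\rangle$ is the normalized coherent state in $\mathcal{F}$ characterized by $a_k|f_B\rangle = f_B(k)|f_B\rangle$. The expectation of $\mathbb{H}(B)$ is then a classical functional that is quadratic in $f_B$; completing the square in $f_B$ (equivalent to Fourier inversion of $|x|^{-1}$) yields the Pekar-type bound
\[
\langle\Psi,\mathbb{H}(B)\Psi\rangle \leq \langle\Phi_B,(H_B-\partial_3^2-\beta|x|^{-1})\Phi_B\rangle - \frac{\alpha}{2}\,D(|\Phi_B|^2,|\Phi_B|^2),
\]
where $D(\rho,\rho) := \iint \rho(x)\rho(y)|x-y|^{-1}\,dx\,dy$ is the Coulomb self-energy.

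For the electronic ansatz I would take a product of a lowest-Landau-level Gaussian and a rescaling of the explicit $1$D minimizer $\phi_0$,
\[
\Phi_B(x_\perp,x_3) := \sqrt{\tfrac{B}{2\pi}}\,e^{-B|x_\perp|^2/4}\,\sqrt{L}\,\phi_0(Lx_3),\qquad L := \ln\!\bigl(B/(\ln B)^2\bigr).
\]
The transverse factor is an eigenstate of $H_B$ at energy $B$, so $\langle\Phi_B,H_B\Phi_B\rangle = B$; scaling gives $\langle\Phi_B,-\partial_3^2\Phi_B\rangle = L^2\int|\phi_0'|^2$. The Coulomb-attraction and Pekar-self-interaction terms reduce to one-dimensional integrals against effective kernels obtained by integrating out the transverse Gaussian: an effective $1$D Coulomb potential $V_B(x_3)$ for the former, and an analogous two-body convolution kernel for the latter. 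The key input is the asymptotic cited in (\ref{eq:uppercut}) and in \cite{Brummelhuis-Ruskai,Schiff Snyder}: tested against longitudinal profiles of scale $1/L$, both effective kernels act as delta interactions of strength $L = \ln(B/(\ln B)^2)$ with an error of $O(1)$. Applied to $|\phi_B|^2 = L|\phi_0(L\cdot)|^2$ this gives
\[
-\beta\langle\Phi_B,|x|^{-1}\Phi_B\rangle = -\beta L^2|\phi_0(0)|^2 + O(L),\quad \tfrac{\alpha}{2}D(|\Phi_B|^2,|\Phi_B|^2) = \tfrac{\alpha}{2}L^2\int|\phi_0|^4 + O(L).
\]

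Summing the four contributions and using the variational principle,
\[
E_0(B) \leq \langle\Psi,\mathbb{H}(B)\Psi\rangle \leq B + L^2\Bigl(\int|\phi_0'|^2 - \tfrac{\alpha}{2}\int|\phi_0|^4 - \beta|\phi_0(0)|^2\Bigr) + O(L) = B + L^2\mathfrak{e}_0 + O(L),
\]
and expanding $L^2 = (\ln B)^2 - 4\ln B\ln\ln B + 4(\ln\ln B)^2$ with the $(\ln\ln B)^2$ term absorbed into $O(\ln B)$ delivers the claimed bound. The main technical obstacle is controlling the $O(L)$ remainder in the two potential terms. The choice of scale $L = \ln(B/(\ln B)^2)$ rather than the naive $\ln B$ is essential here: it absorbs the $\ln\ln B$ subleading correction in the Schiff--Snyder asymptotics of $V_B$ into the delta-strength prefactor, so that the residual error in the potential terms stays at order $\ln B$ and the $-4\mathfrak{e}_0\ln B\ln\ln B$ term is produced automatically by the expansion of $L^2$.
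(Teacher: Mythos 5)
Your proposal is correct and follows essentially the same route as the paper: the Pekar product ansatz with completion of the square in the field variables (the paper's Lemma \ref{lem:Classical Pekar}), the trial function $\gamma_B(x_\perp)\sqrt{\mu(B)}\,\phi_0(\mu(B)x_3)$ with the same scale $\mu(B)=\ln(B/(\ln B)^2)$, and the reduction of both potential terms to the effective one-dimensional Coulomb kernel of (\ref{eq:uppercut}) approximated by a delta well of strength $\mu(B)$ with $\mathcal{O}(\ln B)$ error (the paper's Lemma \ref{lem:Last Straw} together with Appendix B). The only step you leave unproved --- the quantitative delta-approximation of the effective kernels --- is exactly the content of Lemma \ref{lem:B-label1} and Corollary \ref{cor:B-2}, and your identification of it as the main technical point, as well as your reason for choosing the scale $\ln(B/(\ln B)^2)$ rather than $\ln B$, matches the paper.
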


\noindent Theorem \ref{Upper Bound} will follow from Lemma \ref{lem:Classical Pekar}
and Lemma \ref{lem:Last Straw}. 
\begin{lem}
\label{lem:Classical Pekar}$E_{0}(B)\leq E_{0}^{c}(B)$, where $E_{0}^{c}(B):=\inf\left\{ \mathcal{P}\left(\psi\right)\colon\|\psi\|_{2}=1\right\} $
is the classical Pekar energy with $\mathcal{P}$ denoting the three-dimensional
magnetic Pekar functional
\[
\left\langle \psi,\left(H_{B}-\partial_{3}^{2}\right)\psi\right\rangle _{L^{^{2}}}-\frac{\alpha}{2}\int_{\mathbb{R}^{3}}\int_{\mathbb{R}^{3}}\frac{\left|\psi\left(x\right)\right|^{2}\left|\psi\left(y\right)\right|^{2}}{\left|x-y\right|}dxdy-\beta\int_{\mathbb{R}^{3}}\frac{\left|\psi\left(x\right)\right|^{2}}{\left|x\right|}dx.
\]
\end{lem}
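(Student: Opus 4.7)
The plan is to apply the variational principle (\ref{eq:2-1}) with a product trial vector whose phonon factor is a coherent state. For each $f\in L^{2}(\mathbb{R}^{3})$ let $\Omega(f):=e^{a^{\dagger}(f)-a(f)}\Omega_{0}\in\mathcal{F}$ denote the normalized coherent state built from the Fock vacuum $\Omega_{0}$; it belongs to $\mathrm{dom}(\sqrt{\mathcal{N}})$, obeys the eigenvalue identity $a_{k}\Omega(f)=f(k)\Omega(f)$, and satisfies $\langle\Omega(f),\mathcal{N}\Omega(f)\rangle=\|f\|_{2}^{2}$. For any normalized $\psi\in H_{A}^{1}(\mathbb{R}^{3})$ the product state $\Psi_{\psi,f}:=\psi\otimes\Omega(f)$ is then normalized and admissible in (\ref{eq:2-1}).

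First I would compute $\langle\Psi_{\psi,f},\mathbb{H}(B)\Psi_{\psi,f}\rangle$ explicitly. The electronic operators pass through the Fock factor unchanged, the number operator contributes $\|f\|_{2}^{2}$, and the linear coupling, via the eigenvalue identity above and its adjoint, reduces to the expectation of multiplication by the real-valued $c$-number potential
\[
V_{f}(x):=\frac{\sqrt{\alpha}}{2\pi}\int_{\mathbb{R}^{3}}\frac{f(k)\,e^{ik\cdot x}+\overline{f(k)}\,e^{-ik\cdot x}}{|k|}\,dk.
\]
This gives
\[
\langle\Psi_{\psi,f},\mathbb{H}(B)\Psi_{\psi,f}\rangle=\langle\psi,(H_{B}-\partial_{3}^{2}-\beta|x|^{-1})\psi\rangle+\|f\|_{2}^{2}+\int_{\mathbb{R}^{3}}V_{f}(x)\,|\psi(x)|^{2}\,dx.
\]

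Second, for fixed $\psi$ I would optimize the right-hand side in $f$. Setting $g(k):=\tfrac{\sqrt{\alpha}}{2\pi|k|}\,\widehat{|\psi|^{2}}(k)$, Parseval's formula identifies the $f$-dependent piece with $\|f+g\|_{2}^{2}-\|g\|_{2}^{2}$, minimized at $f=-g$ with minimum value $-\|g\|_{2}^{2}$. The three-dimensional Coulomb Fourier identity $\int_{\mathbb{R}^{3}}|k|^{-2}e^{ik\cdot z}\,dk=2\pi^{2}|z|^{-1}$ converts
\[
-\|g\|_{2}^{2}=-\frac{\alpha}{4\pi^{2}}\int_{\mathbb{R}^{3}}\frac{|\widehat{|\psi|^{2}}(k)|^{2}}{|k|^{2}}\,dk=-\frac{\alpha}{2}\int_{\mathbb{R}^{3}}\int_{\mathbb{R}^{3}}\frac{|\psi(x)|^{2}\,|\psi(y)|^{2}}{|x-y|}\,dx\,dy.
\]
Consequently $\inf_{f}\langle\Psi_{\psi,f},\mathbb{H}(B)\Psi_{\psi,f}\rangle=\mathcal{P}(\psi)$, and taking the infimum over normalized $\psi\in H_{A}^{1}(\mathbb{R}^{3})$ together with (\ref{eq:2-1}) yields $E_{0}(B)\leq E_{0}^{c}(B)$.

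No serious obstacle is expected; this is the textbook Pekar coherent-state ansatz. The only technical points to verify are that $\Omega(f)\in\mathrm{dom}(\sqrt{\mathcal{N}})$ and that the unbounded linear coupling is well-defined on $\Psi_{\psi,f}$, both of which follow from $f\in L^{2}(\mathbb{R}^{3})$ and the standard bound $\|a(h)\Phi\|\leq\|h\|_{2}\,\|\sqrt{\mathcal{N}}\,\Phi\|$.
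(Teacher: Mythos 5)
Your proposal is correct and is essentially the paper's own argument: the paper restricts to product states $\varphi(x)\Phi$ and invokes ``completion of the square'' over the Fock factor, which is exactly the coherent-state computation you carry out explicitly (your optimal $f$ is the completed square, up to a harmless complex conjugation depending on the Fourier convention). The only difference is that you spell out the details the paper leaves to the reader.
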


\begin{proof}
By the variational principle 
\[
E_{0}(B)\leq\inf\left\{ \left\langle \Psi,\,\mathbb{H}(B)\Psi\right\rangle :\ \|\Psi\|=1,\,\Psi=\varphi(x)\Phi\ \text{and}\ \Phi\in\mathcal{F}\right\} =E_{0}^{c}(B),
\]
the equality following from completion of the square. 
\end{proof}
\begin{lem}
\label{lem:Last Straw}There is a constant $C>0$ such that for $B>1$
\[
E_{0}^{c}\left(B\right)\leq B+\mathfrak{e}_{_{0}}\left(\ln B\right)^{2}-4\mathfrak{e}_{_{0}}\ln B\ln\ln B+C\ln B.
\]
\end{lem}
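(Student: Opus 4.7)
The goal is to produce a trial state achieving the three stated terms. The natural ansatz is the product state $\psi(x) = \chi_B(x_\perp)\varphi(x_3)$, where $\chi_B(x_\perp) = \sqrt{B/(2\pi)}\,e^{-B|x_\perp|^2/4}$ is the normalized lowest-Landau-level eigenfunction of $H_B$ and $\varphi\in H^1(\mathbb{R})$ is normalized. Since the ground-state Landau eigenvalue is $B$, the two kinetic terms in $\mathcal{P}(\psi)$ evaluate to $B+\int_{\mathbb{R}}|\varphi'|^2\,dx_3$. In the Coulomb attraction and Pekar self-energy terms the transverse integrations reduce to convolutions of $1/|x|$ against Gaussian densities: the one-particle kernel $V_B^{\mathrm{C}}(x_3) := \int_{\mathbb{R}^{2}}|\chi_B(x_\perp)|^{2}(|x_\perp|^{2} + x_{3}^{2})^{-1/2}\,dx_\perp$ and, using the reproducing property of Gaussians, a two-body kernel $V_B^{\mathrm{P}}(x_3 - y_3)$ of exactly the same form with $B$ replaced by $B/2$.

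The second step applies the effective Coulomb potential estimate (\ref{eq:uppercut}) --- the heuristic of \cite{Schiff Snyder} made rigorous in \cite{Avron-Herbst-Simon,Brummelhuis-Ruskai,Ruskai-Werner} --- which captures that, with $L := \ln(B/(\ln B)^{2})$, both $V_B^{\mathrm{C}}$ and $V_B^{\mathrm{P}}$ act on sufficiently concentrated functions as $L\,\delta_{0}$ plus a quantitatively controlled remainder. Concretely, one needs lower bounds of the form
\[
\int_{\mathbb{R}} V_B^{\mathrm{C}}(x_3)\,|\varphi(x_3)|^{2}\,dx_3 \ge L\,|\varphi(0)|^{2} - C\bigl(\|\varphi'\|_{2}\|\varphi\|_{2} + \|\varphi\|_{2}^{2}\bigr),
\]
and its two-body analogue in which $L\,|\varphi(0)|^{2}$ is replaced by $L\int_{\mathbb{R}}|\varphi|^{4}\,dx_3$. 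Substituted into $\mathcal{P}(\psi)$ these give
\[
\mathcal{P}(\psi) \le B + \int_{\mathbb{R}}|\varphi'|^{2}\,dx - \frac{\alpha L}{2}\int_{\mathbb{R}}|\varphi|^{4}\,dx - \beta L\,|\varphi(0)|^{2} + \mathcal{R}_B[\varphi].
\]

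The third step is the scaling choice $\varphi(x) := \sqrt{L}\,\phi_{_{0}}(Lx)$ with $\phi_{_{0}}$ the minimizer in (\ref{eq:min-min}). Then $\|\varphi\|_{2} = 1$, $\int|\varphi'|^{2} = L^{2}\int|\phi_{_{0}}'|^{2}$, $\int|\varphi|^{4} = L\int|\phi_{_{0}}|^{4}$, and $|\varphi(0)|^{2} = L\,|\phi_{_{0}}(0)|^{2}$, so the leading terms collapse exactly to $L^{2}\mathcal{E}_{_{0}}(\phi_{_{0}}) = \mathfrak{e}_{_{0}}L^{2}$. Expanding $L^{2} = (\ln B)^{2} - 4\ln B\ln\ln B + 4(\ln\ln B)^{2}$ with $4(\ln\ln B)^{2} = o(\ln B)$ produces the first two asserted terms, and the smoothness and exponential decay of $\phi_{_{0}}$ yield $\mathcal{R}_B[\varphi] = \mathcal{O}(\ln B)$.

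The main obstacle is the sharp quantitative version of the approximation $V_B^{\mathrm{C,P}} \approx L\,\delta_{0}$, especially for the nonlocal two-body Pekar kernel, where Sobolev-type inequalities must be applied on the scale $1/L$ uniformly in $B$; the one-body case follows from a direct computation of $V_B^{\mathrm{C}}$ via complementary error functions together with the bound of \cite{Avron-Herbst-Simon}, and the two-body case reduces to the one-body case after a further Gaussian reproduction applied to a new transverse density of half the inverse variance. Once these estimates are established in the quantitative form above, the three contributions $\mathfrak{e}_{_{0}}(\ln B)^{2}$, $-4\mathfrak{e}_{_{0}}\ln B\ln\ln B$, and $C\ln B$ follow directly from the rescaling identity.
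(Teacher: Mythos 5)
Your proposal is correct and follows essentially the same route as the paper: the product trial state $\gamma_{B}(x_{\perp})f_{B}(x_{3})$ with $f_{B}$ the minimizer $\phi_{0}$ rescaled by $\mu(B)=\ln(B/(\ln B)^{2})$, the reduction of both Coulomb terms to the effective potential $V_{\mathcal{U}}^{B}$ of (\ref{eq:uppercut}) (your observation that the two-body kernel is the one-body kernel with $B$ replaced by $B/2$ is exactly the identity (\ref{eq:effective self-interaction})), the delta-function approximation of Appendix B, and the expansion of $\mu(B)^{2}=(\ln B)^{2}-4\ln B\ln\ln B+o(\ln B)$. The one caveat is that your intermediate inequality with a $B$-independent constant multiplying $\|\varphi'\|_{2}\|\varphi\|_{2}+\|\varphi\|_{2}^{2}$ is false for general normalized $\varphi$ --- the true error carries a factor like $\mathcal{G}(B,L)$ in Lemma \ref{lem:B-label1}, which is $\mathcal{O}(1)$ only when the cutoff scale is matched to $1/\ln B$ --- but it does hold for your specific trial function concentrated on that scale, which is all the argument requires.
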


\begin{proof}
With the lowest Landau state in the zero angular momentum sector 
\[
\gamma_{B}\left(x_{\perp}\right):=\sqrt{\frac{B}{2\pi}}\exp\left(-\frac{B}{4}\left|x_{\perp}\right|^{2}\right)\ \text{i.e.}\ H_{B}\,\gamma_{B}=B\,\gamma_{B},
\]
by an elementary calculation \cite{Brummelhuis-Ruskai} 
\begin{equation}
V_{\mathcal{U}}^{B}(x_{3}):=\int_{\mathbb{R}^{2}}\frac{\left|\gamma_{B}\left(x_{\perp}\right)\right|^{2}}{\sqrt{\left|x_{\perp}\right|^{2}+x_{3}^{2}}}dx_{\perp}=\int_{0}^{\infty}\frac{e^{-u}}{\sqrt{x_{3}^{2}+\frac{2u}{B}}}\,du\label{eq:uppercut}
\end{equation}
and 
\begin{equation}
\int_{\mathbb{R}^{2}}\int_{\mathbb{R}^{2}}\frac{\left|\gamma_{B}\left(x_{\perp}\right)\right|^{2}\left|\gamma_{B}\left(y_{\perp}\right)\right|^{2}}{\sqrt{\left|x_{\perp}-y_{\perp}\right|^{2}+\left(x_{3}-y_{3}\right)^{2}}}dx_{\perp}dy_{\perp}=\frac{1}{\sqrt{2}}V_{\mathcal{U}}^{B}\left(\frac{x_{3}-y_{3}}{\sqrt{2}}\right).\label{eq:effective self-interaction}
\end{equation}
For $L>0$ and with $\phi_{0}$ the minimizer for the energy $\mathfrak{e}_{_{0}}$,
$\mu(B):=\ln B-2\ln\ln B$ and $f_{B}(x_{3}):=\sqrt{\mu(B)}\phi_{0}(\mu(B)x_{3})$
by the variational principle, Lemma \ref{lem:B-label1} and Corollary
\ref{cor:B-2} 
\begin{align}
E_{0}^{c}(B) & \leq\mathcal{P}\left(\gamma_{B}\left(x_{\perp}\right)f_{B}\left(x_{3}\right)\right)\nonumber \\
 & =B+\int_{\mathbb{R}}\left|f_{B}^{\prime}\right|^{2}dx_{3}-\frac{\alpha}{\sqrt{8}}\int\int_{\mathbb{R}\times\mathbb{R}}\left|f_{B}(x_{3})\right|^{2}V_{\mathcal{U}}^{B}\left(\frac{x_{3}-y_{3}}{\sqrt{2}}\right)\left|f_{B}(y_{3})\right|^{2}dx_{3}\,dy_{3}\nonumber \\
 & \,\ \ \ \ \ \ -\beta\int_{\mathbb{R}}V_{\mathcal{U}}^{B}\left(x_{3}\right)\left|f_{B}(x_{3})\right|^{2}dx_{3}\label{eq:classic}\\
 & \leq B+\int_{\mathbb{R}}\left|f_{B}^{\prime}\right|^{2}dx_{3}-\left(\alpha/2\right)\mu(B)\int_{\mathbb{R}}\left|f_{B}(x_{3})\right|^{4}dx_{3}-\beta\mu(B)\left|f_{B}(0)\right|^{2}\nonumber \\
 & \,\ \ \ \ \ \ +\left(\alpha/2+\beta\right)\left(L^{-1}+8\sqrt{L}\left\Vert f_{B}^{\prime}\right\Vert _{2}^{3/2}+\left|\mathcal{G}\left(B,L/\sqrt{2}\right)\right|\left\Vert f_{B}^{\prime}\right\Vert _{2}\right)\nonumber 
\end{align}
where $\mathcal{G}\left(B,L\right):=2\ln L+2\ln\ln B+2\int_{0}^{\infty}e^{-u}\ln\left(\sqrt{\frac{1}{u}+\frac{2}{BL^{2}}}+\sqrt{\frac{1}{u}}\right)du-\ln2.$
Now choosing $L=1/\ln B$ it can be verified $\left|\mathcal{G}\left(B,L/\sqrt{2}\right)\right|<C$
for $B>1$. Since $\left\Vert f_{B}^{\prime}\right\Vert _{2}=\mu(B)\left\Vert \phi_{0}^{\prime}\right\Vert _{2}$
and 
\begin{equation}
\int_{\mathbb{R}}\left|f_{B}^{\prime}\right|^{2}dx_{3}-\left(\alpha/2\right)\mu(B)\int\left|f_{B}\right|^{4}dx_{3}-\beta\mu(B)\left|f_{B}(0)\right|^{2}=\left(\mu(B)\right)^{2}\mathfrak{e}_{_{0}},\label{eq:scaling}
\end{equation}
the lemma follows. 
\end{proof}
\begin{cor}
\label{cor:Upper}Let $W$ be a sum of a bounded Borel measure on
the real line and a $L^{\infty}(\mathbb{R})$ function. For $\epsilon$
a real parameter, $E_{\epsilon}(B)$ the ground-state energy of the
Hamiltonian $\mathbb{H}_{\epsilon}(B)$ in (\ref{eq:pert}) and $\mathfrak{e}_{_{\epsilon}}$
the one-dimensional energy in (\ref{eq:pert2}) there is a constant
$C>0$ such that for $B>1$ 
\[
E_{\epsilon}(B)\leq B+\mathfrak{e}_{_{\epsilon}}\left(\ln B\right)^{2}+C\ln B\left|\ln\ln B\right|+C\ln B.
\]
\end{cor}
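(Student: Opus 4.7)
The plan is to adapt the argument of Lemma \ref{lem:Last Straw} step by step, starting from the Pekar-type reduction of Lemma \ref{lem:Classical Pekar} now applied to $\mathbb{H}_\epsilon(B)$. Since the perturbation $-\epsilon(\ln B)^2 W((\ln B)x_3)$ acts only on the electronic coordinates, the variational principle and completion of the square on the phonon variables give
\[
E_\epsilon(B) \leq \mathcal{P}(\varphi) - \epsilon (\ln B)^2 \int_{\mathbb{R}^3} W((\ln B)x_3)|\varphi(x)|^2 \,dx
\]
for every normalized $\varphi \in H_A^1(\mathbb{R}^3)$.

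For each $B>1$ I would then pick a normalized $\phi_{\epsilon,B}\in H^1(\mathbb{R})$ with $\mathcal{E}_\epsilon(\phi_{\epsilon,B}) \leq \mathfrak{e}_\epsilon + (\ln B)^{-3}$; by (\ref{eq:p1}) and the Sobolev inequality, $\|\phi_{\epsilon,B}\|_{H^1}$, and hence $\|\phi_{\epsilon,B}\|_\infty$, are bounded by a constant depending only on $\epsilon,\alpha,\beta,W$. Setting $f_B(x_3) := \sqrt{\ln B}\,\phi_{\epsilon,B}((\ln B)x_3)$ and $\varphi(x) := \gamma_B(x_\perp)f_B(x_3)$, the deliberate choice to scale by $\ln B$ instead of $\mu(B)$ is exactly what makes the perturbation contribute $-\epsilon (\ln B)^2 \int_{\mathbb{R}} W(y)|\phi_{\epsilon,B}(y)|^2\,dy$ with no extra factors, since $\int|\gamma_B|^2\,dx_\perp = 1$ and the substitution $y=(\ln B)x_3$ is linear.

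With this trial wave function, following Lemma \ref{lem:Last Straw} line by line and invoking Lemma \ref{lem:B-label1} and Corollary \ref{cor:B-2} with the same choice $L = 1/\ln B$, I expect to reach
\begin{align*}
 \mathcal{P}(\gamma_B f_B) &\leq B + (\ln B)^2 \|\phi_{\epsilon,B}'\|_2^2 - \frac{\alpha}{2}(\ln B)\mu(B) \|\phi_{\epsilon,B}\|_4^4 \\
 &\quad - \beta (\ln B)\mu(B) |\phi_{\epsilon,B}(0)|^2 + C\ln B,
\end{align*}
where the uniform bound on $\|\phi_{\epsilon,B}'\|_2$, which again comes from (\ref{eq:p1}), keeps the error terms $L^{-1}$, $\sqrt{L}\|f_B'\|_2^{3/2}$, and $|\mathcal{G}(B,L/\sqrt{2})|\|f_B'\|_2$ of size $O(\ln B)$. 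Substituting $(\ln B)\mu(B) = (\ln B)^2 - 2\ln B\ln\ln B$ and regrouping, the right-hand side becomes
\[
 B + (\ln B)^2 \mathcal{E}_\epsilon(\phi_{\epsilon,B}) + 2\ln B\ln\ln B \left(\tfrac{\alpha}{2}\|\phi_{\epsilon,B}\|_4^4 + \beta|\phi_{\epsilon,B}(0)|^2\right) + C\ln B,
\]
and the claim follows from $\mathcal{E}_\epsilon(\phi_{\epsilon,B}) \leq \mathfrak{e}_\epsilon + (\ln B)^{-3}$ together with the uniform $L^\infty$ bound, which controls both $\|\phi_{\epsilon,B}\|_4^4$ and $|\phi_{\epsilon,B}(0)|^2$ by a constant.

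The main difficulty I foresee is precisely this scaling mismatch: the scale $\ln B$ is tuned so that $W((\ln B)\cdot)$ rescales cleanly, whereas the effective Coulomb potential $V_\mathcal{U}^B$ naturally produces $\mu(B) = \ln B - 2\ln\ln B$, and the difference feeds the $2\ln B\ln\ln B$ term above into the estimate. This is tolerable here because the corollary permits an error of order $\ln B|\ln\ln B|$, whereas in Lemma \ref{lem:Last Straw} only $C\ln B$ was permitted and the scaling had to match $\mu(B)$ precisely. A convenient by-product is that no existence theorem for a minimizer of $\mathfrak{e}_\epsilon$ is needed; an $(\ln B)^{-3}$-approximate minimizer with a uniform $H^1$-bound suffices.
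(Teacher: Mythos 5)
Your proposal is correct and follows essentially the same route as the paper: the same reduction $E_\epsilon(B)\leq E_\epsilon^c(B)$, the same trial function $\gamma_B(x_\perp)\sqrt{\ln B}\,\phi((\ln B)x_3)$ built from an approximate minimizer with a uniform $H^1$ bound from (\ref{eq:p1}), and the same use of Lemma \ref{lem:B-label1} and Corollary \ref{cor:B-2} with $L=1/\ln B$. The only cosmetic difference is bookkeeping: the paper absorbs the $2\ln\ln B$ discrepancy into a modified error function $\tilde{\mathcal{G}}$ satisfying $|\tilde{\mathcal{G}}|\leq 2|\ln\ln B|+C$, whereas you keep the original coefficient $\mu(B)$ and track the resulting term $2\ln B\ln\ln B\,(\tfrac{\alpha}{2}\|\phi\|_4^4+\beta|\phi(0)|^2)$ explicitly; both yield the stated $C\ln B|\ln\ln B|+C\ln B$.
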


\begin{proof}
By the estimate in (\ref{eq:p1}) $\mathfrak{e}_{_{\epsilon}}>-\infty$.
Then with the functional $\mathcal{E}_{_{\epsilon}}$ as given in
(\ref{eq:perturbed functional}) for $B>1$ there is a $\phi_{B}\in H^{1}\left(\mathbb{R}\right)$,
$\left\Vert \phi_{B}\right\Vert _{2}=1$ satisfying 
\begin{equation}
\mathcal{E}_{_{\epsilon}}\left(\phi_{B}\right)<\mathfrak{e}_{_{\epsilon}}+1/\ln B\ \text{and}\ \text{\ensuremath{\|}\ensuremath{\ensuremath{\phi_{B}^{\prime}\|_{2}}<C.}}\label{eq:approximate minimizers}
\end{equation}
For $L>0$ and with $g_{B}\left(x_{3}\right):=\sqrt{\ln B}\phi_{B}\left(\left(\ln B\right)x_{3}\right)$
by trivial modifications to Lemma \ref{lem:B-label1} and Corollary
\ref{cor:B-2} the arguments in the proofs of Lemma \ref{lem:Classical Pekar}
and Lemma \ref{lem:Last Straw} apply mutatis mutandis, and $E_{\epsilon}\left(B\right)\leq E_{\epsilon}^{c}(B)$
with
\begin{align*}
 & E_{\epsilon}^{c}(B):=\underset{\|\psi\|_{_{2}}=1}{\inf}\left\{ \mathcal{P}\left(\psi\right)-\epsilon\left(\ln B\right)^{2}\int_{\mathbb{R}^{3}}W\left(\left(\ln B\right)x_{3}\right)\left|\psi\left(x_{\perp},x_{3}\right)\right|^{2}dx_{\perp}dx_{3}\right\} \\
 & \leq B+\int_{\mathbb{R}}\left|g_{B}^{\prime}\right|^{2}dx_{3}-\left(\alpha/2\right)\ln B\int_{\mathbb{R}}\left|g_{B}\right|^{4}dx_{3}-\beta\ln B\left|g_{B}(0)\right|^{2}\\
 & \ \ \ \ \ \ -\epsilon\left(\ln B\right)^{2}\int_{\mathbb{R}}W\left(\left(\ln B\right)x_{3}\right)\left|g_{B}\left(x_{3}\right)\right|^{2}dx_{3}\\
 & \ \ \ \ \ \ +\left(\alpha/2+\beta\right)\left(L^{-1}+8\sqrt{L}\left\Vert g_{B}^{\prime}\right\Vert _{2}^{3/2}+\left|\tilde{\mathcal{G}}\left(B,L/\sqrt{2}\right)\right|\left\Vert g_{B}^{\prime}\right\Vert _{2}\right)
\end{align*}
where $\tilde{\mathcal{G}}\left(B,L\right):=2\ln L+2\int_{0}^{\infty}e^{-u}\ln\left(\sqrt{\frac{1}{u}+\frac{2}{BL^{2}}}+\sqrt{\frac{1}{u}}\right)du-\ln2$.
Now choosing $L=1/\ln B$ it can be verified $|\tilde{\mathcal{G}}\left(B,L/\sqrt{2}\right)|<2\left|\ln\ln B\right|+C$
for $B>1$. Since $\|g_{B}^{\prime}\|_{2}=\left(\ln B\right)\left\Vert \phi_{B}^{\prime}\right\Vert _{2}$,
the corollary follows from (\ref{eq:approximate minimizers}) and
scaling as in (\ref{eq:scaling}). 
\end{proof}

\section{Lower Bound to the Ground-State Energy}
\begin{thm}
\label{thm:Lower Bound}There is a constant $C>0$ such that for $B\geq C$
\[
E_{0}(B)\geq B+\mathfrak{e}_{_{0}}\left(\ln B\right)^{2}-C\left(\ln B\right)^{3/2}.
\]
\end{thm}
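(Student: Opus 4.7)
The plan is to follow Frank and Geisinger's strategy in \cite{Frank-Geisinger}: reduce the three-dimensional problem in a strong magnetic field to an effective one-dimensional strongly-coupled polaron with a delta-function attraction, whose ground-state energy after rescaling is $(\ln B)^2 \mathfrak{e}_{_{0}}$.

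\emph{Step 1: projection to the lowest Landau level.} Let $P_{0}$ denote the orthogonal projection (in the transverse variable) onto the $B$-eigenspace of $H_{B}$, and set $P_{\perp} := 1-P_{0}$, so that $H_{B} P_{\perp} \geq 3B\,P_{\perp}$. For any normalized $\Psi \in \mathcal{H}$ I would split $\Psi = P_{0}\Psi + P_{\perp}\Psi$ and control the cross terms produced by the Coulomb attraction $-\beta|x|^{-1}$ and by the electron-phonon coupling via Cauchy-Schwarz, absorbing them into a small fraction of $H_{B}-B$ (which is bounded below by $2B$ on $\mathrm{Ran}\,P_{\perp}$) together with small fractions of $-\partial_{3}^{2}$ and $\mathcal{N}$. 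The outcome is a lower bound of $\langle P_{0}\Psi, \mathbb{H}(B) P_{0}\Psi\rangle - C$, reducing the problem to the lowest Landau level.

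\emph{Step 2: one-dimensional reduction with the bathtub principle.} On $\mathrm{Ran}\,P_{0}$, integrating out the transverse Gaussian $\gamma_{B}$ replaces the Coulomb attraction by $\beta V_{\mathcal{U}}^{B}(x_{3})$ as in (\ref{eq:uppercut}) and the electron-phonon interaction by a one-dimensional coupling with form factor $|k|^{-1} e^{-k_{\perp}^{2}/(4B)}$, producing an effective one-dimensional Fr\"ohlich-type Hamiltonian in $x_{3}$. The bathtub principle \cite{Lieb-Loss}, combined with the Sobolev embedding $H^{1}(\mathbb{R}) \hookrightarrow L^{\infty}(\mathbb{R})$, extracts from $V_{\mathcal{U}}^{B}$ a delta-function potential at the origin of strength $c_{B} := \ln B - 2\ln\ln B$ at the cost of absorbing a small fraction of the longitudinal kinetic energy $-\partial_{3}^{2}$ and an additive error of size $O(\ln B)$.

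\emph{Step 3: Lieb--Thomas argument and rescaling.} Following \cite{Lieb Thomas} I would impose a UV cutoff $\Lambda$ in the phonon momentum; the low-momentum phonons are treated by completion of the square (displacement to a coherent state whose amplitude is determined by the one-dimensional electron density), which recovers the quartic self-interaction $-(\alpha/2)\,c_{B}\int_{\mathbb{R}}|\varphi|^{4}\,dx_{3}$, while the high-momentum phonons contribute an error of size $O(\sqrt{\Lambda})$ bounded using $\mathcal{N} \geq 0$ and the pointwise Sobolev bound on $\varphi$. Optimizing $\Lambda \sim \ln B$ yields an error of order $(\ln B)^{3/2}$. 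Collecting everything and rescaling $\varphi(x) = \sqrt{c_{B}}\,\tilde{\varphi}(c_{B} x)$, the one-dimensional infimum becomes $c_{B}^{2}\,\mathfrak{e}_{_{0}}$ by the definition (\ref{eq: p}); since $c_{B}^{2} = (\ln B)^{2} - 4\ln B\ln\ln B + O((\ln\ln B)^{2})$ and $\mathfrak{e}_{_{0}} < 0$, the deviation from $(\ln B)^{2}\mathfrak{e}_{_{0}}$ is absorbed into the $O((\ln B)^{3/2})$ error, proving the theorem.

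\emph{Main obstacle.} The hardest step will be Step 3: the effective phonon coupling in the lowest Landau level is the Gaussian-cutoff form factor $|k|^{-1} e^{-k_{\perp}^{2}/(4B)}$, which is neither compactly supported in $k$ nor already one-dimensional, so extracting a one-dimensional quartic self-interaction with coupling exactly $\alpha c_{B}$ demands a careful accounting of both the $|k|=0$ singularity and the Gaussian suppression at large transverse momentum. Moreover, the three cutoffs introduced along the way---the Landau projection in Step 1, the bathtub length in Step 2, and the phonon UV cutoff $\Lambda$ in Step 3---must be optimized jointly in $B$ so that the aggregate error is exactly of order $(\ln B)^{3/2}$ rather than something larger.
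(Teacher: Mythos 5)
Your proposal is correct and follows essentially the same route as the paper: the proof is Frank--Geisinger's strategy (Landau-level projection with a priori concentration estimates, reduction to a one-dimensional Fr\"ohlich-type Hamiltonian, the Lieb--Thomas block/coherent-state argument, and rescaling by $\ln B$), with the new ingredient being exactly the bathtub principle you invoke to extract the delta well of strength $\ln B - 2\ln\ln B$ from the Coulomb attraction in the lowest Landau level. The only imprecision is in Step 2: since $P_{0}^{B}\Psi$ is not a product state $\gamma_{B}\otimes f$, the bathtub principle is applied directly to the transverse integral of $\left\Vert P_{0}^{B}\Psi\right\Vert _{\mathcal{F}}^{2}/\left|x\right|$ (using only the pointwise bound $\left\Vert P_{0}^{B}\Psi\right\Vert _{\mathcal{F}}^{2}\leq(B/2\pi)\widetilde{\Psi}^{2}$ and normalization) rather than to $V_{\mathcal{U}}^{B}$, which appears only in the upper bound.
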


\noindent The proof of Theorem \ref{thm:Lower Bound} shall be provided
at the end of Subsection \ref{subsec:Lowest-Landau-Level}.

\subsection{Delta-function potential}

For $\Psi\in H_{A}^{1}\left(\mathbb{R}^{3}\right)\otimes\mathcal{F}$
its electron density in the magnetic-field direction shall be denoted
$\widetilde{\Psi}^{2}\left(x_{3}\right)$ i.e. 
\[
\widetilde{\Psi}\left(x_{3}\right):=\left(\int_{\mathbb{R}^{2}}\left\Vert \Psi\right\Vert _{\mathcal{F}}^{2}\left(x_{\perp},x_{3}\right)dx_{\perp}\right)^{1/2}.
\]
By Hölder's inequality $\|\partial_{3}\widetilde{\Psi}\|_{2}\leq\|\partial_{3}\Psi\|$
and $\tilde{\Psi}\in H^{1}\left(\mathbb{R}\right)$. Furthermore the
integral operator $P_{0}^{B}$ acting on $L^{2}\left(\mathbb{R}^{2}\right)$
with kernel 
\begin{equation}
P_{0}^{B}(x_{\perp},y_{\perp}):=\frac{B}{2\pi}e^{-\frac{B}{4}\left|x_{\perp}-y_{\perp}\right|^{2}}e^{\frac{iB}{2}\left(x_{1}y_{2}-x_{2}y_{1}\right)}\label{eq:kernel}
\end{equation}
is the projection onto the lowest Landau level i.e. the ground state
of the Landau Hamiltonian $H_{B}$, and $P_{>}^{B}:=1-P_{0}^{B}$.
Below the operators $P_{0}^{B}\otimes1$ and $P_{0}^{B}\otimes1\otimes1$
acting respectively on $L^{2}\left(\mathbb{R}^{2}\right)\otimes L^{2}\left(\mathbb{R}\right)$
and $L^{2}\left(\mathbb{R}^{2}\right)\otimes L^{2}\left(\mathbb{R}\right)\otimes\mathcal{F}$
shall also be denoted $P_{0}^{B}$. 
\begin{lem}
\label{Lemma: Bathtub}Let $L>0$. For $B>1$ and $\Psi\in H_{A}^{1}(\mathbb{R}^{3})\otimes\mathcal{F}$
\begin{align*}
 & \int_{\mathbb{R}^{3}}\frac{\left\Vert P_{0}^{B}\Psi\right\Vert _{\mathcal{F}}^{2}\left(x_{\perp},x_{3}\right)}{\sqrt{\left|x_{\perp}\right|^{2}+x_{3}^{2}}}\,dx_{\perp}dx_{3}-\left(\ln B-2\ln\ln B\right)\left(\widetilde{\Psi}(0)\right)^{2}\\
 & \leq L^{-1}\|\widetilde{\Psi}\|_{2}^{2}+8\sqrt{L}\|\partial_{3}\widetilde{\Psi}\|_{2}^{3/2}\|\widetilde{\Psi}\|_{2}^{1/2}+\left|\mathcal{D}\left(B,L\right)\right|\|\partial_{3}\widetilde{\Psi}\|_{2}\|\widetilde{\Psi}\|_{2};
\end{align*}
\begin{equation}
\mathcal{D}\left(B,L\right):=2\ln L+2\ln\ln B+\frac{2}{\sqrt{\frac{2}{BL^{2}}+1}+1}+2\ln\left(\sqrt{1+\frac{2}{BL^{2}}}+1\right)-\ln2.\label{eq:D(B,L)}
\end{equation}
\end{lem}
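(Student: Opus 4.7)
The plan is to apply the bathtub principle in the transverse plane for each fixed $x_3$, then truncate the longitudinal coordinate at scale $L$ and use sharp one-dimensional Sobolev inequalities to extract the delta-function main term.

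Writing $\rho:=\|P_0^B\Psi\|_\mathcal{F}^2$, I would first record two facts: the mass bound $\int_{\mathbb{R}^2}\rho(x_\perp,x_3)\,dx_\perp \leq \widetilde{\Psi}^2(x_3)$ (since $P_0^B$ is a projection) and the pointwise bound $\rho(x_\perp,x_3)\leq(B/(2\pi))\int_{\mathbb{R}^2}\rho(y_\perp,x_3)\,dy_\perp$, which follows from $P_0^B\Psi = P_0^B(P_0^B\Psi)$ together with Cauchy--Schwarz, using $\int|P_0^B(x_\perp,y_\perp)|^2\,dy_\perp = B/(2\pi)$ (immediate from the Gaussian kernel (\ref{eq:kernel})). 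The bathtub principle \cite{Lieb-Loss} applied to $x_\perp\mapsto\rho(x_\perp,x_3)$ under these two constraints forces the maximizer to concentrate at maximal density on the disk of radius $\sqrt{2/B}$---the unique radius at which pointwise and mass constraints are simultaneously saturated---yielding
\[
\int_{\mathbb{R}^2}\frac{\rho(x_\perp,x_3)}{\sqrt{|x_\perp|^2+x_3^2}}\,dx_\perp \leq K(x_3)\,\widetilde{\Psi}^2(x_3),\qquad K(x_3):=\frac{2}{\sqrt{2/B+x_3^2}+|x_3|},
\]
and clearly $K(x_3)\leq 1/|x_3|$.

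Next I would integrate over $x_3$ and split at $|x_3|=L$. The outer region contributes at most $L^{-1}\|\widetilde{\Psi}\|_2^2$ by $K\leq 1/|x_3|$. On $[-L,L]$ I would decompose $\widetilde{\Psi}^2(x_3)=\widetilde{\Psi}^2(0)+[\widetilde{\Psi}^2(x_3)-\widetilde{\Psi}^2(0)]$; a direct antiderivative computation identifies
\[
\int_{-L}^L K(x_3)\,dx_3 = (\ln B - 2\ln\ln B) + \mathcal{D}(B,L),
\]
producing the main term $(\ln B - 2\ln\ln B)\widetilde{\Psi}(0)^2$ together with a remainder $\widetilde{\Psi}(0)^2\,\mathcal{D}(B,L)$ that I would absorb via the sharp one-dimensional trace inequality $\widetilde{\Psi}(0)^2\leq\|\widetilde{\Psi}\|_2\|\partial_3\widetilde{\Psi}\|_2$.

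For the remaining ``difference'' piece $\int_{-L}^L K(x_3)[\widetilde{\Psi}^2(x_3)-\widetilde{\Psi}^2(0)]\,dx_3$ I would combine $|\widetilde{\Psi}(x_3)-\widetilde{\Psi}(0)|\leq\sqrt{|x_3|}\|\partial_3\widetilde{\Psi}\|_2$ (Cauchy--Schwarz on $\int_0^{x_3}\partial_3\widetilde{\Psi}$) with the one-dimensional embedding $\|\widetilde{\Psi}\|_\infty^2\leq\|\widetilde{\Psi}\|_2\|\partial_3\widetilde{\Psi}\|_2$ to obtain $|\widetilde{\Psi}^2(x_3)-\widetilde{\Psi}^2(0)|\leq 2\sqrt{|x_3|}\|\widetilde{\Psi}\|_2^{1/2}\|\partial_3\widetilde{\Psi}\|_2^{3/2}$. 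Combined with the elementary estimate $\int_{-L}^L K(x_3)\sqrt{|x_3|}\,dx_3 \leq 4\sqrt{L}$ (again from $K\leq 1/|x_3|$), this produces the $8\sqrt{L}\|\widetilde{\Psi}\|_2^{1/2}\|\partial_3\widetilde{\Psi}\|_2^{3/2}$ error. The main technical obstacle is the careful antiderivative bookkeeping identifying $\int_{-L}^L K\,dx_3$ with the precise expression defining $\mathcal{D}(B,L)$; the remaining estimates are routine.
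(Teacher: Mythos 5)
Your argument is correct and follows essentially the same route as the paper: the bathtub principle in the transverse plane (with the same ceiling $B/2\pi$ and radius $\sqrt{2/B}$) reduces the integral to the one-dimensional potential $V_{\mathcal{L}}^{B}(x_{3})=2/(\sqrt{2/B+x_{3}^{2}}+|x_{3}|)$, and your subsequent split at $|x_{3}|=L$ with the trace and Sobolev inequalities is precisely the content of the corollary in Appendix B that the paper invokes to finish. The only cosmetic difference is that you state the pointwise and mass constraints in terms of the density of $P_{0}^{B}\Psi$ at each fixed $x_{3}$ rather than in terms of $\widetilde{\Psi}$, which changes nothing in the conclusion.
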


\begin{proof}
\noindent By Hölder's inequality 
\begin{equation}
\left\Vert P_{0}^{B}\Psi\right\Vert _{\mathcal{F}}^{2}\left(x_{\perp},x_{3}\right)\leq\left(\frac{B}{2\pi}\right)\left(\widetilde{\Psi}\left(x_{3}\right)\right)^{2},\label{eq:1}
\end{equation}
and since $P_{0}^{B}$ is a projection 
\begin{equation}
\int_{\mathbb{R}^{3}}\left\Vert P_{0}^{B}\Psi\right\Vert _{\mathcal{F}}^{2}\left(x_{\perp},x_{3}\right)dx_{\perp}dx_{3}\leq\int_{\mathbb{R}}\left(\widetilde{\Psi}\left(x_{3}\right)\right)^{2}dx_{3}.\label{eq:2}
\end{equation}
It follows from the bathtub principle \cite{Lieb-Loss},\cite{Bonetto-Loss}
that the maximum of the expression 
\[
\int_{\mathbb{R}^{2}}\frac{G\left(x_{\perp},x_{3}\right)}{\sqrt{\left|x_{\perp}\right|^{2}+x_{3}}}\,dx_{\perp}
\]
over all functions $G$ satisfying the conditions (\ref{eq:1}) and
(\ref{eq:2}) above is attained by the function 
\[
G_{\text{max}}\left(x_{\perp},x_{3}\right)=\left\{ \begin{array}{ccc}
\left(\frac{B}{2\pi}\right)\left(\widetilde{\Psi}\left(x_{3}\right)\right)^{2} & \text{when} & \left|x_{\perp}\right|\leq R\\
\\
0 & \text{when} & \left|x_{\perp}\right|>R
\end{array}\right.
\]
where $R=\sqrt{2/B}$. Therefore 
\begin{align*}
\int_{\mathbb{R}}\int_{\mathbb{R}^{2}}\frac{\left\Vert P_{0}^{B}\Psi\right\Vert _{\mathcal{F}}^{2}\left(x_{\perp},x_{3}\right)}{\sqrt{\left|x_{\perp}\right|^{2}+x_{3}^{2}}}dx_{\perp}dx_{3} & \leq\int_{\mathbb{R}}\int_{\left|x_{\perp}\right|\leq\sqrt{2/B}}\ \frac{G_{\text{max}}\left(x_{\perp},x_{3}\right)}{\sqrt{\left|x_{\perp}\right|^{2}+x_{3}^{2}}}dx_{\perp}dx_{3}\\
 & =\int_{\mathbb{R}}V_{\mathcal{\mathcal{L}}}^{B}(x_{3})\left(\widetilde{\Psi}\left(x_{3}\right)\right)^{2}dx_{3}
\end{align*}
with 
\begin{equation}
V_{\mathcal{\mathcal{L}}}^{B}(x_{3}):=\frac{2}{\sqrt{\frac{2}{B}+x_{3}^{2}}+\left|x_{3}\right|}.\label{eq:Bath}
\end{equation}
The lemma follows from Corollary B.3 in the appendix. 
\end{proof}
\begin{cor}
\label{Hydrogen}Let $\tau>0$. There is a constant $C>0$ such that
for $B\geq C$ and $\Psi\in H_{A}^{1}\left(\mathbb{R}^{3}\right)\otimes\mathcal{F}$
\[
\|\partial_{3}\Psi\|^{2}-\tau\left\langle \Psi,\,P_{0}^{B}\left|x\right|^{-1}P_{0}^{B}\Psi\right\rangle \geq\left(-\frac{\tau^{2}}{4}\left(\ln B\right)^{2}+\tau^{2}\ln B\ln\ln B-C\left(\ln B\right)\right)\left\Vert \Psi\right\Vert ^{2}
\]
\end{cor}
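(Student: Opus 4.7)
The plan is to reduce the three-dimensional problem to a one-dimensional Schr\"odinger operator with an attractive delta potential, whose ground state energy is known explicitly.

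First I would use the two facts mentioned before Lemma \ref{Lemma: Bathtub}: H\"older's inequality gives $\|\partial_3 \Psi\|^2 \geq \|\partial_3 \widetilde{\Psi}\|_2^2$, and by Fubini $\|\widetilde{\Psi}\|_2^2 = \|\Psi\|^2$. Then, since $\langle \Psi, P_0^B |x|^{-1} P_0^B \Psi\rangle = \int \|P_0^B \Psi\|_{\mathcal{F}}^2 (|x_\perp|^2 + x_3^2)^{-1/2}\, dx_\perp dx_3$, I apply Lemma \ref{Lemma: Bathtub} with $L = 1/\ln B$. A direct inspection of (\ref{eq:D(B,L)}) shows that at this value of $L$ the $2\ln L$ term cancels the $2 \ln \ln B$ term, leaving $|\mathcal{D}(B, 1/\ln B)| \leq C$ for $B$ large. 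Writing $\mu(B) := \ln B - 2 \ln \ln B$, this provides
\[
\tau \langle \Psi, P_0^B |x|^{-1} P_0^B \Psi\rangle \leq \tau \mu(B) (\widetilde{\Psi}(0))^2 + \tau(\ln B)\|\widetilde{\Psi}\|_2^2 + \frac{8\tau}{\sqrt{\ln B}} \|\partial_3 \widetilde{\Psi}\|_2^{3/2}\|\widetilde{\Psi}\|_2^{1/2} + C\tau \|\partial_3 \widetilde{\Psi}\|_2 \|\widetilde{\Psi}\|_2.
\]

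Next I would split off a vanishing fraction of the kinetic energy: choose $\eta := C_0/\ln B$ for a large constant $C_0$ and write $\|\partial_3 \widetilde{\Psi}\|_2^2 = (1-\eta)\|\partial_3 \widetilde{\Psi}\|_2^2 + \eta \|\partial_3 \widetilde{\Psi}\|_2^2$. Completion of the square using the Sobolev bound $|\widetilde{\Psi}(0)|^2 \leq \|\partial_3 \widetilde{\Psi}\|_2\|\widetilde{\Psi}\|_2$ (which is the one-dimensional version of the fact that $-\partial^2 - c\delta$ has lowest eigenvalue $-c^2/4$) yields
\[
(1-\eta)\|\partial_3 \widetilde{\Psi}\|_2^2 - \tau \mu(B)(\widetilde{\Psi}(0))^2 \geq -\frac{\tau^2 \mu(B)^2}{4(1-\eta)}\|\widetilde{\Psi}\|_2^2.
\]
Expanding $(1-\eta)^{-1} = 1 + O(1/\ln B)$ together with $\mu(B)^2 = (\ln B)^2 - 4 \ln B \ln \ln B + 4(\ln \ln B)^2$ gives the two leading terms $-\tfrac{\tau^2}{4}(\ln B)^2 + \tau^2 \ln B \ln \ln B$ plus a residual of size $O(\ln B)$, since $(\ln \ln B)^2 = o(\ln B)$ and $\eta (\ln B)^2 = C_0\ln B$.

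Finally I would absorb the three remaining error terms from the bathtub bound into the reserved $\eta \|\partial_3 \widetilde{\Psi}\|_2^2$. Cauchy--Schwarz handles the term $C\tau \|\partial_3 \widetilde{\Psi}\|_2 \|\widetilde{\Psi}\|_2$, Young's inequality with conjugate exponents $4/3$ and $4$ handles $\|\partial_3 \widetilde{\Psi}\|_2^{3/2}\|\widetilde{\Psi}\|_2^{1/2}$ (the Young parameter is chosen as $\sim 1/\sqrt{\ln B}$ so that the resulting $\|\partial_3 \widetilde{\Psi}\|_2^2$ coefficient matches $\eta/2$), and the term $\tau(\ln B)\|\widetilde{\Psi}\|_2^2$ is already $O(\ln B)$. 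The hard part will be calibrating these parameters: the leading $-\tau^2(\ln B)^2/4$ forces $\eta \to 0$ as $B \to \infty$, yet $\eta$ must still dominate the kinetic-energy residual left behind by Young's inequality against the $\sqrt{L} = 1/\sqrt{\ln B}$ prefactor. The coupled scalings $L = 1/\ln B$ and $\eta = C_0/\ln B$ are precisely the ones that thread this needle, keeping the overall error at $O(\ln B)$ and using $\|\widetilde{\Psi}\|_2^2 = \|\Psi\|^2$ to conclude.
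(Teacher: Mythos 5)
Your proposal is correct and follows essentially the same route as the paper's proof: reduce to one dimension via Lemma \ref{Lemma: Bathtub} with $L=1/\ln B$ (so that $|\mathcal{D}(B,L)|\leq C$), apply the delta-well bound to $(1-\eta)\|\partial_{3}\widetilde{\Psi}\|_{2}^{2}-\tau\mu(B)(\widetilde{\Psi}(0))^{2}$ with a reserved kinetic fraction $\eta\sim 1/\ln B$, and absorb the remaining errors at cost $O(\ln B)$, the paper using the closed-form inequality $ax^{2}-bx^{3/2}\geq-(27/256)\,b^{4}/a^{3}$ where you invoke Young. One small slip: the Young parameter that makes the $\|\partial_{3}\widetilde{\Psi}\|_{2}^{2}$ coefficient equal to $\eta/2$ against the prefactor $8\tau/\sqrt{\ln B}$ is of order $(\ln B)^{-3/8}$, not $(\ln B)^{-1/2}$ --- with the latter the complementary term would be of order $(\ln B)^{3/2}$, which is too large --- but your governing prescription (calibrate so the kinetic coefficient matches $\eta/2$) is the right one and does yield the claimed $O(\ln B)$ error.
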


\begin{proof}
Recall for $\epsilon>0$, $\|\partial_{3}\widetilde{\Psi}\|_{2}\|\widetilde{\Psi}\|_{2}\leq\epsilon\|\partial_{3}\widetilde{\Psi}\|_{2}^{2}+\epsilon^{-1}\|\widetilde{\Psi}\|_{2}^{2}$.
With $\mathcal{D}\left(B,L\right)$ as given in (\ref{eq:D(B,L)}),
under the assumption that $2\tau\epsilon\left|\mathcal{D}\left(B,L\right)\right|<1/2$
and denoting $\mu(B)=\left(\ln B-2\ln\ln B\right)$ it follows from
Lemma \ref{Lemma: Bathtub} that
\begin{align*}
 & \|\partial_{3}\Psi\|^{2}-\tau\left\langle \Psi,\,P_{0}^{B}\left|x\right|^{-1}P_{0}^{B}\Psi\right\rangle \\
 & \geq\left(1-2\tau\epsilon\left|\mathcal{D}\left(B,L\right)\right|\right)\|\partial_{3}\widetilde{\Psi}\|_{2}^{2}-\tau\mu(B)\left(\widetilde{\Psi}(0)\right)^{2}-\tau\left(L^{-1}+\epsilon^{-1}\left|\mathcal{D}\left(B,L\right)\right|\right)\|\widetilde{\Psi}\|_{2}^{2}\\
 & \ \ \ \,+\tau\epsilon\left|\mathcal{D}\left(B,L\right)\right|\|\partial_{3}\widetilde{\Psi}\|_{2}^{2}-8\tau\sqrt{L}\|\widetilde{\Psi}\|_{2}^{1/2}\|\partial_{3}\widetilde{\Psi}\|_{2}^{3/2}\\
 & \geq\frac{-\tau^{2}\left(\mu(B)\right)^{2}}{4\left(1-2\tau\epsilon\left|\mathcal{D}\left(B,L\right)\right|\right)}\|\widetilde{\Psi}\|_{2}^{2}-\tau\left(\frac{432L^{2}}{\left|\mathcal{D}\left(B,L\right)\right|^{3}\epsilon^{3}}+L^{-1}+\epsilon^{-1}\left|\mathcal{D}\left(B,L\right)\right|\right)\|\widetilde{\Psi}\|_{2}^{2}\\
 & \geq\left(\left(-\frac{\tau^{2}}{4}-\tau^{3}\epsilon\left|\mathcal{D}\left(B,L\right)\right|\right)\left(\mu(B)\right)^{2}-\frac{432\tau L^{2}}{\left|\mathcal{D}\left(B,L\right)\right|^{3}\epsilon^{3}}-\frac{\tau}{L}-\frac{\tau\left|\mathcal{D}\left(B,L\right)\right|}{\epsilon}\right)\|\widetilde{\Psi}\|_{2}^{2}.
\end{align*}
At the second inequality it is used that $\|\partial_{3}\widetilde{\Psi}\|_{2}^{2}-\zeta(\widetilde{\Psi}(0))^{2}\geq-(1/4)\zeta^{2}\|\widetilde{\Psi}\|_{2}^{2}$
for $\zeta\geq0$ and that $ax^{2}-bx^{3/2}\geq-(27/256)b^{4}/a^{3}$
for $a>0,b\geq0$ and $x\geq0$. Now choosing $\epsilon=1/\ln B$
and $L=1/\ln B$ the above assumption that $2\tau\epsilon\left|\mathcal{D}\left(B,L\right)\right|<1/2$
can be verified to be true for $B\geq C$. The lemma follows. 
\end{proof}

\subsection{\label{subsec:Lowest-Landau-Level}Concentration in the Lowest Landau
Level}

First an ultraviolet cutoff is needed. With $\mathcal{K}>0$ and $\Gamma_{\mathcal{K}}:=\left\{ k\in\mathbb{R}^{3}:\max\left(\left|k_{\perp}\right|,\left|k_{3}\right|\right)\leq\mathcal{K}\right\} $
the cutoff Hamiltonian shall be denoted 
\begin{align*}
\mathfrak{h}_{\mathcal{K}}^{\text{co}}:=\left(1-\frac{8\alpha}{\pi\mathcal{K}}\right)\left(H_{B}-\partial_{3}^{2}\right) & +\frac{1}{2}\int_{\Gamma_{\mathcal{K}}}a_{k}^{\dagger}a_{k}\,dk+\frac{1}{2}\mathcal{N}\\
 & +\frac{\sqrt{\alpha}}{2\pi}\int_{\Gamma_{\mathcal{K}}}\left(\frac{a_{k}e^{ik\cdot x}}{|k|}+\frac{a_{k}^{\dagger}e^{-ik\cdot x}}{|k|}\right)dk.
\end{align*}

\begin{lem}
\label{Ultraviolet}If $\mathcal{K}>8\alpha/\pi$, then $\mathbb{H}(B)\geq\mathfrak{h}_{\mathcal{K}}^{\text{co}}-\beta\left|x\right|^{-1}-1/4.$ 
\end{lem}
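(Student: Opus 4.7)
The plan is to reduce the claim to the operator inequality
\begin{equation*}
\frac{8\alpha}{\pi\mathcal{K}}(H_B-\partial_3^2) + \frac{1}{2}\int_{\Gamma_{\mathcal{K}}^c} a_k^{\dagger}a_k\,dk + I_{\mathcal{K}^c}\geq -\tfrac{1}{4},
\end{equation*}
where $I_{\mathcal{K}^c}:=\frac{\sqrt{\alpha}}{2\pi}\int_{\Gamma_{\mathcal{K}}^c}|k|^{-1}(a_k e^{ik\cdot x}+a_k^{\dagger}e^{-ik\cdot x})\,dk$; this expression is exactly $\mathbb{H}(B)-\mathfrak{h}_{\mathcal{K}}^{\text{co}}+\beta|x|^{-1}$ after cancellation of the Coulomb term, the kinetic coefficients and the number operator. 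Completing the square in the phonon variables alone would produce the divergent integral $\int_{\Gamma_{\mathcal{K}}^c}|k|^{-2}\,dk=\infty$, so the ultraviolet tail of the Fr\"ohlich interaction must be controlled using part of the electron kinetic energy.

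The key step is a Lieb--Yamazaki-style integration by parts. Setting $\pi_j:=-i\partial_j+A_j$ (with $A_3\equiv 0$) so that $H_B-\partial_3^2=\sum_j\pi_j^2$ and $[\pi_j,e^{\pm ik\cdot x}]=\pm k_je^{\pm ik\cdot x}$, the identity $e^{\pm ik\cdot x}=\pm|k|^{-2}k\cdot[\pi,e^{\pm ik\cdot x}]$ yields the commutator representation
\begin{equation*}
I_{\mathcal{K}^c}=\sum_j\bigl(\pi_j J_j(x)-J_j(x)\pi_j\bigr),\qquad J_j(x):=\frac{\sqrt{\alpha}}{2\pi}\int_{\Gamma_{\mathcal{K}}^c}\frac{k_j}{|k|^3}\bigl(e^{ik\cdot x}a_k-e^{-ik\cdot x}a_k^{\dagger}\bigr)dk.
\end{equation*}
Each $J_j(x)$ is anti-self-adjoint, so $\langle\Psi,I_{\mathcal{K}^c}\Psi\rangle=2\,\mathrm{Re}\sum_j\langle\pi_j\Psi,J_j\Psi\rangle$; Cauchy--Schwarz combined with the arithmetic--geometric-mean inequality then gives
\begin{equation*}
\langle\Psi,I_{\mathcal{K}^c}\Psi\rangle\geq -\varepsilon\langle\Psi,(H_B-\partial_3^2)\Psi\rangle -\varepsilon^{-1}\sum_j\|J_j(x)\Psi\|^2
\end{equation*}
for every $\varepsilon>0$.

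To estimate $\sum_j\|J_j\Psi\|^2$ I write $J_j=A_j-A_j^{\dagger}$ with $A_j:=\frac{\sqrt{\alpha}}{2\pi}\int_{\Gamma_{\mathcal{K}}^c}|k|^{-3}k_j\,e^{ik\cdot x}a_k\,dk$. The pointwise-in-$x$ inequality $\|a(f)\Psi\|_{\mathcal{F}}^2\leq\|f\|_2^2\,\langle\Psi,\mathcal{N}_{\mathcal{K}^c}\Psi\rangle_{\mathcal{F}}$ (applied to each $A_j$) together with the scalar CCR $[A_j,A_j^{\dagger}]=\frac{\alpha}{4\pi^2}\int_{\Gamma_{\mathcal{K}}^c}|k|^{-6}k_j^2\,dk$ produce, after summing over $j$ and invoking the three-dimensional integral $\int_{\Gamma_{\mathcal{K}}^c}|k|^{-4}dk=4\pi/\mathcal{K}$,
\begin{equation*}
\sum_j\|J_j\Psi\|^2\leq\frac{4\alpha}{\pi\mathcal{K}}\langle\Psi,\mathcal{N}_{\mathcal{K}^c}\Psi\rangle+\frac{2\alpha}{\pi\mathcal{K}}\|\Psi\|^2.
\end{equation*}
Choosing $\varepsilon=8\alpha/(\pi\mathcal{K})$ makes the kinetic contributions cancel identically and the $\mathcal{N}_{\mathcal{K}^c}$ piece combine exactly with the $\tfrac12\langle\Psi,\mathcal{N}_{\mathcal{K}^c}\Psi\rangle$ already present, leaving precisely the remainder $-\tfrac14\|\Psi\|^2$ claimed; the hypothesis $\mathcal{K}>8\alpha/\pi$ is used only to keep the coefficient $1-8\alpha/(\pi\mathcal{K})>0$ in $\mathfrak{h}_{\mathcal{K}}^{\text{co}}$.

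The principal subtlety, and the step I would treat with the most care, is justifying the integration-by-parts identity and the subsequent operator manipulations rigorously: I would first establish them on a dense core such as finite sums of $\varphi\otimes\Phi$ with $\varphi\in C_c^\infty(\mathbb{R}^3)$ and $\Phi$ a finite-particle Fock vector, where all $k$-integrals converge absolutely in the Bochner sense, and then extend by a density/closability argument to the full form domain. Everything else---the anti-self-adjointness of each $J_j$, the Cauchy--Schwarz bound on $a(f)$, and the scalar CCR computation---is routine algebra once the algebraic identity for $I_{\mathcal{K}^c}$ is in place.
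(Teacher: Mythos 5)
Your proof is correct and is essentially the paper's argument made explicit: the paper disposes of this lemma by citing Lemma 5.1 of \cite{Frank-Geisinger}, whose content is precisely the Lieb--Yamazaki commutator representation of the ultraviolet tail, the Cauchy--Schwarz trade-off against $\varepsilon(H_B-\partial_3^2)$, and the $a(f)$-norm plus CCR estimate that you carry out, with the same choice $\varepsilon=8\alpha/(\pi\mathcal{K})$ producing the constants in $\mathfrak{h}_{\mathcal{K}}^{\text{co}}$ and the residual $-1/4$. One cosmetic correction: since $\Gamma_{\mathcal{K}}$ is the cylinder $\{\max(|k_\perp|,|k_3|)\leq\mathcal{K}\}$ rather than a ball, one has $\int_{\Gamma_{\mathcal{K}}^{c}}|k|^{-4}\,dk\leq 4\pi/\mathcal{K}$ (because $\Gamma_{\mathcal{K}}^{c}\subset\{|k|>\mathcal{K}\}$) rather than equality, which is the direction you need anyway.
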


\begin{proof}
The lemma follows from Lemma 5.1 in \cite{Frank-Geisinger}. 
\end{proof}
\begin{lem}
There exists a constant $C>0$ such that for $B\geq C$ 
\begin{align*}
\mathbb{H}(B)\geq\left(1-\frac{8\alpha\left(\ln B\right)^{2}}{\pi B}\right)H_{B} & +\left(\frac{1}{2}-\frac{8\alpha\left(\ln B\right)^{2}}{\pi B}\right)\left(-\partial_{3}^{2}\right)\\
 & -\beta\left|x\right|^{-1}+\frac{1}{2}\mathcal{N}-C\left(\ln B\right)^{2}.
\end{align*}
\end{lem}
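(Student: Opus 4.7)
The plan is to apply the preceding Ultraviolet Lemma with $\mathcal{K}:=B/(\ln B)^{2}$, which is admissible for $B$ large so that $\mathcal{K}>8\alpha/\pi$. This choice forces the prefactor $1-8\alpha/(\pi\mathcal{K})$ of $H_{B}-\partial_{3}^{2}$ in $\mathfrak{h}_{\mathcal{K}}^{\text{co}}$ to equal the target coefficient $1-8\alpha(\ln B)^{2}/(\pi B)$. Rearranging $\mathfrak{h}_{\mathcal{K}}^{\text{co}}$ so as to pull out the target expression, the inequality reduces to the residual bound
\[
\tfrac{1}{2}(-\partial_{3}^{2})+\tfrac{1}{2}\int_{\Gamma_{\mathcal{K}}}a_{k}^{\dagger}a_{k}\,dk+\tfrac{\sqrt{\alpha}}{2\pi}\int_{\Gamma_{\mathcal{K}}}\frac{a_{k}e^{ik\cdot x}+a_{k}^{\dagger}e^{-ik\cdot x}}{|k|}\,dk\;\geq\;-C(\ln B)^{2}.
\]
A naive completion-of-the-square against the phonon budget alone would incur an error of order $\mathcal{K}=B/(\ln B)^{2}$, so part of the longitudinal kinetic energy $-\partial_{3}^{2}$ must absorb the high-frequency portion of the interaction.

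Introduce a splitting scale $\Lambda:=C_{0}\ln B$, with $C_{0}$ a sufficiently large constant depending only on $\alpha$, and decompose the cut-off Fr\"ohlich interaction as $V_{\mathrm{int}}=V_{<}+V_{>}$ according to $|k_{3}|\leq\Lambda$ and $|k_{3}|>\Lambda$. For $V_{<}$ I complete the square directly against $\tfrac{1}{2}\int_{\Gamma_{\mathcal{K}}\cap\{|k_{3}|\leq\Lambda\}}a_{k}^{\dagger}a_{k}\,dk$, obtaining the pointwise constant $-\tfrac{\alpha}{2\pi^{2}}\int_{\{|k_{3}|\leq\Lambda\}\cap\Gamma_{\mathcal{K}}}|k|^{-2}dk$; evaluating the inner $k_{\perp}$-integral explicitly and using $\mathcal{K}\gg\Lambda$, this is bounded below by $-C\Lambda\ln(\mathcal{K}/\Lambda)\geq-C(\ln B)^{2}$ for our choices.

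For $V_{>}$ I apply the commutator identity $e^{\pm ik\cdot x}=\pm(ik_{3})^{-1}[\partial_{3},e^{\pm ik\cdot x}]$, valid since $|k_{3}|>\Lambda>0$, to rewrite $V_{>}=[\partial_{3},A]$, where $A$ is the Hermitian field operator with coupling $\phi(k,x)=\sqrt{\alpha}\,e^{-ik\cdot x}/(-2\pi ik_{3}|k|)$ restricted to $\{|k_{3}|>\Lambda,\,k\in\Gamma_{\mathcal{K}}\}$. The extra factor $1/k_{3}$ gives $\|\phi\|_{L_{k}^{2}}^{2}\leq C\ln(\mathcal{K}/\Lambda)/\Lambda\leq C/C_{0}$, which can be forced below $1/16$ by taking $C_{0}$ large. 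Then $|\langle\Psi,V_{>}\Psi\rangle|=2|\operatorname{Re}\langle\partial_{3}\Psi,A\Psi\rangle|\leq 4\|\phi\|_{L_{k}^{2}}\,\|\partial_{3}\Psi\|\,\|(\mathcal{N}_{>}+1)^{1/2}\Psi\|$ via the refined bound $\|a^{\sharp}(\phi)\Psi\|\leq\|\phi\|\,\|(\mathcal{N}_{\mathrm{supp}\,\phi}+1)^{1/2}\Psi\|$, and one AM-GM step produces $V_{>}\geq-\tfrac{1}{2}(-\partial_{3}^{2})-\tfrac{1}{2}\mathcal{N}_{>}-\tfrac{1}{2}$, where $\mathcal{N}_{>}:=\int_{\Gamma_{\mathcal{K}}\cap\{|k_{3}|>\Lambda\}}a_{k}^{\dagger}a_{k}\,dk$. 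Summing the two estimates, the reserved $\tfrac{1}{2}(-\partial_{3}^{2})$ and $\tfrac{1}{2}\mathcal{N}_{>}$ cancel the operator parts exactly, leaving only the $-C(\ln B)^{2}$ constant.

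The main obstacle is the $V_{>}$ analysis: one must verify that $A$ genuinely is Hermitian (the creation and annihilation contributions exchange under conjugation with a compensating sign coming from $1/(ik_{3})$), use the refined field-operator bound so that $V_{>}$ is absorbed strictly within the UV-cut phonon budget without touching the reserved $\tfrac{1}{2}\mathcal{N}$, and simultaneously calibrate the two scales $\mathcal{K}=B/(\ln B)^{2}$ and $\Lambda=C_{0}\ln B$ so that the coefficient match on $H_{B}$, the smallness $\|\phi\|\leq 1/4$, and the $-C(\ln B)^{2}$ error size all hold for $B\geq C$.
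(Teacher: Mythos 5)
Your proposal is correct and follows essentially the same route as the paper: the paper also sets $\mathcal{K}=B/(\ln B)^{2}$, splits the interaction at a $k_{3}$-scale of order $\ln B$ (specifically $\mathcal{K}_{3}=16\alpha\ln B/\pi$, which is exactly your $C_{0}\ln B$ with the constant calibrated so the Lieb--Yamazaki bound closes), completes the square against the reserved half of the phonon number operator on the low modes to get the explicit $-C(\ln B)^{2}$ error, and absorbs the high modes into $\tfrac{1}{2}(-\partial_{3}^{2})+\tfrac{1}{2}\mathcal{N}_{>}$ via the commutator trick. The only cosmetic difference is that the paper delegates the high-mode estimate to Lemma 5.2 of Frank--Geisinger (with a noted correction) rather than writing out the field-operator bound as you do.
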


\begin{proof}
Let $\mathcal{K}=B/\left(\ln B\right)^{2}$ and $\mathcal{K}_{3}=16\alpha\left|\ln B\right|/\pi$.
By completion of square
\begin{align}
 & \int_{|k_{3}|\leq\mathcal{K}_{3}}\int_{|k_{\perp}|\leq\mathcal{K}}\left(\frac{1}{2}a_{k}^{\dagger}a_{k}+\frac{\sqrt{\alpha}}{2\pi}\frac{a_{k}}{|k|}e^{ik\cdot x}+\frac{\sqrt{\alpha}}{2\pi}\frac{a_{k}^{\dagger}}{|k|}e^{-ik\cdot x}\right)dk_{\perp}dk_{3}\nonumber \\
 & \geq-\frac{\alpha}{2\pi^{2}}\int_{|k_{3}|\leq\mathcal{K}_{3}}\int_{|k_{\perp}|\leq\mathcal{K}}\frac{1}{|k|^{2}}dk_{\perp}dk_{3}\nonumber \\
 & =-\frac{\alpha}{\pi}\int_{0}^{\mathcal{K}_{3}}\ln\left(\frac{\mathcal{K}^{2}+k_{3}^{2}}{k_{3}^{2}}\right)dk_{3}\nonumber \\
 & =-\frac{\alpha}{\pi}\left(\mathcal{K}_{3}\ln\left(\frac{\mathcal{K}^{2}}{\mathcal{K}_{3}^{2}}+1\right)+2\mathcal{K}\arctan\left(\frac{\mathcal{K}_{3}}{\mathcal{K}}\right)\right)\nonumber \\
 & \geq-\frac{\alpha}{\pi}\mathcal{K}_{3}\left(\ln\left(\frac{\mathcal{K}^{2}}{\mathcal{K}_{3}^{2}}+1\right)+2+\frac{2}{3}\frac{\mathcal{K}_{3}^{2}}{\mathcal{K}^{2}}\right)\nonumber \\
 & \ge-C\left(\ln B\right)^{2},\label{eq:Estimate1}
\end{align}
valid for some constant $C$ and $B\geq C$. The argument corrects
a mistake in the proof of Lemma 5.2 in \cite{Frank-Geisinger}. 

Denoting $\Lambda\left(B\right)=\left\{ \left(k_{\perp},k_{3}\right)\in\Gamma_{\mathcal{K}}:\mathcal{K}_{3}\leq\left|k_{3}\right|\leq\mathcal{K}\ \text{and}\ \left|k_{\perp}\right|\leq\mathcal{K}\right\} $
it can be argued as in the proof of Lemma 5.2 in \cite{Frank-Geisinger}
that for $B$ large
\begin{equation}
\frac{\sqrt{\alpha}}{2\pi}\int_{\Lambda\left(B\right)}\left(\frac{a_{k}}{\left|k\right|}e^{ik\cdot x}+\frac{a_{k}^{\dagger}}{\left|k\right|}e^{-ik\cdot x}\right)dk\geq\frac{1}{2}\partial_{3}^{2}-\frac{1}{2}\left(\int_{\Lambda\left(B\right)}a_{k}^{\dagger}a_{k}dk+\frac{1}{2}\right).\label{eq:Estimate 2}
\end{equation}
The lemma follows from Lemma \ref{Ultraviolet} and the estimates
(\ref{eq:Estimate1}) and (\ref{eq:Estimate 2}). 
\end{proof}
\begin{lem}
\label{Lem: no commute}There exists a constant $C>0$ such that for
$B\geq C$ and all $\Psi\in H_{A}^{1}\left(\mathbb{R}^{3}\right)\otimes\text{dom}\left(\sqrt{\mathcal{N}}\right)$
with $\|\Psi\|=1$ 
\[
\left\langle \Psi,\mathbb{H}(B)\Psi\right\rangle \geq B+\frac{B}{2}\|P_{>}^{B}\Psi\|^{2}+\frac{1}{2}\left\langle \Psi,\,\mathcal{N}\Psi\right\rangle -C\left(\ln B\right)^{2}.
\]
\end{lem}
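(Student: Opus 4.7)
The plan is to combine the preceding lemma with a Cauchy--Schwarz decomposition of the Coulomb singularity across the lowest Landau level, treating the resulting two pieces by Corollary~\ref{Hydrogen} and a diamagnetic Hardy relative bound respectively. All error terms will be shown to be $O((\ln B)^{2})$, and the algebraic accounting is arranged so as to leave enough of the Landau gap in $(1-\epsilon_{B})H_{B}$ to produce the announced $\tfrac{B}{2}\|P_{>}^{B}\Psi\|^{2}$; here and below $\epsilon_{B}:=8\alpha(\ln B)^{2}/(\pi B)$.

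First I would split, by an elementary Cauchy--Schwarz on the $P_{0}^{B}$--$P_{>}^{B}$ cross term,
\begin{equation*}
\beta|x|^{-1}\leq(1+\eta)\,\beta\,P_{0}^{B}|x|^{-1}P_{0}^{B}+(1+\eta^{-1})\,\beta\,P_{>}^{B}|x|^{-1}P_{>}^{B}
\end{equation*}
as a quadratic form, with $\eta=1$. For the $P_{>}^{B}$ piece, the Hydrogen ground-state bound $\beta|x|^{-1}\leq\rho(-\Delta)+\beta^{2}/(4\rho)$ together with the diamagnetic inequality yields, after using that $P_{>}^{B}$ commutes with both $H_{B}$ and $\partial_{3}^{2}$,
\begin{equation*}
(1+\eta^{-1})\,\beta\,\langle P_{>}^{B}\Psi,|x|^{-1}P_{>}^{B}\Psi\rangle\leq\rho\,\langle P_{>}^{B}\Psi,H_{B}P_{>}^{B}\Psi\rangle+\rho\,\|\partial_{3}\Psi\|^{2}+\frac{C\beta^{2}}{\rho}\|P_{>}^{B}\Psi\|^{2}
\end{equation*}
for any $\rho>0$, which I would fix as $\rho=\tfrac{1}{4}$. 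For the $P_{0}^{B}$ piece, combining with the leftover $(\tfrac{1}{2}-\epsilon_{B}-\rho)\|\partial_{3}\Psi\|^{2}$ and applying Corollary~\ref{Hydrogen} with $\tau=(1+\eta)\beta/(\tfrac{1}{2}-\epsilon_{B}-\rho)$ yields a contribution of at least $-C(\ln B)^{2}\|\Psi\|^{2}$.

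The remaining work is bookkeeping. Subtracting $\rho\langle P_{>}^{B}\Psi,H_{B}P_{>}^{B}\Psi\rangle$ from $(1-\epsilon_{B})\langle\Psi,H_{B}\Psi\rangle$, using $H_{B}P_{0}^{B}=BP_{0}^{B}$ and the Landau gap $\langle P_{>}^{B}\Psi,H_{B}P_{>}^{B}\Psi\rangle\geq3B\|P_{>}^{B}\Psi\|^{2}$, one finds on $\|\Psi\|=1$ the lower bound
\begin{equation*}
(1-\epsilon_{B})B+\bigl(2(1-\epsilon_{B})-3\rho\bigr)B\,\|P_{>}^{B}\Psi\|^{2}.
\end{equation*}
With $\rho=\tfrac{1}{4}$ the coefficient of $\|P_{>}^{B}\Psi\|^{2}$ is $\bigl(\tfrac{5}{4}-2\epsilon_{B}\bigr)B$, so after absorbing the Hardy error $C\beta^{2}/\rho$ one still has at least $\tfrac{B}{2}$ available; the constant $(1-\epsilon_{B})B$ contributes $B$ modulo an $O((\ln B)^{2})$ loss from $\epsilon_{B}B$, and the untouched $\tfrac{1}{2}\mathcal{N}$ term survives from the previous lemma. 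Putting everything together gives the claim. The main delicacy is precisely this choice of $\rho$: it must be small enough that $C\beta^{2}/\rho$ remains $O(1)$ in $B$, yet large enough that subtracting $\rho$ of the $P_{>}^{B}$ kinetic energy still leaves a comfortable $\tfrac{B}{2}\|P_{>}^{B}\Psi\|^{2}$ after the Landau-gap accounting; one must also check that $\tfrac{1}{2}-\epsilon_{B}-\rho$ stays bounded away from zero so that Corollary~\ref{Hydrogen} applies with a $B$-independent $\tau$.
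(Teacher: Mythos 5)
Your proof is correct and follows essentially the same route as the paper's: starting from the preceding reduction $\mathbb{H}(B)\geq(1-\epsilon_{B})H_{B}+(\tfrac{1}{2}-\epsilon_{B})(-\partial_{3}^{2})-\beta|x|^{-1}+\tfrac{1}{2}\mathcal{N}-C(\ln B)^{2}$, you split the Coulomb term across the lowest Landau level by Cauchy--Schwarz, absorb the $P_{>}^{B}$ piece into the Landau gap via the diamagnetic Hardy bound, and treat the $P_{0}^{B}$ piece with Corollary~\ref{Hydrogen}. The only (harmless) difference is in the parameter choices: the paper takes the Cauchy--Schwarz weight $\mathcal{A}\sim\sqrt{B}$ so that the coefficient of the projected Coulomb term tends to $\beta$, whereas your fixed $\eta=1$ gives $2\beta$, which is equally sufficient for an $O((\ln B)^{2})$ error.
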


\begin{proof}
Recall by the diamagnetic inequality for $\tau>0$ 
\[
\left\langle \Psi,\left(H_{B}-\partial_{3}^{2}\right)\Psi\right\rangle \geq\tau\left\langle \Psi,\,\beta|x|^{-1}\Psi\right\rangle -4^{-1}\beta^{2}\tau^{2}\|\Psi\|^{2}.
\]
Then with $0<\eta<1$, $\mathcal{A}>1$ and $\theta:=\left(\frac{1}{2}-\frac{8\alpha\left(\ln B\right)^{2}}{\pi B}\right)$
for $B$ large 
\begin{align*}
 & \theta\left\langle \Psi,\left(H_{B}-\partial_{3}^{2}\right)\Psi\right\rangle -\left\langle \Psi,\,\beta|x|^{-1}\Psi\right\rangle \\
 & =\theta\left\langle P_{0}^{B}\Psi,\left(H_{B}-\partial_{3}^{2}\right)P_{0}^{B}\Psi\right\rangle -\left\langle P_{0}^{B}\Psi,\,\beta|x|^{-1}P_{0}^{B}\Psi\right\rangle \\
 & \ \ +\theta\left(1-\eta\right)\left\langle P_{>}^{B}\Psi,\left(H_{B}-\partial_{3}^{2}\right)P_{>}^{B}\Psi\right\rangle +\theta\eta\left\langle P_{>}^{B}\Psi,\left(H_{B}-\partial_{3}^{2}\right)P_{>}^{B}\Psi\right\rangle \\
 & \ \ -\left\langle P_{>}^{B}\Psi,\,\beta|x|^{-1}P_{>}^{B}\Psi\right\rangle -\left\langle P_{0}^{B}\Psi,\,\beta|x|^{-1}P_{>}^{B}\Psi\right\rangle -\left\langle P_{>}^{B}\Psi,\,\beta|x|^{-1}P_{0}^{B}\Psi\right\rangle \\
 & \geq\theta\left\langle P_{0}^{B}\Psi,\left(H_{B}-\partial_{3}^{2}\right)P_{0}^{B}\Psi\right\rangle -\left\langle P_{0}^{B}\Psi,\,\beta|x|^{-1}P_{0}^{B}\Psi\right\rangle \\
 & \ \ +\theta\left(1-\eta\right)\left\langle P_{>}^{B}\Psi,\left(H_{B}-\partial_{3}^{2}\right)P_{>}^{B}\Psi\right\rangle -\mathcal{A}^{2}\beta^{2}\left(4\theta\eta\right)^{-1}\left\Vert P_{>}^{B}\Psi\right\Vert ^{2}\\
 & \ \ +\left(\mathcal{A}-1\right)\left\langle P_{>}^{B}\Psi,\,\beta|x|^{-1}P_{>}^{B}\Psi\right\rangle -2\left\langle P_{0}^{B}\Psi,\,\beta|x|^{-1}P_{0}^{B}\Psi\right\rangle ^{\frac{1}{2}}\left\langle P_{>}^{B}\Psi,\,\beta|x|^{-1}P_{>}^{B}\Psi\right\rangle ^{\frac{1}{2}}\\
 & \geq\theta\left\langle P_{0}^{B}\Psi,H_{B}P_{0}^{B}\Psi\right\rangle +\left[\theta\left\langle P_{0}^{B}\Psi,-\partial_{3}^{2}P_{0}^{B}\Psi\right\rangle -\left(\mathcal{A}/(\mathcal{A}-1)\right)\left\langle P_{0}^{B}\Psi,\,\beta|x|^{-1}P_{0}^{B}\Psi\right\rangle \right]\\
 & \ \ +\theta\left(1-\eta\right)\left\langle P_{>}^{B}\Psi,\left(H_{B}-\partial_{3}^{2}\right)P_{>}^{B}\Psi\right\rangle -\mathcal{A}^{2}\beta^{2}\left(4\theta\eta\right)^{-1}\left\Vert P_{>}^{B}\Psi\right\Vert ^{2}\\
 & \geq\theta B+\left[\theta\left\langle P_{0}^{B}\Psi,-\partial_{3}^{2}P_{0}^{B}\Psi\right\rangle -\left(\mathcal{A}/(\mathcal{A}-1)\right)\left\langle P_{0}^{B}\Psi,\,\beta|x|^{-1}P_{0}^{B}\Psi\right\rangle \right]\\
 & \ \ +\theta B\left\Vert P_{>}^{B}\Psi\right\Vert ^{2}+\left[\theta\left(B-3B\eta\right)-\mathcal{A}^{2}\beta^{2}\left(4\theta\eta\right)^{-1}\right]\left\Vert P_{>}^{B}\Psi\right\Vert ^{2}.
\end{align*}
Choosing $\eta=1/4$ and $\mathcal{A}=\sqrt{B}\theta/2\beta$ for
$B$ large $\left(\mathcal{A}/(\mathcal{A}-1)\right)<2$, and by Corollary
\ref{Hydrogen} there exists some $C>0$ such that for $B\geq C$
\[
\theta\left\langle P_{0}^{B}\Psi,-\partial_{3}^{2}P_{0}^{B}\Psi\right\rangle -\left(\mathcal{A}/(\mathcal{A}-1)\right)\left\langle P_{0}^{B}\Psi,\,\beta|x|^{-1}P_{0}^{B}\Psi\right\rangle \geq-C\left(\ln B\right)^{2}.
\]
The lemma now follows from Lemma 5.5. 
\end{proof}
\noindent The following observation is immediate from Theorem \ref{Upper Bound}:
For every $M>\mathfrak{e}_{_{0}}$ and $B$ large there exist wave
functions $\Psi\in H_{A}^{1}(\mathbb{R}^{3})\otimes\text{dom}\left(\sqrt{\mathcal{N}}\right)$
satisfying
\begin{equation}
\left\langle \Psi,\mathbb{H}(B)\Psi\right\rangle \leq B+M\left(\ln B\right)^{2}\ \ \text{and}\ \ \|\Psi\|=1.\label{eq:U}
\end{equation}

\begin{cor}
\label{Concentration}For every $M\in\mathbb{R}$ there exists a constant
$C_{M}>0$ such that for $B\ge C_{M}$ and all $\Psi\in H_{A}^{1}(\mathbb{R}^{3})\otimes\text{dom}\left(\sqrt{\mathcal{N}}\right)$
satisfying $\left(\ref{eq:U}\right)$
\[
\|P_{>}^{B}\Psi\|^{2}\leq C_{M}\left(\ln B\right)^{2}B^{-1}\ \ \text{and}\ \ \left\langle \Psi,\,\mathcal{N}\Psi\right\rangle \leq C_{M}\left(\ln B\right)^{2}.
\]
\end{cor}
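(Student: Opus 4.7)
The plan is to read off both bounds directly from the lower bound in Lemma \ref{Lem: no commute} by inserting the upper bound hypothesis (\ref{eq:U}). That is, I would combine the two inequalities
\[
B + M(\ln B)^2 \;\geq\; \langle \Psi,\mathbb{H}(B)\Psi\rangle \;\geq\; B + \frac{B}{2}\|P_{>}^{B}\Psi\|^{2} + \frac{1}{2}\langle\Psi,\mathcal{N}\Psi\rangle - C(\ln B)^{2},
\]
valid for $B$ larger than some absolute constant coming from Lemma \ref{Lem: no commute}, to obtain
\[
\frac{B}{2}\|P_{>}^{B}\Psi\|^{2} + \frac{1}{2}\langle\Psi,\mathcal{N}\Psi\rangle \;\leq\; (M+C)(\ln B)^{2}.
\]

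Since the operator $\mathcal{N}$ is nonnegative and $P_{>}^{B}$ is an orthogonal projection (so $\|P_{>}^{B}\Psi\|^{2}\geq 0$), each term on the left-hand side is individually nonnegative. Dropping the second term gives
\[
\|P_{>}^{B}\Psi\|^{2} \;\leq\; 2(M+C)(\ln B)^{2}\,B^{-1},
\]
while dropping the first term gives
\[
\langle\Psi,\mathcal{N}\Psi\rangle \;\leq\; 2(M+C)(\ln B)^{2}.
\]
The corollary then follows by choosing $C_{M}:=2(M+C)$ and taking $B\geq C_{M}$ large enough that Lemma \ref{Lem: no commute} applies.

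I do not expect any genuine obstacle here; the statement is a clean extraction from Lemma \ref{Lem: no commute}. The only thing to verify is that the constant $C$ appearing in Lemma \ref{Lem: no commute} is independent of $\Psi$ and of $M$, which it is, and that $B$ is large enough simultaneously for Lemma \ref{Lem: no commute} to hold and for the hypothesis (\ref{eq:U}) to be achievable (the existence of such $\Psi$ is guaranteed by Theorem \ref{Upper Bound} whenever $M>\mathfrak{e}_{_{0}}$; for $M\leq\mathfrak{e}_{_{0}}$ the hypothesis may be vacuous and the corollary holds trivially). Both conditions are met by enlarging $C_{M}$ if necessary.
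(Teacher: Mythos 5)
Your proposal is correct and is exactly the paper's argument: the paper proves this corollary by the one-line remark that it follows from Lemma \ref{Lem: no commute}, i.e.\ by combining the lower bound there with the hypothesis (\ref{eq:U}) and dropping each of the two nonnegative terms in turn, precisely as you do. The only cosmetic point is that $2(M+C)$ could be nonpositive for very negative $M$ (in which case the hypothesis is unsatisfiable and the claim is vacuous), so one should take $C_{M}=\max\{2(M+C),1\}$ to keep $C_{M}>0$ as stated.
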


\begin{proof}
The corollary follows from Lemma \ref{Lem: no commute}. 
\end{proof}
\begin{lem}
\label{lem:Last Blow}Let $\mathcal{K}>8\alpha/\pi$ and $1<\mathcal{A}<\sqrt{B/\ln B}$.
Denoting $\kappa=1-\left(8\alpha/\pi\mathcal{K}\right)$ for every
$M\in\mathbb{R}$ there exists a constant $C_{M}>0$ such that for
$B\ge C_{M}$ and all $\Psi\in H_{A}^{1}(\mathbb{R}^{3})\otimes\text{dom}\left(\sqrt{\mathcal{N}}\right)$
satisfying (\ref{eq:U})
\begin{align*}
\left\langle \Psi,\mathbb{H}(B)\Psi\right\rangle \geq & \left\langle P_{0}^{B}\Psi,\left(\mathfrak{h}_{\mathcal{K}}^{\text{co}}-\beta\left(1/\left(1-\mathcal{A}^{-1}\right)\right)\left|x\right|^{-1}\right)P_{0}^{B}\Psi\right\rangle +\kappa B\|P_{>}^{B}\Psi\|^{2}\\
 & -C_{M}\left(\ln B\right)^{2}\left(\mathcal{K}B^{-1}+\sqrt{\mathcal{K}B^{-1}}\right)-C_{M}\kappa^{-1}\ln B-1/4.
\end{align*}
\end{lem}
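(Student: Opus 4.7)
The plan is to combine Lemma \ref{Ultraviolet} with the $P_0^B/P_>^B$ decomposition of $\Psi$. The crucial structural observation is that every term in $\mathfrak{h}_{\mathcal{K}}^{\text{co}}$ commutes with $P_0^B$ except the electron--phonon interaction $V_{\mathcal{K}} := \frac{\sqrt{\alpha}}{2\pi}\int_{\Gamma_{\mathcal{K}}}|k|^{-1}(a_k e^{ik\cdot x} + a_k^{\dagger}e^{-ik\cdot x})\,dk$: indeed $H_B$ is diagonal for $P_0^B$, while $-\partial_3^2$, $\int_{\Gamma_{\mathcal{K}}} a_k^{\dagger}a_k\,dk$, and $\mathcal{N}$ all act only on the coordinate $x_3$ or on Fock space, which $P_0^B$ ignores. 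Therefore the three error terms in the statement correspond to three independent estimates: a completion-of-square error on the $P_>^B$-diagonal phonon piece, a Cauchy--Schwarz error on the $P_0^B$--$P_>^B$ interference term of $V_{\mathcal{K}}$, and a diamagnetic error for absorbing the residual Coulomb into the $P_>^B$ kinetic.

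First I would invoke Lemma \ref{Ultraviolet} and then split $-\beta|x|^{-1}$ via AM--GM applied to the cross term with parameter $1/(\mathcal{A}-1)$, giving
\[
-\beta\langle\Psi,|x|^{-1}\Psi\rangle \geq -\frac{\beta}{1-\mathcal{A}^{-1}}\langle P_0^B\Psi,|x|^{-1}P_0^B\Psi\rangle - \beta\mathcal{A}\langle P_>^B\Psi,|x|^{-1}P_>^B\Psi\rangle.
\]
The $P_0^B$-diagonal pieces of $\mathfrak{h}_{\mathcal{K}}^{\text{co}}$ combine with the first term on the right to reproduce the leading expression $\langle P_0^B\Psi,(\mathfrak{h}_{\mathcal{K}}^{\text{co}} - \beta(1-\mathcal{A}^{-1})^{-1}|x|^{-1})P_0^B\Psi\rangle$. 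For the phonon diagonal piece on $P_>^B$, I would complete the square pointwise in $k$, exactly as in the proof of Lemma \ref{Ultraviolet}, to obtain $\frac{1}{2}\int_{\Gamma_{\mathcal{K}}}a_k^{\dagger}a_k\,dk + V_{\mathcal{K}} \geq -(\alpha/(2\pi^2))\int_{\Gamma_{\mathcal{K}}}|k|^{-2}\,dk \geq -C\mathcal{K}$; invoking $\|P_>^B\Psi\|^2 \leq C_M(\ln B)^2/B$ from Corollary \ref{Concentration} converts this into an error of order $C_M(\ln B)^2 \mathcal{K}/B$. For the mixed term $2\,\mathrm{Re}\,\langle P_0^B\Psi, V_{\mathcal{K}}P_>^B\Psi\rangle$, Cauchy--Schwarz combined with the standard bound $\|V_{\mathcal{K}}\Phi\| \leq C\sqrt{\mathcal{K}}\,\|\sqrt{\mathcal{N}+1}\,\Phi\|$ and both estimates of Corollary \ref{Concentration} yields an error of order $C_M(\ln B)^2\sqrt{\mathcal{K}/B}$.

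The main obstacle is the third estimate, where $\beta\mathcal{A}\langle P_>^B\Psi, |x|^{-1}P_>^B\Psi\rangle$ must be absorbed into $\kappa\langle P_>^B\Psi,(H_B-\partial_3^2)P_>^B\Psi\rangle$ while simultaneously reserving exactly $\kappa B\|P_>^B\Psi\|^2$ for the final bound. The diamagnetic inequality yields the operator-form estimate $\beta\mathcal{A}|x|^{-1} \leq \eta(H_B-\partial_3^2) + \eta^{-1}(\beta\mathcal{A})^2$ for any $\eta > 0$; choosing $\eta = 2\kappa/3$ leaves residual kinetic $(\kappa/3)\langle P_>^B\Psi,(H_B-\partial_3^2)P_>^B\Psi\rangle \geq (\kappa/3)\cdot 3B\,\|P_>^B\Psi\|^2 = \kappa B\,\|P_>^B\Psi\|^2$ thanks to the Landau gap $H_B \geq 3B$ on the range of $P_>^B$. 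The corresponding absorption error $(3/(2\kappa))(\beta\mathcal{A})^2\|P_>^B\Psi\|^2$ is then at most $C_M\kappa^{-1}\ln B$ precisely because the hypothesis $\mathcal{A}^2 < B/\ln B$ cancels one $\ln B$ factor against $\|P_>^B\Psi\|^2 \leq C_M(\ln B)^2/B$. Summing the three errors together with the $-1/4$ inherited from Lemma \ref{Ultraviolet} produces the stated lower bound.
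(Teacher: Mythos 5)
Your proposal is correct and follows essentially the same route as the paper: Lemma \ref{Ultraviolet} for the ultraviolet cutoff, the Cauchy--Schwarz/AM--GM splitting of the Coulomb cross term with weight $1/(\mathcal{A}-1)$ (yielding the factor $\mathcal{A}/(\mathcal{A}-1)=1/(1-\mathcal{A}^{-1})$ on the $P_0^B$ block), diamagnetic absorption of $\beta\mathcal{A}|x|^{-1}$ into a $2/3$-fraction of the $P_>^B$ kinetic energy using the Landau gap $3B$, completion of the square for the $P_>^B$-diagonal phonon piece, the $C\sqrt{\mathcal{K}}\|\sqrt{\mathcal{N}+1}\cdot\|$ bound for the mixed interaction terms, and Corollary \ref{Concentration} to convert everything into the three stated error terms. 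The only deviations are immaterial constants (e.g.\ $\eta^{-1}(\beta\mathcal{A})^2$ versus $(\beta\mathcal{A})^2/(4\eta)$), so the argument goes through as the paper's does.
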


\begin{proof}
It can be argued as in the proof of Lemma \ref{Lem: no commute} with
$0<\eta<1$ 
\begin{align*}
 & \kappa\left\langle \Psi,\left(H_{B}-\partial_{3}^{2}\right)\Psi\right\rangle -\left\langle \Psi,\,\beta|x|^{-1}\Psi\right\rangle \\
 & \geq\kappa\left\langle P_{0}^{B}\Psi,\left(H_{B}-\partial_{3}^{2}\right)P_{0}^{B}\Psi\right\rangle -\left(\mathcal{A}/\left(\mathcal{A}-1\right)\right)\left\langle P_{0}^{B}\Psi,\beta|x|^{-1}P_{0}^{B}\Psi\right\rangle \\
 & \ \ +\kappa B\left\Vert P_{>}^{B}\Psi\right\Vert ^{2}+\left[\kappa\left(2B-3B\eta\right)-\mathcal{A}^{2}\beta^{2}\left(4\eta\kappa\right)^{-1}\right]\left\Vert P_{>}^{B}\Psi\right\Vert ^{2}.
\end{align*}
It now follows from Lemma \ref{Ultraviolet} that 
\begin{align*}
\left\langle \Psi,\mathbb{H}(B)\Psi\right\rangle \geq & \left\langle P_{0}^{B}\Psi,\left(\mathfrak{h}_{\mathcal{K}}^{\text{co}}-\beta\left(1/\left(1-\mathcal{A}^{-1}\right)\right)\left|x\right|^{-1}\right)P_{0}^{B}\Psi\right\rangle +\kappa B\|P_{>}^{B}\Psi\|^{2}\\
 & +\left[\kappa\left(2B-3B\eta\right)-\mathcal{A}^{2}\beta^{2}\left(4\eta\kappa\right)^{-1}\right]\left\Vert P_{>}^{B}\Psi\right\Vert ^{2}-\frac{1}{4}\\
 & +\left\langle P_{>}^{B}\Psi,\ \left(\int_{\Gamma_{\mathcal{K}}}a_{k}^{\dagger}a_{k}+\frac{\sqrt{\alpha}}{2\pi}\frac{a_{k}e^{ik\cdot x}}{|k|}+\frac{\sqrt{\alpha}}{2\pi}\frac{a_{k}^{\dagger}e^{-ik\cdot x}}{|k|}dk\right)P_{>}^{B}\Psi\right\rangle \\
 & +\left\langle P_{0}^{B}\Psi,\ \left(\frac{\sqrt{\alpha}}{2\pi}\int_{\Gamma_{\mathcal{K}}}\frac{a_{k}e^{ik\cdot x}}{|k|}+\frac{a_{k}^{\dagger}e^{-ik\cdot x}}{|k|}dk\right)P_{>}^{B}\Psi\right\rangle \\
 & +\left\langle P_{>}^{B}\Psi,\ \left(\frac{\sqrt{\alpha}}{2\pi}\int_{\Gamma_{\mathcal{K}}}\frac{a_{k}e^{ik\cdot x}}{|k|}+\frac{a_{k}^{\dagger}e^{-ik\cdot x}}{|k|}dk\right)P_{0}^{B}\Psi\right\rangle .
\end{align*}
By completion of square 
\begin{align*}
 & \int_{\Gamma_{\mathcal{K}}}\left(a_{k}^{\dagger}a_{k}+\frac{\sqrt{\alpha}}{2\pi}\frac{a_{k}}{|k|}e^{ik\cdot x}+\frac{\sqrt{\alpha}}{2\pi}\frac{a_{k}^{\dagger}}{|k|}e^{-ik\cdot x}\right)dk\\
 & \geq-\frac{\alpha}{4\pi^{2}}\int_{\Gamma_{\mathcal{K}}}\frac{1}{|k|^{2}}\,dk_{\perp}dk_{3}\\
 & =-\frac{\alpha\left(2\ln(2)+\pi\right)}{4\pi}\mathcal{K},
\end{align*}
and by Corollary \ref{Concentration} for $B\ge C_{M}$ 
\begin{align*}
 & \left\langle P_{>}^{B}\Psi,\ \left(\int_{\Gamma_{\mathcal{K}}}a_{k}^{\dagger}a_{k}+\frac{\sqrt{\alpha}}{2\pi}\frac{a_{k}e^{ik\cdot x}}{|k|}+\frac{\sqrt{\alpha}}{2\pi}\frac{a_{k}^{\dagger}e^{-ik\cdot x}}{|k|}dk\right)P_{>}^{B}\Psi\right\rangle \\
 & \geq-\frac{\alpha\left(2\ln(2)+\pi\right)}{4\pi}\mathcal{K}\|P_{>}^{B}\Psi\|^{2}\\
 & \geq-C_{M}\mathcal{K}B^{-1}\left(\ln B\right)^{2}.
\end{align*}

\noindent Furthermore it can be argued as in the proof of Lemma 5.4
in \cite{Frank-Geisinger} and using Corollary \ref{Concentration}
that for $B\geq C_{M}$
\begin{align*}
\left|\left\langle P_{0}^{B}\Psi,\ \left(\frac{\sqrt{\alpha}}{2\pi}\int_{\Gamma_{\mathcal{K}}}\frac{a_{k}}{|k|}e^{ik\cdot x}dk\right)P_{>}^{B}\Psi\right\rangle \right|
\end{align*}
\begin{align*}
 & \leq C\sqrt{\mathcal{K}}\|P_{>}^{B}\Psi\|\|\sqrt{\mathcal{N}+1}P_{0}^{B}\Psi\|\ \ \ \ \ \ \ \ \ \ \ \ \\
 & \leq C_{M}\sqrt{\mathcal{K}B^{-1}}\left(\ln B\right)^{2}.
\end{align*}
The remaining interaction terms are estimated similarly. Choosing
$\eta=2/3$ the lemma follows from Corollary \ref{Concentration}. 
\end{proof}
\begin{prop}
\label{prop: ending proposition} There exists a constant $C>0$ such
that for $B,\kappa$ and $\mathcal{K}$ satisfying $B\geq C$, $C\left(\ln B\right)^{-1/2}\leq\kappa\leq C^{-1}\ln B$,
$\mathcal{K}\geq\sqrt{B}$ and some $1<\mathcal{A}<\sqrt{B/\ln B}$
\begin{align*}
 & P_{0}^{B}\left(\mathfrak{h}_{\mathcal{K}}^{\text{co}}-\beta\left(1/\left(1-\mathcal{A}^{-1}\right)\right)\left|x\right|^{-1}\right)P_{0}^{B}\\
 & \geq\left(\kappa B+\kappa^{-1}\left(\ln B\right)^{2}\mathfrak{e}_{_{0}}-C\kappa^{-1/2}\left(\ln B\right)^{3/2}-C\left(1+\kappa^{-2}\right)\ln B\right)P_{0}^{B}.
\end{align*}
\end{prop}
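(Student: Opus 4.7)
The plan is to project onto the lowest Landau level, apply a Lieb--Thomas completion-of-squares to replace the field-plus-coupling block by a classical Pekar self-interaction of the electron density, reduce to a purely one-dimensional problem via the identities (\ref{eq:uppercut})--(\ref{eq:effective self-interaction}), use Lemma \ref{Lemma: Bathtub} to extract $\delta$-functions of strength $\ln B$ from both the external Coulomb and the self-interaction, and finally rescale $x_{3}$ to identify $\mathcal{E}_{0}$ and $\mathfrak{e}_{0}$.

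Because $P_{0}^{B}$ projects onto $\inf\mathrm{spec}\,H_{B}=B$, one has $P_{0}^{B}(\kappa H_{B})P_{0}^{B}=\kappa B\,P_{0}^{B}$, delivering the leading $\kappa B$. Writing $\Xi:=P_{0}^{B}\Psi$, using $\tilde{\Xi}$ for its density in the $x_{3}$-direction as in Subsection \ref{subsec:Lowest-Landau-Level}, and $\rho_{\Xi}=\|\Xi\|_{\mathcal{F}}^{2}$ for its full density, the Lieb--Thomas strategy of \cite{Lieb Thomas} as adapted in \cite{Frank-Geisinger} yields
\[
\Bigl\langle\Xi,\Bigl(\tfrac{1}{2}\int_{\Gamma_{\mathcal{K}}}a_{k}^{\dagger}a_{k}\,dk+\tfrac{1}{2}\mathcal{N}+\tfrac{\sqrt{\alpha}}{2\pi}\!\!\int_{\Gamma_{\mathcal{K}}}\!\!\bigl(\tfrac{a_{k}e^{ik\cdot x}}{|k|}+\tfrac{a_{k}^{\dagger}e^{-ik\cdot x}}{|k|}\bigr)dk\Bigr)\Xi\Bigr\rangle\geq-\frac{\alpha}{2}\iint\frac{\rho_{\Xi}(x)\rho_{\Xi}(y)}{|x-y|}\,dx\,dy-C\mathcal{K}^{-1}\|\Xi\|^{2},
\]
the UV error being negligible under $\mathcal{K}\geq\sqrt{B}$. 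Since $\Xi\in\mathrm{Range}\,P_{0}^{B}$ factorises transversally through $|\gamma_{B}|^{2}$, identity (\ref{eq:effective self-interaction}) rewrites the double integral as $\tfrac{1}{\sqrt{2}}\iint\tilde{\Xi}(x_{3})^{2}V_{\mathcal{U}}^{B}\!\bigl(\tfrac{x_{3}-y_{3}}{\sqrt{2}}\bigr)\tilde{\Xi}(y_{3})^{2}\,dx_{3}dy_{3}$, while $\langle\Xi,|x|^{-1}\Xi\rangle=\int V_{\mathcal{U}}^{B}(x_{3})\tilde{\Xi}(x_{3})^{2}\,dx_{3}$.

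Lemma \ref{Lemma: Bathtub} bounds $\int V_{\mathcal{U}}^{B}\tilde{\Xi}^{2}$ from above by $(\ln B)\tilde{\Xi}(0)^{2}$ plus the errors $L^{-1}\|\tilde{\Xi}\|_{2}^{2}+\sqrt{L}\|\tilde{\Xi}'\|_{2}^{3/2}\|\tilde{\Xi}\|_{2}^{1/2}+|\mathcal{D}(B,L)|\|\tilde{\Xi}'\|_{2}\|\tilde{\Xi}\|_{2}$; an analogous bathtub-type bound applied in one variable at a time yields $\tfrac{1}{\sqrt{2}}\iint\tilde{\Xi}^{2}V_{\mathcal{U}}^{B}(\tfrac{\cdot}{\sqrt{2}})\tilde{\Xi}^{2}\leq(\ln B)\|\tilde{\Xi}\|_{4}^{4}+\text{errors of the same form}$. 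Combined with $\kappa\|\partial_{3}\Xi\|^{2}\geq\kappa\|\tilde{\Xi}'\|_{2}^{2}$ (H\"older as in Subsection \ref{subsec:Lowest-Landau-Level}), the problem reduces to a lower bound on $\kappa\|\tilde{\Xi}'\|_{2}^{2}-\tfrac{\alpha}{2}(\ln B)\|\tilde{\Xi}\|_{4}^{4}-\tfrac{\beta(\ln B)}{1-\mathcal{A}^{-1}}\tilde{\Xi}(0)^{2}$. The rescaling $\tilde{\Xi}(x_{3})=(\ln B/\kappa)^{1/2}\varphi((\ln B/\kappa)x_{3})$ preserves the $L^{2}$-norm and converts the displayed form into $\kappa^{-1}(\ln B)^{2}$ times the functional $\mathcal{E}_{0}$ of (\ref{eq: Pekar Functional}) with $\beta$ replaced by $\beta/(1-\mathcal{A}^{-1})$; by the variational characterisation of $\mathfrak{e}_{0}$ and the sign $\mathfrak{e}_{0}<0$ this is at least $\kappa^{-1}(\ln B)^{2}\mathfrak{e}_{0}\|P_{0}^{B}\Psi\|^{2}$, with the inflation $\beta\mapsto\beta/(1-\mathcal{A}^{-1})$ contributing at most $O((\ln B)^{2}/\mathcal{A})$, absorbable into the stated error orders by choosing $\mathcal{A}$ appropriately inside $(1,\sqrt{B/\ln B})$. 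Optimising $L=\kappa/\ln B$, noting $|\mathcal{D}(B,L)|=O(\ln\ln B)$, and using $ax^{2}-bx^{3/2}\geq-(27/256)b^{4}/a^{3}$ on the $\sqrt{L}$-error (as in the proof of Corollary \ref{Hydrogen}) then produces the announced $O(\kappa^{-1/2}(\ln B)^{3/2})$ and $O((1+\kappa^{-2})\ln B)$.

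The main obstacle is the Lieb--Thomas step, namely obtaining the \emph{classical Pekar} self-interaction $-\tfrac{\alpha}{2}\iint\rho_{\Xi}\rho_{\Xi}/|x-y|$ in place of the naive pointwise completion-of-squares constant $-\tfrac{\alpha}{4\pi^{2}}\int_{\Gamma_{\mathcal{K}}}|k|^{-2}\,dk$, which would be much too large in absolute value and would spoil the second leading-order term. Once this is in hand the remaining argument is variational and geometric, but tracking the $\kappa$-dependence through the bathtub and rescaling steps---so as to produce a clean $\kappa^{-1}(\ln B)^{2}\mathfrak{e}_{0}$ in the leading term and match the advertised error orders---requires careful bookkeeping.
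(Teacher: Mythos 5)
There is a genuine gap at the central step of your plan. You assert that the Lieb--Thomas strategy yields the \emph{classical Pekar} lower bound
\[
\Bigl\langle\Xi,\Bigl(\tfrac{1}{2}\textstyle\int_{\Gamma_{\mathcal{K}}}a_{k}^{\dagger}a_{k}\,dk+\tfrac{1}{2}\mathcal{N}+\text{interaction}\Bigr)\Xi\Bigr\rangle\geq-\tfrac{\alpha}{2}\iint\rho_{\Xi}\rho_{\Xi}|x-y|^{-1}-C\mathcal{K}^{-1}\|\Xi\|^{2},
\]
but this is precisely what is \emph{not} proved here or in \cite{Frank-Geisinger}; the introduction states explicitly that obtaining the classical approximation (\ref{eq:classic}) as a lower bound is an open improvement. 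What the Lieb--Thomas machinery actually delivers, after the reduction to one dimension (Lemma \ref{lem:Lemma 6.4 in FG}), the block decomposition of mode space into $\mathcal{M}$ intervals, the spatial localization on scale $J$ (Lemma \ref{lem:block estimate}), and the coherent-state completion of squares, is a one-dimensional self-interaction governed by the \emph{artificial} coupling function $\nu(k_{3})$ of (\ref{eq:Coupling Function}) --- not the three-dimensional Coulomb kernel --- together with error terms $\mathcal{M}+\alpha\mathcal{K}_{3}^{2}J^{2}R/(4\pi^{2}\gamma\mathcal{M}^{2})+\|\chi'\|_{2}^{2}J^{-2}$ that, after optimizing $J,\mathcal{M},\gamma,\mathcal{K}_{3},\mathcal{K}_{\perp}$, are exactly the dominant $C\kappa^{-1/2}(\ln B)^{3/2}$ error in the Proposition. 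They are nowhere near $O(\mathcal{K}^{-1})$. The quartic term then comes from the pointwise bound $\nu(k_{3})^{2}\leq2\pi\ln\mathcal{K}_{\perp}$ and Plancherel (Lemma \ref{Lemma: estimate on I}), not from a bathtub estimate on an effective Coulomb self-interaction. Your plan thus collapses the entire analytic content of the proof into one asserted inequality that is stronger than what is known, with an error term that is wrong by several orders in $\ln B$.

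A second, independent error: you claim that $\Xi\in\mathrm{Range}\,P_{0}^{B}$ ``factorises transversally through $|\gamma_{B}|^{2}$,'' and on that basis invoke the exact identities (\ref{eq:uppercut})--(\ref{eq:effective self-interaction}). The lowest Landau level is infinite-dimensional (it contains all angular-momentum sectors; see the kernel (\ref{eq:kernel})), so the transverse density of a general $P_{0}^{B}\Psi$ is not $|\gamma_{B}|^{2}$ and those identities do not apply. This is exactly why the lower bound must go through the bathtub principle of Lemma \ref{Lemma: Bathtub}: the constraints (\ref{eq:1})--(\ref{eq:2}) are all one knows about the transverse density, and the resulting potential is $V_{\mathcal{L}}^{B}$ of (\ref{eq:Bath}), not $V_{\mathcal{U}}^{B}$. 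The identities (\ref{eq:uppercut}) and (\ref{eq:effective self-interaction}) are used only in the \emph{upper} bound, where the trial state is the explicit product $\gamma_{B}(x_{\perp})f_{B}(x_{3})$. The remaining ingredients of your outline (bathtub for the external Coulomb term, the rescaling to $\mathcal{E}_{0}$, the treatment of the $\mathcal{A}$-inflation of $\beta$, and the $ax^{2}-bx^{3/2}$ elementary inequality) do track the paper's argument, but they rest on the two unsupported steps above.
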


\noindent The proof of Proposition \ref{prop: ending proposition}
shall be provided in Subsection \ref{subsec:Proof-of-Proposition}. 
\begin{proof}[Proof of Theorem \ref{thm:Lower Bound}.]
Fix $M>\mathfrak{e}_{_{0}}$. For $B$ large by Theorem \ref{Upper Bound}
there exist wave functions satisfying (\ref{eq:U}). It suffices to
argue the desired lower bound on $\left\langle \Psi,\mathbb{H}(B)\Psi\right\rangle $
with those wave functions. By Lemma \ref{lem:Last Blow} and Proposition
\ref{prop: ending proposition} 
\begin{align*}
 & \left\langle \Psi,\mathbb{H}(B)\Psi\right\rangle \\
 & \geq\kappa B+\left(\kappa^{-1}\left(\ln B\right)^{2}\mathfrak{e}_{_{0}}-C\kappa^{-1/2}\left(\ln B\right)^{3/2}-C\left(1+\kappa^{-2}\right)\ln B\right)\left\Vert P_{0}^{B}\Psi\right\Vert ^{2}\\
 & \ \ -C\left(\ln B\right)^{2}\left(\mathcal{K}B^{-1}+\sqrt{\mathcal{K}B^{-1}}\right)-C\kappa^{-1}\ln B-C
\end{align*}
with $\kappa=1-\left(8\alpha/\pi\mathcal{K}\right)$. Choosing $\mathcal{K}=B\left(\ln B\right)^{-4/3}$
and since $\left\Vert P_{0}^{B}\Psi\right\Vert \leq1$ 
\[
\left\langle \Psi,\mathbb{H}(B)\Psi\right\rangle \geq B+\mathfrak{e}_{_{0}}\left(\ln B\right)^{2}-C\left(\ln B\right)^{3/2}
\]
which is the claimed lower bound. 
\end{proof}

\subsection{\label{subsec:Proof-of-Proposition}Proof of Proposition \ref{prop: ending proposition}.}

\subsubsection{Reduction to one dimension}

In \cite{Frank-Geisinger} the authors consider a one-dimensional
Hamiltonian with $0<\mathcal{K}_{3}\leq\mathcal{K}$ and $1\leq\mathcal{K}_{\perp}\leq\mathcal{K}$
\begin{align*}
 & \mathfrak{h}^{1\text{d}}:=\\
 & \kappa_{1}\left(-\partial_{3}^{2}\right)+\int_{|k_{3}|\leq\mathcal{K}_{3}}\widehat{a}_{k_{3}}^{\dagger}\widehat{a}_{k_{3}}dk_{3}+\frac{\sqrt{\alpha}}{2\pi}\int_{|k_{3}|\leq\mathcal{K}_{3}}\nu\left(k_{3}\right)\left(\widehat{a}_{k_{3}}e^{ik_{3}x_{3}}+\widehat{a}_{k_{3}}^{\dagger}e^{-ik_{3}x_{3}}\right)dk_{3}
\end{align*}
acting on $L^{2}(\mathbb{R}^{3})\otimes\mathcal{F}$ with $\kappa_{1}:=\kappa-\left(8\alpha\slash\pi\mathcal{K}_{3}\right)\int_{0}^{\infty}\left(1+t\right)^{-1}\exp\left(-t\mathcal{K}_{3}^{2}\slash2B\right)\,dt$
and $\kappa$ as in the statement of Proposition \ref{prop: ending proposition},
and 
\[
\widehat{a}_{k_{3}}:=\frac{1}{\nu\left(k_{3}\right)}\int_{1\leq\left|k_{\perp}\right|\leq\mathcal{K}_{\perp}}\frac{a_{k}}{\left|k\right|}e^{ik_{\perp}\cdot x_{\perp}}dk_{\perp}
\]
with 
\begin{equation}
\nu(k_{3}):=\left(\int_{1\leq\left|k_{\perp}\right|\leq\mathcal{K}_{\perp}}\left|k\right|^{-2}dk_{\perp}\right)^{\frac{1}{2}}=\sqrt{\pi}\left(\ln\left(\mathcal{K}_{\perp}^{2}+k_{3}^{2}\right)-\ln\left(1+k_{3}^{2}\right)\right)^{\frac{1}{2}}\label{eq:Coupling Function}
\end{equation}
and satisfying $[\widehat{a}_{k_{3}},\widehat{a}_{k_{3}^{\prime}}^{\dagger}]=\delta\left(k_{3}-k_{3}^{\prime}\right)$
and $\left[\widehat{a}_{k_{3}},\widehat{a}_{k_{3}^{\prime}}\right]=[\widehat{a}_{k_{3}}^{\dagger},\widehat{a}_{k_{3}^{\prime}}^{\dagger}]=0$
for $k_{3},k_{3}^{\prime}\in\mathbb{R}$. 
\begin{lem}
\label{lem:Lemma 6.4 in FG}Denoting $\kappa_{2}=\kappa-2\alpha\pi^{-1}\mathcal{K}_{3}\mathcal{K}_{\perp}^{-2}$
\begin{align*}
 & P_{0}^{B}\left(\mathfrak{h}_{\mathcal{K}}^{\text{co}}-\beta\left(1/\left(1-\mathcal{A}^{-1}\right)\right)\left|x\right|^{-1}\right)P_{0}^{B}\\
 & \geq\kappa_{2}BP_{0}^{B}+P_{0}^{B}\left(\mathfrak{h}^{1\text{d}}-\beta\left(1/\left(1-\mathcal{A}^{-1}\right)\right)P_{0}^{B}\left|x\right|^{-1}P_{0}^{B}\right)P_{0}^{B}-\left(1+\frac{\alpha}{2}\right)P_{0}^{B}.
\end{align*}
\end{lem}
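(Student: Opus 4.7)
The plan is to decompose the integration region $\Gamma_{\mathcal{K}}$ into the ``one-dimensional'' block $G:=\{(k_{\perp},k_{3})\in\Gamma_{\mathcal{K}}:1\leq|k_{\perp}|\leq\mathcal{K}_{\perp},\,|k_{3}|\leq\mathcal{K}_{3}\}$, on which $\mathfrak{h}^{1\mathrm{d}}$ is supported, together with three complementary pieces $G_{1}:=\{|k_{\perp}|<1\}\cap\Gamma_{\mathcal{K}}$, $G_{2}:=\{|k_{\perp}|>\mathcal{K}_{\perp}\}\cap\Gamma_{\mathcal{K}}$, and $G_{3}:=\{1\leq|k_{\perp}|\leq\mathcal{K}_{\perp},\,\mathcal{K}_{3}<|k_{3}|\leq\mathcal{K}\}$. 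Since $P_{0}^{B}$ is a projection, the Coulomb contributions on the two sides of the claimed inequality coincide after sandwiching and cancel, so it suffices to compare the kinetic and phononic parts. I would then use $P_{0}^{B}H_{B}P_{0}^{B}=BP_{0}^{B}$ and $[P_{0}^{B},\partial_{3}]=0$ to split $\kappa BP_{0}^{B}=\kappa_{2}BP_{0}^{B}+(\kappa-\kappa_{2})BP_{0}^{B}$ and $\kappa(-\partial_{3}^{2})=\kappa_{1}(-\partial_{3}^{2})+(\kappa-\kappa_{1})(-\partial_{3}^{2})$, reserving the two surpluses for absorbing the bad-region phonon pieces.

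The identification of the $G$-part of $\mathfrak{h}_{\mathcal{K}}^{\mathrm{co}}$ with $\mathfrak{h}^{1\mathrm{d}}$ is a direct calculation from the definition of $\widehat{a}_{k_{3}}$: the interaction piece restricted to $G$ equals $\frac{\sqrt{\alpha}}{2\pi}\int_{|k_{3}|\leq\mathcal{K}_{3}}\nu(k_{3})(\widehat{a}_{k_{3}}e^{ik_{3}x_{3}}+\mathrm{h.c.})dk_{3}$. For the number-operator comparison, Cauchy--Schwarz applied to the weighted average defining $\widehat{a}_{k_{3}}$, together with unitarity of $e^{ik_{\perp}\cdot x_{\perp}}$ and the normalization $\nu(k_{3})^{2}=\int_{1\leq|k_{\perp}|\leq\mathcal{K}_{\perp}}|k|^{-2}dk_{\perp}$, yields the form inequality $\int_{|k_{3}|\leq\mathcal{K}_{3}}\widehat{a}_{k_{3}}^{\dagger}\widehat{a}_{k_{3}}dk_{3}\leq\int_{G}a_{k}^{\dagger}a_{k}dk$. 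Combining this with $\tfrac{1}{2}\int_{\Gamma_{\mathcal{K}}}a_{k}^{\dagger}a_{k}dk+\tfrac{1}{2}\mathcal{N}\geq\int_{\Gamma_{\mathcal{K}}}a_{k}^{\dagger}a_{k}dk$ extracts a clean copy of $\mathfrak{h}^{1\mathrm{d}}$ from the left-hand side and leaves, on each bad region $G_{i}$, the term $\int_{G_{i}}a_{k}^{\dagger}a_{k}dk$ together with its associated phonon interaction.

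Each leftover is then estimated by a completion of square. On $G_{1}$ a unit-weight completion yields $-\frac{\alpha}{4\pi^{2}}\int_{G_{1}}|k|^{-2}dk\geq-\alpha/2$ from the elementary bound $\int_{|k_{\perp}|<1}\int_{\mathbb{R}}(|k_{\perp}|^{2}+k_{3}^{2})^{-1}dk_{3}dk_{\perp}\leq2\pi^{2}$, accounting for the $-\alpha/2$ in the claimed error. On $G_{2}$ the transverse surplus $(\kappa-\kappa_{2})BP_{0}^{B}=\frac{2\alpha\mathcal{K}_{3}B}{\pi\mathcal{K}_{\perp}^{2}}P_{0}^{B}$ pays for the interaction through a completion of square exploiting the Gaussian form factor $\|P_{0}^{B}e^{ik_{\perp}\cdot x_{\perp}}P_{0}^{B}\|\lesssim e^{-|k_{\perp}|^{2}/(4B)}$; the $\mathcal{K}_{3}\mathcal{K}_{\perp}^{-2}$ in $\kappa-\kappa_{2}$ emerges from the Gaussian integral over $|k_{\perp}|>\mathcal{K}_{\perp}$ paired with $|k_{3}|\leq\mathcal{K}_{3}$. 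On $G_{3}$ the longitudinal surplus $(\kappa-\kappa_{1})(-\partial_{3}^{2})$ pays for its interaction via a resolvent representation of $(-\partial_{3}^{2})[(-\partial_{3}^{2})+k_{3}^{2}]^{-1}$; the specific form $\kappa-\kappa_{1}=\frac{8\alpha}{\pi\mathcal{K}_{3}}\int_{0}^{\infty}(1+t)^{-1}e^{-t\mathcal{K}_{3}^{2}/(2B)}dt$ is engineered precisely so that, after the square is completed, the residual matches the Gaussian transverse form factor of $P_{0}^{B}$.

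The main obstacle will be this final $G_{3}$ step: reproducing the prescribed integral expression for $\kappa-\kappa_{1}$ on the nose, rather than merely up to a constant, requires carefully tracking the resolvent representation of the longitudinal kinetic operator against the Gaussian transverse profile from $P_{0}^{B}$. Once that calculation is in place, the residual constant $-1$ in $-(1+\alpha/2)P_{0}^{B}$ absorbs the overhead from one of the completion-of-square steps on $G_{2}$ or $G_{3}$, and the three bounds assemble directly to yield the claimed inequality.
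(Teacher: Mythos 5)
Your overall strategy is the same one the paper relies on: the paper's proof of this lemma is a one-line citation of Lemmas 6.1--6.4 in Frank--Geisinger, and your decomposition of $\Gamma_{\mathcal{K}}$ into the block $G$ carrying $\mathfrak{h}^{1\text{d}}$ plus the three complementary regions, the Cauchy--Schwarz comparison $\int_{|k_{3}|\leq\mathcal{K}_{3}}\widehat{a}_{k_{3}}^{\dagger}\widehat{a}_{k_{3}}\,dk_{3}\leq\int_{G}a_{k}^{\dagger}a_{k}\,dk$, and the completion of the square on $G_{1}$ giving the $-\alpha/2$ are exactly the content of those lemmas. One genuine point in your favor: your treatment of $G_{2}$ via the projected form factor $\|P_{0}^{B}e^{ik_{\perp}\cdot x_{\perp}}P_{0}^{B}\|\leq e^{-|k_{\perp}|^{2}/(4B)}$ is a legitimate completion-of-the-square argument (the cross terms become $a_{k}T_{k}+a_{k}^{\dagger}T_{k}^{\dagger}$ with $T_{k}=P_{0}^{B}e^{ik\cdot x}P_{0}^{B}$, and $a_{k}$ commutes with $T_{k}$), it lands within the budget $(\kappa-\kappa_{2})B=2\alpha\pi^{-1}\mathcal{K}_{3}\mathcal{K}_{\perp}^{-2}B$, and it sidesteps precisely the flawed Lieb--Yamazaki vector operator in Frank--Geisinger's Lemma 6.3 that the paper itself flags as needing repair.

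The genuine gap is the $G_{3}$ step, which you name as ``the main obstacle'' but do not carry out, and it cannot be waved through: a bare completion of the square on $G_{3}$ costs $\frac{\alpha}{4\pi^{2}}\int_{G_{3}}\|T_{k}\|^{2}|k|^{-2}\,dk$, and since $e^{-|k_{\perp}|^{2}/(2B)}$ is of order one on $1\leq|k_{\perp}|\leq\mathcal{K}_{\perp}=B^{1/2}$, this integral is of order $\mathcal{K}_{\perp}=B^{1/2}$ --- catastrophically larger than the allowed error. What is indispensable here is the Lieb--Yamazaki commutator identity $e^{ik\cdot x}=k_{3}^{-1}\left([-i\partial_{3},e^{ik\cdot x}]\right)$, which trades a power of $k_{3}^{-1}$ for the operator $-i\partial_{3}$ and converts the cost into $\epsilon(-\partial_{3}^{2})+C\epsilon^{-1}\int_{|k_{3}|>\mathcal{K}_{3}}k_{3}^{-2}\left(\int\frac{e^{-|k_{\perp}|^{2}/(2B)}}{|k_{\perp}|^{2}+k_{3}^{2}}dk_{\perp}\right)dk_{3}$; the substitution $|k_{\perp}|^{2}=tk_{3}^{2}$ in the inner integral is what produces the exact expression $\frac{8\alpha}{\pi\mathcal{K}_{3}}\int_{0}^{\infty}(1+t)^{-1}e^{-t\mathcal{K}_{3}^{2}/(2B)}\,dt$ for $\kappa-\kappa_{1}$. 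Your sketch gestures at ``a resolvent representation of $(-\partial_{3}^{2})[(-\partial_{3}^{2})+k_{3}^{2}]^{-1}$'' but does not perform this computation, and since $\kappa_{1}$ appears explicitly in the definition of $\mathfrak{h}^{1\text{d}}$ in the statement being proved, reproducing that constant is not optional. As written, the proposal establishes the easy three quarters of the lemma and leaves the quarter that actually requires an idea beyond completing the square.
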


\begin{proof}
The lemma follows from Lemma 6.1, Lemma 6.2, Lemma 6.3 and Lemma 6.4
in \cite{Frank-Geisinger}. The proof of Lemma 6.3 in \cite{Frank-Geisinger}
uses an incorrect vector operator for cutting off high modes in the
$k_{\perp}$-direction cf. \cite{Lieb-Yamazaki}; the mistake can
be fixed, and the lemma stands true. 
\end{proof}

\subsubsection{Localization and decomposition.}

In \cite{Frank-Geisinger} the authors decompose the mode space into
$\mathcal{M}$ intervals, indexed with $b$, each of length $2\mathcal{K}_{3}/\mathcal{M}$
and consider for $u\in\mathbb{R}$ and $0<\gamma<1$ the block Hamiltonian
\[
h_{\gamma}^{(u)}:=\kappa_{1}\left(-\partial_{3}^{2}\right)+\sum_{b}\left[\left(1-\gamma\right)A_{b}^{(u)*}A_{b}^{(u)}+\frac{\sqrt{\alpha}}{2\pi}V(b)\left(A_{b}^{(u)}e^{ik_{b}x_{3}}+A_{b}^{(u)*}e^{-ik_{b}x_{3}}\right)\right]
\]
with $k_{b}$ a mode in block $b$ and the block creation and annihilation
operators $A_{b}^{(u)*}$ and $A_{b}^{(u)}$ acting on $\mathcal{F}\left(L^{2}\left(\mathbb{R}^{3}\right)\right)$,
where 
\[
A_{b}^{(u)}:=\frac{1}{V(b)}\int_{b}\nu\left(k_{3}\right)e^{i\left(k_{3}-k_{b}\right)u}\hat{a}_{k_{3}}dk_{3}
\]
with $V\left(b\right):=\left(\int_{b}\nu\left(k_{3}\right)^{2}dk_{3}\right)^{\frac{1}{2}}$.
Furthermore
\[
[A_{b}^{(u)},\,A_{b^{\prime}}^{(u)*}]=\delta_{bb^{\prime}},\ [A_{b}^{(u)},\,A_{b^{\prime}}^{(u)}]=[A_{b}^{(u)*},\,A_{b^{\prime}}^{(u)*}]=0\ \ \text{for all blocks}\ b,b^{\prime}.
\]

\begin{lem}
\label{lem:block estimate}For $\chi\in C_{0}^{\infty}\left(\mathbb{R}\right),\|\chi\|_{2}=1$
a nonnegative function supported on the interval $[-1/2,1/2]$ and
for $J>0$, denoting $\chi_{u}^{J}(x_{3})=J^{-1/2}\chi\left(J^{-1}\left(x_{3}-u\right)\right)$,

\begin{align*}
 & \mathfrak{h}^{1\text{d}}-\beta\left(1/\left(1-\mathcal{A}^{-1}\right)\right)P_{0}^{B}\left|x\right|^{-1}P_{0}^{B}\\
 & \geq\int_{\mathbb{R}}\chi_{u}^{J}\left[h_{\gamma}^{(u)}-\beta\left(1/\left(1-\mathcal{A}^{-1}\right)\right)P_{0}^{B}\left|x\right|^{-1}P_{0}^{B}\right]\chi_{u}^{J}\,du-\frac{\alpha\mathcal{K}_{3}^{2}J^{2}}{4\pi^{2}\gamma\mathcal{M}^{2}}R-\|\chi^{\prime}\|_{2}^{2}J^{-2}
\end{align*}
with 
\[
R:=\int_{|k_{3}|\leq\mathcal{K}_{3}}\nu\left(k_{3}\right)^{2}dk_{3}=\pi\int_{|k_{3}|\leq\mathcal{K}_{3}}\left(\ln\left(\mathcal{K}_{\perp}+k_{3}^{2}\right)-\ln\left(1+k_{3}^{2}\right)\right)dk_{3}.
\]
\end{lem}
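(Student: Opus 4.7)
The plan is to combine an IMS-type spatial localization in $x_3$ with a block-wise completion of square for the phonon field. Using the normalization $\int_{\mathbb{R}}\chi_u^J(x_3)^2\,du=1$ and the standard IMS identity one has
\[
-\partial_3^2=\int_{\mathbb{R}}\chi_u^J(-\partial_3^2)\chi_u^J\,du-\|\chi'\|_2^2J^{-2},
\]
which produces the $-\|\chi'\|_2^2J^{-2}$ error (absorbing the harmless factor $\kappa_1\leq 1$). Since $P_0^B\left|x\right|^{-1}P_0^B$ acts as multiplication on the $x_3$-coordinate by an operator-valued function of $x_3$ (cf.~(\ref{eq:uppercut})), it commutes with the scalar multiplication operator $\chi_u^J$ and therefore contributes identically to both sides of the claimed inequality.

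Next, within each block $b$ and pointwise in $u$, I would use the orthogonal decomposition
\[
\hat{a}_{k_3}=\frac{\nu(k_3)\,e^{-i(k_3-k_b)u}}{V(b)}\,A_b^{(u)}+\hat{a}_{k_3}^\perp,\qquad k_3\in b,
\]
where $\hat{a}_{k_3}^\perp$ commutes with $A_b^{(u)*}$. This gives the identity $\int_b\hat{a}_{k_3}^\dagger\hat{a}_{k_3}\,dk_3=A_b^{(u)*}A_b^{(u)}+\int_b\hat{a}_{k_3}^{\perp\dagger}\hat{a}_{k_3}^\perp\,dk_3$ together with
\[
\int_b\nu(k_3)\,\hat{a}_{k_3}\,e^{ik_3x_3}\,dk_3=V(b)\,A_b^{(u)}\,e^{ik_bx_3}\,f_b(x_3-u)+\int_b\nu(k_3)\,\hat{a}_{k_3}^\perp\,e^{ik_3x_3}\,dk_3,
\]
with $f_b(y):=V(b)^{-2}\int_b\nu(k_3)^2 e^{i(k_3-k_b)y}\,dk_3$ and $f_b(0)=1$. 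After integrating against $\chi_u^J\cdots\chi_u^J\,du$, the main terms $V(b)A_b^{(u)}e^{ik_bx_3}$ and $(1-\gamma)A_b^{(u)*}A_b^{(u)}$ reassemble into $h_\gamma^{(u)}$, leaving a phase-error residual proportional to $(f_b(x_3-u)-1)A_b^{(u)}$ plus an orthogonal-mode residual built from $\hat{a}_{k_3}^\perp$.

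The $\gamma$ units of $A_b^{(u)*}A_b^{(u)}$ reserved when passing from $A_b^{(u)*}A_b^{(u)}$ to $(1-\gamma)A_b^{(u)*}A_b^{(u)}$ are used for a completion of square against the phase error:
\[
\gamma A_b^{(u)*}A_b^{(u)}+\frac{\sqrt{\alpha}}{2\pi}V(b)(f_b(x_3-u)-1)\bigl(A_b^{(u)}e^{ik_bx_3}+A_b^{(u)*}e^{-ik_bx_3}\bigr)\geq-\frac{\alpha V(b)^2\,|f_b(x_3-u)-1|^2}{4\pi^2\gamma}.
\]
Cauchy--Schwarz inside the block together with $|e^{i(k_3-k_b)(x_3-u)}-1|\leq|k_3-k_b|\,|x_3-u|\leq(\mathcal{K}_3/\mathcal{M})(J/2)$ on $\mathrm{supp}\,\chi_u^J$, followed by $\sum_b V(b)^2=R$ and integration in $u$, produces the stated error $-\alpha\mathcal{K}_3^2J^2R/(4\pi^2\gamma\mathcal{M}^2)$. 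The orthogonal-mode cross terms are handled by a second completion of square against $\int_b\hat{a}_{k_3}^{\perp\dagger}\hat{a}_{k_3}^\perp\,dk_3$; one checks that the corresponding error carries at most the same power of $\mathcal{K}_3J/\mathcal{M}$ and can be absorbed into the same $R$-error by renaming constants.

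The main obstacle is the second-quantized bookkeeping: because $A_b^{(u)}$ genuinely depends on $u$, the decomposition and both completions of square must be carried out pointwise in $u$ before conjugating with $\chi_u^J\cdots\chi_u^J$ and integrating, and one has to verify that the cross terms involving $\hat{a}_{k_3}^\perp$ recombine with $\int_b\hat{a}_{k_3}^{\perp\dagger}\hat{a}_{k_3}^\perp\,dk_3$ into a nonnegative operator up to the stated error. Once this is in place the desired operator inequality assembles from the pieces above.
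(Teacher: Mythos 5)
Your argument is, in substance, the Lieb--Thomas localization argument that the paper does not reproduce at all: its entire proof of this lemma is the citation ``follows from Lemma 6.5 in [Frank--Geisinger],'' and that cited proof is the IMS identity for $-\partial_3^2$, the observation that the number operator, the interaction and $P_0^B|x|^{-1}P_0^B$ commute with multiplication by $\chi_u^J$ so that $\int(\chi_u^J)^2\,du=1$ reproduces them exactly, and then a single completion of the square, per block and pointwise in $u$, of the full phase error $\frac{\sqrt{\alpha}}{2\pi}\int_b\nu(k_3)\bigl(e^{ik_3x_3}-e^{i(k_3-k_b)u}e^{ik_bx_3}\bigr)\hat a_{k_3}\,dk_3+\mathrm{h.c.}$ against $\gamma\int_b\hat a_{k_3}^{\dagger}\hat a_{k_3}\,dk_3$, using $\int_b\hat a_{k_3}^{\dagger}\hat a_{k_3}\,dk_3\geq A_b^{(u)*}A_b^{(u)}$ to discard the orthogonal modes for free. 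Your variant instead splits $\hat a_{k_3}$ into its component along the block mode and an orthogonal remainder and performs two separate completions of the square. That is a legitimate alternative and all the algebra you state (the vanishing cross terms in the number operator, the identity with $f_b$, the commutation of $\hat a_{k_3}^{\perp}$ with $A_b^{(u)*}$) is correct; the single-square route is simply shorter because the orthogonal modes never need to be examined.

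The one step you should not wave through is the orthogonal-mode error. After completing the square against $\int_b\hat a_{k_3}^{\perp\dagger}\hat a_{k_3}^{\perp}\,dk_3$, the error per block is $\frac{\alpha}{4\pi^2}$ times the squared $L^2(b)$-norm of the component of $\nu(k_3)e^{ik_3x_3}$ orthogonal to the block mode, i.e. $\frac{\alpha}{4\pi^2}V(b)^2\bigl(1-|f_b(x_3-u)|^2\bigr)$. The naive bound $1-|f_b|^2\leq 2|1-f_b|\leq \mathcal{K}_3 J/\mathcal{M}$ is only first order in $\mathcal{K}_3J/\mathcal{M}$ and would \emph{not} give an error of the stated form $\mathcal{K}_3^2J^2/\mathcal{M}^2$. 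What saves the argument is that $|f_b|\geq\operatorname{Re}f_b$ and $1-\operatorname{Re}f_b(y)=V(b)^{-2}\int_b\nu^2\,(1-\cos((k_3-k_b)y))\,dk_3\leq \tfrac12(\mathcal{K}_3/\mathcal{M})^2(J/2)^2$ is genuinely second order, whence $1-|f_b|^2\leq 2(1-\operatorname{Re}f_b)=O(\mathcal{K}_3^2J^2/\mathcal{M}^2)$ and the two errors combine (using $1+\gamma^{-1}\leq 2\gamma^{-1}$) into the stated bound. With that cancellation made explicit your proof closes; as written, ``one checks that the corresponding error carries at most the same power'' is exactly the point that needs checking.
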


\begin{proof}
The lemma follows from Lemma 6.5 in \cite{Frank-Geisinger}. 
\end{proof}

\subsubsection{Error estimates}

Similarly as in \cite{Frank-Geisinger}, \cite{Ghanta} and \cite{Lieb Thomas}
representing the block creation and annihilation operators by coherent
state integrals and completing the square it follows for a suitably
chosen $k_{b}$ that 
\begin{equation}
h_{\gamma}^{(u)}-\beta\left(1/\left(1-\mathcal{A}^{-1}\right)\right)P_{0}^{B}\left|x\right|^{-1}P_{0}^{B}\geq I-\mathcal{M},\label{eq:6.10 in =00003D00005BFG=00003D00005D}
\end{equation}
where 
\begin{align*}
I:=\inf_{\|\phi\|_{2}=1}\left[\kappa_{1}\|\partial_{3}\phi\|_{_{2}}^{2}-\frac{\alpha}{4\pi^{2}\left(1-\gamma\right)}\right. & \int_{\mathbb{R}}\nu\left(k_{3}\right)^{2}\left|\int_{\mathbb{R}^{3}}e^{ik_{3}x_{3}}\left|\phi\left(x_{\perp},x_{3}\right)\right|^{2}dx\right|^{2}dk_{3}\\
 & \left.-\frac{\beta}{1-\mathcal{A}^{-1}}\int_{\mathbb{R}^{3}}\left|x\right|^{-1}\left|\left(P_{0}^{B}\phi\right)\left(x_{\perp},x_{3}\right)\right|^{2}dx\right].
\end{align*}
Combining (\ref{eq:6.10 in =00003D00005BFG=00003D00005D}) with Lemma
\ref{lem:block estimate} 
\[
\mathfrak{h}^{1\text{d}}-\beta\left(1/\left(1-\mathcal{A}^{-1}\right)\right)P_{0}^{B}\left|x\right|^{-1}P_{0}^{B}\geq I-\mathcal{M}-\frac{\alpha\mathcal{K}_{3}^{2}J^{2}}{4\pi^{2}\gamma\mathcal{M}^{2}}R-\|\chi^{\prime}\|_{2}^{2}J^{-2}.
\]
Now it follows from Lemma \ref{lem:Lemma 6.4 in FG} that for some
constant $C>0$
\begin{align}
 & P_{0}^{B}\left(\mathfrak{h}_{\mathcal{K}}^{\text{co}}-\beta\left(1/\left(1-\mathcal{A}^{-1}\right)\right)\left|x\right|^{-1}\right)P_{0}^{B}\nonumber \\
 & \geq\kappa BP_{0}^{B}+IP_{0}^{B}-C\left(\frac{\mathcal{K}_{3}B}{\mathcal{K}_{\perp}^{2}}+\mathcal{M}+\frac{\mathcal{K}_{3}^{2}J^{2}}{\gamma\mathcal{M}^{2}}R+\frac{1}{J^{2}}\right)P_{0}^{B}.\label{standoff}
\end{align}

\begin{lem}
\label{Lemma: estimate on I}For any $L>0$ and $\epsilon>0$ and
with $\mathcal{D}(B,L)$ as given in (\ref{eq:D(B,L)}) assuming 
\begin{equation}
\frac{2\ln\mathcal{K}_{\perp}}{1-\gamma}=\frac{\mu(B)}{1-\mathcal{A}^{-1}}\ \ \text{and}\ \ \tilde{\kappa}_{1}:=\kappa_{1}-\frac{4\beta\epsilon\left|\mathcal{D}(B,L)\right|\ln\mathcal{K}_{\perp}}{\mu(B)\left(1-\gamma\right)}>0\label{assumptions}
\end{equation}
with $\mu(B):=\ln B-2\ln\ln B$, 
\[
I\geq\frac{4\left(\ln\mathcal{K}_{\perp}\right)^{2}\mathfrak{e}_{_{0}}}{\tilde{\kappa}_{1}\left(1-\gamma\right)^{2}}-\frac{2\beta\ln\mathcal{K}_{\perp}}{\mu(B)\left(1-\gamma\right)}\left(\frac{432L^{2}}{\left|\mathcal{D}(B,L)\right|^{3}\epsilon^{3}}+\frac{1}{L}+\frac{\left|\mathcal{D}(B,L)\right|}{\epsilon}\right).
\]
\end{lem}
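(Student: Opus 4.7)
The plan is to estimate each of the three terms in the definition of $I$ separately and then recognize the remainder as a rescaled version of the one-dimensional minimization problem in (\ref{eq: p}). Write $f := \|\partial_3 \phi\|_2$ and $\tilde\phi(x_3) := \bigl(\int_{\mathbb{R}^2} |\phi(x_\perp,x_3)|^2 dx_\perp\bigr)^{1/2}$, so $\|\tilde\phi\|_2 = 1$ and $\|\partial_3\tilde\phi\|_2 \leq f$ by H\"older. For the electron-phonon term, $\nu(k_3)^2 = \pi[\ln(\mathcal{K}_\perp^2 + k_3^2) - \ln(1+k_3^2)]$ is maximized at $k_3=0$ with value $2\pi\ln\mathcal{K}_\perp$, and since the inner integral is a Fourier transform of the marginal density $\tilde\phi^2$, Plancherel gives
\[
-\frac{\alpha}{4\pi^2(1-\gamma)}\int\nu(k_3)^2\Bigl|\int e^{ik_3 x_3}|\phi|^2 dx\Bigr|^2 dk_3 \;\geq\; -\frac{\alpha\ln\mathcal{K}_\perp}{1-\gamma}\int_{\mathbb{R}}|\tilde\phi|^4 dx_3.
\]
For the Coulomb term, applying Lemma \ref{Lemma: Bathtub} directly with $\Psi=\phi$ and using $\|\tilde\phi\|_2=1$ together with $\|\partial_3\tilde\phi\|_2 \leq f$ yields
\[
-\frac{\beta}{1-\mathcal{A}^{-1}}\int\frac{|P_0^B\phi|^2}{|x|}dx \;\geq\; -\frac{\beta\mu(B)}{1-\mathcal{A}^{-1}}\tilde\phi(0)^2 \;-\; \frac{\beta}{1-\mathcal{A}^{-1}}\bigl(L^{-1} + 8\sqrt{L}f^{3/2} + |\mathcal{D}(B,L)|f\bigr).
\]

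Next I would absorb the errors linear in $f$ and $f^{3/2}$ into $\kappa_1 f^2$. The first identity in (\ref{assumptions}) yields $(1-\mathcal{A}^{-1})^{-1} = 2\ln\mathcal{K}_\perp/(\mu(B)(1-\gamma))$, whence $\kappa_1 - \tilde\kappa_1 = 2\beta\epsilon|\mathcal{D}(B,L)|/(1-\mathcal{A}^{-1})$. Writing $\kappa_1 f^2 = \tilde\kappa_1 f^2 + \tfrac12(\kappa_1-\tilde\kappa_1)f^2 + \tfrac12(\kappa_1-\tilde\kappa_1)f^2$, the first half absorbs the $|\mathcal{D}|f$ term through $ax^2 - bx \geq -b^2/(4a)$, while the second half absorbs the $\sqrt{L}f^{3/2}$ term through the sharp inequality $ax^2 - bx^{3/2} \geq -(27/256)b^4/a^3$ used already in Corollary \ref{Hydrogen}. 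Explicit computation, combined with $\beta/(1-\mathcal{A}^{-1}) = 2\beta\ln\mathcal{K}_\perp/(\mu(B)(1-\gamma))$, produces the three claimed error contributions $|\mathcal{D}|/\epsilon$, $432L^2/(|\mathcal{D}|^3\epsilon^3)$, and $L^{-1}$, each multiplied by the global prefactor $2\beta\ln\mathcal{K}_\perp/(\mu(B)(1-\gamma))$; the constant $432$ arises precisely from the sharp form of the cubic inequality combined with the halving of the excess kinetic.

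What remains, after rewriting $\beta\mu(B)/(1-\mathcal{A}^{-1})$ as $2\beta\ln\mathcal{K}_\perp/(1-\gamma)$ via the assumption and applying $\tilde\kappa_1 f^2 \geq \tilde\kappa_1\|\partial_3\tilde\phi\|_2^2$, is the one-dimensional minimization
\[
\inf_{\|\tilde\phi\|_2=1}\Bigl\{\tilde\kappa_1\|\partial_3\tilde\phi\|_2^2 - \frac{\alpha\ln\mathcal{K}_\perp}{1-\gamma}\int|\tilde\phi|^4 dx_3 - \frac{2\beta\ln\mathcal{K}_\perp}{1-\gamma}\tilde\phi(0)^2\Bigr\}.
\]
The first condition in (\ref{assumptions}) is engineered so that the ratio of the quartic to the delta coefficient is exactly $\alpha/(2\beta)$; the rescaling $\tilde\phi(x_3) = \sqrt{\lambda}\,\zeta(\lambda x_3)$ with $\lambda = 2\ln\mathcal{K}_\perp/((1-\gamma)\tilde\kappa_1)$ then collapses the bracket to $\tilde\kappa_1\lambda^2\mathcal{E}_0(\zeta)$, whose infimum over $\|\zeta\|_2=1$ is $\tilde\kappa_1\lambda^2\mathfrak{e}_0 = 4(\ln\mathcal{K}_\perp)^2\mathfrak{e}_0/((1-\gamma)^2\tilde\kappa_1)$. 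The most delicate part is the bookkeeping at the absorption step: the factor of $4$ appearing in the definition of $\tilde\kappa_1$ is exactly what is needed so that a single half of the excess kinetic suffices for each of the two error terms and produces the sharp constants $|\mathcal{D}|/\epsilon$ and $432L^2/(|\mathcal{D}|^3\epsilon^3)$ in the final bound, while the positivity $\tilde\kappa_1 > 0$ enforced in (\ref{assumptions}) is exactly what makes both the absorption and the subsequent scaling argument well-posed.
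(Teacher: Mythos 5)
Your proposal is correct and follows essentially the same route as the paper: bound $\nu(k_3)^2$ by its maximum $2\pi\ln\mathcal{K}_\perp$ and use Plancherel to reduce the phonon term to $-\frac{\alpha\ln\mathcal{K}_\perp}{1-\gamma}\|\tilde\phi\|_4^4$, apply the bathtub bound of Lemma \ref{Lemma: Bathtub} to the Coulomb term, absorb the $|\mathcal{D}|$- and $\sqrt{L}$-error terms into the kinetic energy exactly as in Corollary \ref{Hydrogen} (which is where the paper's $432$ and the total deduction $\kappa_1-\tilde\kappa_1=2\beta\epsilon|\mathcal{D}|/(1-\mathcal{A}^{-1})$ come from), and finish by the scaling $\tilde\phi(x)=\sqrt{\lambda}\,\zeta(\lambda x)$ with $\lambda=2\ln\mathcal{K}_\perp/((1-\gamma)\tilde\kappa_1)$, using the first condition in (\ref{assumptions}) to match the quartic and delta coefficients to $\mathcal{E}_0$. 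Your bookkeeping reproduces the stated error terms (your $|\mathcal{D}|/(4\epsilon)$ is within the claimed $|\mathcal{D}|/\epsilon$), so the argument is complete.
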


\begin{proof}
For $\phi\in L^{2}\left(\mathbb{R}^{3}\right)$, $\|\phi\|_{2}=1$
and $\tilde{\phi}(x_{3}):=\left(\int_{\mathbb{R}^{2}}\left|\phi\left(x_{\perp},x_{3}\right)\right|^{2}dx_{\perp}\right)^{1/2}$,
\begin{align*}
\kappa_{1} & \|\partial_{3}\phi\|_{2}^{2}-\frac{\alpha}{4\pi^{2}\left(1-\gamma\right)}\int_{\mathbb{R}}\nu\left(k_{3}\right)^{2}\left|\int_{\mathbb{R}^{3}}e^{ik_{3}x_{3}}\left|\phi\left(x_{\perp},x_{3}\right)\right|^{2}dx\right|^{2}dk_{3}\\
 & -\frac{\beta}{1-\mathcal{A}^{-1}}\int_{\mathbb{R}^{3}}\frac{\left|\left(P_{0}^{B}\phi\right)\left(x_{\perp},x_{3}\right)\right|^{2}}{\left|x\right|}dx\\
\geq & \kappa_{1}\|\partial_{3}\phi\|_{2}^{2}-\frac{\alpha\ln\mathcal{K}_{\perp}}{1-\gamma}\int_{\mathbb{R}}\left|\int_{\mathbb{R}^{2}}\left|\phi\left(x_{\perp},x_{3}\right)\right|^{2}dx_{\perp}\right|^{2}dx_{3}\\
 & -\frac{\beta}{1-\mathcal{A}^{-1}}\int_{\mathbb{R}^{3}}\frac{\left|\left(P_{0}^{B}\phi\right)\left(x_{\perp},x_{3}\right)\right|^{2}}{\left|x\right|}dx\\
\geq & \left(\kappa_{1}-\frac{2\beta\epsilon\left|\mathcal{D}(B,L)\right|}{1-\mathcal{A}^{-1}}\right)\|\partial_{3}\tilde{\phi}\|_{2}^{2}-\frac{\alpha\ln\mathcal{K}_{\perp}}{1-\gamma}\|\tilde{\phi}\|_{4}^{4}-\frac{\beta\mu(B)}{1-\mathcal{A}^{-1}}\left(\tilde{\phi}(0)\right)^{2}\\
 & -\frac{\beta}{1-\mathcal{A}^{-1}}\left(\frac{432L^{2}}{\left|\mathcal{D}(B,L)\right|^{3}\epsilon^{3}}+\frac{1}{L}+\frac{\left|\mathcal{D}(B,L)\right|}{\epsilon}\right)\\
\geq & \frac{4\left(\ln\mathcal{K}_{\perp}\right)^{2}\mathfrak{e}_{_{0}}}{\tilde{\kappa}_{1}\left(1-\gamma\right)^{2}}-\frac{2\beta\ln\mathcal{K}_{\perp}}{\mu(B)\left(1-\gamma\right)}\left(\frac{432L^{2}}{\left|\mathcal{D}(B,L)\right|^{3}\epsilon^{3}}+\frac{1}{L}+\frac{\left|\mathcal{D}(B,L)\right|}{\epsilon}\right).
\end{align*}
At the first inequality it is used that 
\begin{equation}
\nu\left(k_{3}\right)^{2}=\pi\left(\ln\left(\mathcal{K}_{\perp}^{2}+k_{3}^{2}\right)-\ln\left(1+k_{3}^{2}\right)\right)\leq2\pi\ln\mathcal{K}_{\perp}\label{Bound on =00003D00005Cnu(k)}
\end{equation}
along with Plancherel's identity. At the second inequality the bathtub
principle in Lemma \ref{Lemma: Bathtub} and the argument in the proof
of Corollary \ref{Hydrogen} apply mutatis mutandis. At the third
inequality the assumptions in (\ref{assumptions}) are used. 
\end{proof}
\noindent From Lemma \ref{Lemma: estimate on I} and further assuming
\begin{equation}
\kappa-\tilde{\kappa}_{1}\leq\frac{\kappa}{2}\ \ \text{and}\ \ \gamma\leq\frac{1}{2},\label{eq:assumptions}
\end{equation}
it can be seen there is a constant $C>0$ such that 
\begin{align*}
I\geq & \ 4\kappa^{-1}\left(\ln\mathcal{K}_{\perp}\right)^{2}\mathfrak{e}_{_{0}}\\
 & -C\left(\frac{\left(\ln\mathcal{K}_{\perp}\right)^{2}}{\kappa}\left(\frac{\kappa-\tilde{\kappa}_{1}}{\kappa}+\gamma\right)+\frac{\ln\mathcal{K}_{\perp}}{\mu(B)}\left(\frac{L^{2}}{\left|\mathcal{D}\left(B,L\right)\right|^{3}\epsilon^{3}}+\frac{1}{L}+\frac{\left|\mathcal{D}\left(B,L\right)\right|}{\epsilon}\right)\right).
\end{align*}
With the above bound and (\ref{standoff}) the argument in \cite{Frank-Geisinger}
applies mutatis mutandis and choosing $J^{2}=\kappa^{1/5}\mathcal{K}_{3}^{-3/5}\left(\ln\mathcal{K}_{\perp}\right)^{-3/5}$,
$\mathcal{M}=[J^{-2}]$ and $\gamma=\kappa^{4/5}\mathcal{K}_{3}^{3/5}\left(\ln\mathcal{K}_{\perp}\right)^{-7/5}$
yields
\begin{align*}
P_{0}^{B} & \left(\mathfrak{h}_{\mathcal{K}}^{\text{co}}-\beta\left(1/\left(1-\mathcal{A}^{-1}\right)\right)\left|x\right|^{-1}\right)P_{0}^{B}\\
\geq & \left(\kappa B+\frac{4\left(\ln\mathcal{K}_{\perp}\right)^{2}}{\kappa}\mathfrak{e}_{_{0}}\right)P_{0}^{B}\\
 & -C\left(\kappa^{-1}\left(\kappa-\kappa_{1}\right)\left(\ln\mathcal{K}_{\perp}\right)^{2}+\kappa^{-1/5}\mathcal{K}_{3}^{3/5}\left(\ln\mathcal{K}_{\perp}\right)^{3/5}+B\mathcal{K}_{3}\mathcal{K}_{\perp}^{-2}\right)P_{0}^{B}\\
 & -C\frac{\ln\mathcal{K}_{\perp}}{\mu(B)}\left(\frac{L^{2}\epsilon^{-3}}{\left|\mathcal{D}\left(B,L\right)\right|^{3}}+\frac{1}{L}+\frac{\left|\mathcal{D}\left(B,L\right)\right|}{\epsilon}+\frac{\epsilon\left|\mathcal{D}\left(B,L\right)\right|\left(\ln\mathcal{K}_{\perp}\right)^{2}}{\kappa^{2}}\right)P_{0}^{B}.
\end{align*}
It is shown in \cite{Frank-Geisinger} choosing $\mathcal{K}_{\perp}=B^{1/2}$
and $\mathcal{K}_{3}=\kappa^{-1/2}\left(\ln B\right)^{3/2}$ that
\[
\kappa^{-1}\left(\kappa-\kappa_{1}\right)\left(\ln\mathcal{K}_{\perp}\right)^{2}+\kappa^{-1/5}\mathcal{K}_{3}^{3/5}\left(\ln\mathcal{K}_{\perp}\right)^{3/5}+B\mathcal{K}_{3}\mathcal{K}_{\perp}^{-2}\leq C\kappa^{-1/2}\left(\ln B\right)^{3/2}.
\]
Now choosing $L=1/\ln B$ and $\epsilon=1/\ln B$, since $1\leq\left|\mathcal{D}\left(B,L\right)\right|\leq C$
for $B\geq C$, 
\[
\frac{\ln\mathcal{K}_{\perp}}{\mu(B)}\left(\frac{L^{2}\epsilon^{-3}}{\left|\mathcal{D}\left(B,L\right)\right|^{3}}+\frac{1}{L}+\frac{\left|\mathcal{D}\left(B,L\right)\right|}{\epsilon}+\frac{\epsilon\left|\mathcal{D}\left(B,L\right)\right|\left(\ln\mathcal{K}_{\perp}\right)^{2}}{\kappa^{2}}\right)\leq C\left(1+\kappa^{-2}\right)\ln B.
\]
With the above choice of parameters and the assumptions $C\left(\ln B\right)^{-1/2}\leq\kappa\leq C^{-1}\ln B$
and $\mathcal{K}\geq\sqrt{B}$, the conditions in (\ref{assumptions})
and (\ref{eq:assumptions}) and that $0<\mathcal{K}_{3}\leq\mathcal{K},$
$1\leq\mathcal{K}_{\perp}\leq\mathcal{K}$ and $1<\mathcal{A}<\sqrt{B/\ln B}$
are verified, thereby concluding the proof
\begin{align*}
 & P_{0}^{B}\left(\mathfrak{h}_{\mathcal{K}}^{\text{co}}-\beta\left(1/\left(1-\mathcal{A}^{-1}\right)\right)\left|x\right|^{-1}\right)P_{0}^{B}\\
 & \geq\left(\kappa B+\kappa^{-1}\left(\ln B\right)^{2}\mathfrak{e}_{_{0}}-C\kappa^{-1/2}\left(\ln B\right)^{3/2}-C\left(1+\kappa^{-2}\right)\ln B\right)P_{0}^{B}.
\end{align*}

\begin{cor}
\label{cor:Lower Bound}Let $W$ be a sum of a bounded Borel measure
on the real line and a $L^{\infty}(\mathbb{R})$ function. With $\epsilon$
a real parameter and $E_{\epsilon}(B)$ and $\mathfrak{e}_{_{\epsilon}}$
as in Corollary \ref{cor:Upper} there is a constant $C>0$ such that
for $B\geq C$ 
\[
E_{\epsilon}(B)\geq B+\mathfrak{e}_{_{\epsilon}}\left(\ln B\right)^{2}-C\left(\ln B\right)^{3/2}.
\]
\end{cor}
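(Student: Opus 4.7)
The plan is to carry the perturbation $-\epsilon(\ln B)^2 W((\ln B)x_3)$ through each step of the proof of Theorem \ref{thm:Lower Bound}. Since $W((\ln B)x_3)$ is a multiplication operator depending only on $x_3$, it commutes with the lowest Landau projector $P_0^B$, with the block creation/annihilation operators in Subsection \ref{subsec:Lowest-Landau-Level}, and with the one-dimensional reduction of Lemma \ref{lem:Lemma 6.4 in FG}. It therefore passes transparently through Lemma \ref{Ultraviolet}, Lemma \ref{Lem: no commute}, Lemma \ref{lem:Last Blow}, and Lemma \ref{lem:block estimate} as an additive perturbation. Corollary \ref{Concentration} extends to $\mathbb{H}_\epsilon(B)$ with an enlarged constant: applying Hölder together with the Sobolev inequality $\|\tilde\Psi\|_\infty^2\leq 2\|\tilde\Psi\|_2\|\partial_3\tilde\Psi\|_2$ to bound $|\epsilon(\ln B)^2\int W((\ln B)x_3)(\tilde\Psi(x_3))^2\,dx_3|=O((\ln B)^2)$, the a priori estimate on $\langle\Psi,\mathbb{H}(B)\Psi\rangle$ is preserved, and the phonon-number and lowest-Landau-level concentration bounds still apply.

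The analog of Proposition \ref{prop: ending proposition} replaces the one-dimensional infimum $I$ of Lemma \ref{Lemma: estimate on I} by
\begin{align*}
I_\epsilon:=\inf_{\|\phi\|_2=1}\Big[ & \kappa_1\|\partial_3\phi\|_2^2-\frac{\alpha}{4\pi^2(1-\gamma)}\int_{\mathbb{R}}\nu(k_3)^2\Big|\int_{\mathbb{R}^3}e^{ik_3x_3}|\phi|^2\,dx\Big|^2 dk_3\\
 & -\frac{\beta}{1-\mathcal{A}^{-1}}\int\frac{|P_0^B\phi|^2}{|x|}dx-\epsilon(\ln B)^2\int W((\ln B)x_3)|\phi|^2\,dx\Big].
\end{align*}
Applying Plancherel, the bathtub bound of Lemma \ref{Lemma: Bathtub}, and the rescaling $\tilde\phi(x_3)=\sqrt{\Lambda}\varphi(\Lambda x_3)$ with $\Lambda:=\mu(B)/((1-\mathcal{A}^{-1})\tilde\kappa_1)$ from Lemma \ref{Lemma: estimate on I}, the first three terms reproduce $\tilde\kappa_1\Lambda^2\mathcal{E}_0(\varphi)$ as before, while the $W$-term scales to $-\epsilon\tilde\kappa_1\Lambda^2\int\widetilde{W}(y)|\varphi|^2\,dy$ with $\widetilde{W}(y):=(\ln B)^2(\tilde\kappa_1\Lambda^2)^{-1}W((\ln B)y/\Lambda)$. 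Since the choices of parameters in Proposition \ref{prop: ending proposition} yield $\Lambda=\ln B(1+o(1))$ and $\tilde\kappa_1=1+o(1)$, the function $\widetilde W$ approximates $W$, and the infimum over $\varphi$ is heuristically $\tilde\kappa_1\Lambda^2(\mathfrak{e}_\epsilon+o(1))$.

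The main obstacle is quantifying the approximation of $\widetilde W$ by $W$ well enough to keep errors within $O((\ln B)^{3/2})$. Writing $W=\mu+\omega$ with $\mu$ a bounded Borel measure and $\omega\in L^\infty$, the Hölder and Sobolev argument of (\ref{eq:p1}) gives $|\int\widetilde W|\varphi|^2 dy|\leq|\mu|(\mathbb{R})\|\varphi\|_\infty^2(1+o(1))+\|\omega\|_\infty$, matching the bound for $W$ itself; hence any approximate minimizer of $\inf[\mathcal{E}_0(\varphi)-\epsilon\int\widetilde W|\varphi|^2]$ satisfies $\|\varphi'\|_2,\|\varphi\|_\infty\leq C$ uniformly in $B$ for bounded $\epsilon$. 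A compactness argument analogous to that used in the proof of Theorem \ref{thm:Differentiation}, combined with the uniqueness of the minimizer of $\mathfrak{e}_0$ established in Lemma \ref{lem:Every-minimizing-sequence}, then shows that the infimum of $\mathcal{E}_0(\varphi)-\epsilon\int\widetilde W|\varphi|^2$ equals $\mathfrak{e}_\epsilon+o(1)$ as $B\to\infty$. All remaining error terms—arising from $|\Lambda-\ln B|$, $|1-\tilde\kappa_1|$, and the discrepancy $\widetilde W-W$—are controlled in the same fashion as the errors in Proposition \ref{prop: ending proposition}, staying within $O((\ln B)^{3/2})$. Combining these estimates with the unperturbed analog of Lemma \ref{lem:Last Blow}, and using $\|P_0^B\Psi\|\leq 1$ as in the proof of Theorem \ref{thm:Lower Bound}, yields the desired lower bound on $E_\epsilon(B)$.
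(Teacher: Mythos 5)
Your overall strategy---carrying the multiplication operator $-\epsilon\left(\ln B\right)^{2}W\left(\left(\ln B\right)x_{3}\right)$ through the lower-bound machinery, using that it commutes with $P_{0}^{B}$, with the localization functions and with the phonon-block operators---is exactly what the paper intends: its proof of this corollary consists of the single sentence that the preceding arguments apply mutatis mutandis. There is, however, one concrete gap in your execution. By rescaling with $\Lambda=\mu(B)/((1-\mathcal{A}^{-1})\tilde{\kappa}_{1})$ you convert the perturbation into an integral against the dilated potential $\widetilde{W}(y)\propto W\left(\left(\ln B\right)y/\Lambda\right)$, and you then need $\inf\{\mathcal{E}_{_{0}}(\varphi)-\epsilon\int\widetilde{W}\left|\varphi\right|^{2}\}$ to approximate $\mathfrak{e}_{_{\epsilon}}$ \emph{with a rate}. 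For $W$ containing a general bounded Borel measure $\mu$ this cannot be done: the estimate $\left|\varphi(y/c)-\varphi(y)\right|\leq\|\varphi^{\prime}\|_{2}\left|y\right|^{1/2}\left|1-1/c\right|^{1/2}$ is useless on the tails of $\mu$ unless $\int\left|y\right|^{1/2}d\left|\mu\right|<\infty$, and splitting into $\left\{ \left|y\right|\leq R\right\} $ and $\left\{ \left|y\right|>R\right\} $ yields only $o(1)$ with no quantitative rate. Your own conclusion ``$\mathfrak{e}_{_{\epsilon}}+o(1)$'' therefore delivers only $E_{\epsilon}(B)\geq B+\mathfrak{e}_{_{\epsilon}}\left(\ln B\right)^{2}-o(\left(\ln B\right)^{2})$---which would still suffice for Theorem \ref{thm:Wavefunction}, but is not the stated error $-C\left(\ln B\right)^{3/2}$, and your closing claim that all errors stay within $O(\left(\ln B\right)^{3/2})$ contradicts the $o(1)$ you actually obtained.

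The repair is to rescale by exactly $\ln B$ rather than by $\Lambda$, as is done in the upper bound of Corollary \ref{cor:Upper}: with $\tilde{\phi}\left(x_{3}\right)=\sqrt{\ln B}\,\varphi\left(\left(\ln B\right)x_{3}\right)$ one has $\left(\ln B\right)^{2}\int W\left(\left(\ln B\right)x_{3}\right)|\tilde{\phi}|^{2}dx_{3}=\left(\ln B\right)^{2}\int W(y)\left|\varphi(y)\right|^{2}dy$ \emph{exactly}, so no dilation of $W$ ever appears; the mismatch between $\ln B$ and the natural scale of Lemma \ref{Lemma: estimate on I} is instead absorbed into the coefficients of $\|\varphi^{\prime}\|_{2}^{2}$, $\|\varphi\|_{4}^{4}$ and $\left|\varphi(0)\right|^{2}$, each perturbed by a factor $1+\mathcal{O}\left(\ln\ln B/\ln B\right)$ and each uniformly bounded along near-minimizers by the coercivity estimate (\ref{eq:p1}), which costs only $\mathcal{O}\left(\ln B\ln\ln B\right)\leq\mathcal{O}(\left(\ln B\right)^{3/2})$ after multiplying by $\left(\ln B\right)^{2}$. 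A second, smaller gloss: in extending Corollary \ref{Concentration} you bound the perturbation by $\mathcal{O}(\left(\ln B\right)^{2})$ via $\|\widetilde{\Psi}\|_{\infty}^{2}\leq\|\partial_{3}\Psi\|$, but $\|\partial_{3}\Psi\|$ is part of the kinetic energy you are trying to control; one first needs the bootstrap $\mathbb{H}(B)\geq\frac{1}{8}\left(-\partial_{3}^{2}\right)+B-C\left(\ln B\right)^{2}$ to conclude $\|\partial_{3}\Psi\|\leq C\ln B$ for trial states satisfying the analogue of (\ref{eq:U}), and only then is the perturbation $\mathcal{O}(\left(\ln B\right)^{2})$. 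With these two repairs your argument goes through.
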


\begin{proof}
The above arguments apply mutatis mutandis; see also \cite{Griesemer-Wellig}.
\end{proof}

\section{\label{sec:Proof-of-Theorem 2.2}Proof of Theorem \ref{thm:Wavefunction}}
\begin{proof}[Proof of Theorem \ref{thm:Wavefunction}.]
Let $E_{\epsilon}(B)$ be the ground-state energy of the Hamiltonian
$\mathbb{H}_{\epsilon}(B)$ in (\ref{eq:pert}) and let the one-dimensional
energy $\mathfrak{e}_{_{\epsilon}}$ be as defined in (\ref{eq:pert2}).
The large $B$ asymptotics of $E_{\epsilon}(B)$ in (\ref{eq:1-3})
follows from Corollary \ref{cor:Upper} and Corollary \ref{cor:Lower Bound}.
As explained in the introduction it follows from the variational principle,
Theorem \ref{thm:Energy Asymptotics} and (\ref{eq:1-3}) that for
$\epsilon>0$ 
\[
\frac{\mathfrak{e}_{_{0}}-\mathfrak{e}_{_{\epsilon}}}{\epsilon}\geq\limsup_{B\rightarrow\infty}\frac{1}{\left(\ln B\right)}\int_{\mathbb{R}}W\left(x_{3}\right)\left(\int_{\mathbb{R}^{2}}\left\Vert \Psi^{(B)}\right\Vert _{\mathcal{F}}^{2}\left(x_{\perp},\frac{x_{3}}{\ln B}\right)dx_{\perp}\right)dx_{3}
\]
and for $\epsilon<0$ 
\[
\frac{\mathfrak{e}_{_{0}}-\mathfrak{e}_{_{\epsilon}}}{\epsilon}\leq\liminf_{B\rightarrow\infty}\frac{1}{\left(\ln B\right)}\int_{\mathbb{R}}W\left(x_{3}\right)\left(\int_{\mathbb{R}^{2}}\left\Vert \Psi^{(B)}\right\Vert _{\mathcal{F}}^{2}\left(x_{\perp},\frac{x_{3}}{\ln B}\right)dx_{\perp}\right)dx_{3}.
\]
Theorem \ref{thm:Wavefunction} now follows from Theorem \ref{thm:Differentiation}. 
\end{proof}

\subsubsection*{Acknowledgement}

\noindent I thank Michael Loss for several valuable suggestions and
Rupert Frank for the general strategy used in the proof of Theorem
\ref{thm:Wavefunction}.

\appendix

\section{Compactness of Minimizing Sequences}
\begin{thm}
If a sequence $\left\{ \psi_{_{n}}\right\} _{n=1}^{\infty},\left\Vert \psi_{_{n}}\right\Vert _{2}=1$
satisfies $\underset{n\rightarrow\infty}{\lim}\mathcal{E}_{_{0}}\left(\psi_{_{n}}\right)=\mathfrak{e}_{_{0}}$
with the functional $\mathcal{E}_{_{0}}$ as given in (\ref{eq: Pekar Functional}),
then there exists a subsequence $\{\psi_{_{n_{k}}}\}_{k=1}^{\infty}$
and some $\psi\in H^{1}\left(\mathbb{R}\right)$ such that $\left\Vert \psi\right\Vert _{2}=1$,
$\mathcal{E}_{_{0}}\left(\psi\right)=\mathfrak{e}_{_{0}}$ and $\|\psi_{_{n_{k}}}-\psi\|_{H^{1}}\rightarrow0$
as $k\rightarrow\infty$. 
\end{thm}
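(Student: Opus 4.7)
I would begin by showing boundedness of $\{\psi_n\}$ in $H^1(\mathbb{R})$. The estimate (3.3) specialized to $\epsilon=0$, which combines the one-dimensional Sobolev inequality $\|\varphi\|_\infty^2 \le \|\varphi'\|_2\|\varphi\|_2$ with the constraint $\|\varphi\|_2 = 1$ and completion of the square, yields $\mathcal{E}_0(\varphi) \ge \frac{3}{4}\|\varphi'\|_2^2 - (\alpha/2 + \beta)^2$ on the unit $L^2$-sphere; since $\mathcal{E}_0(\psi_n) \to \mathfrak{e}_0$, this gives $\sup_n \|\psi_n'\|_2 < \infty$. After passing to a subsequence, $\psi_n \rightharpoonup \psi$ in $H^1(\mathbb{R})$, and the one-dimensional Rellich--Kondrachov and Morrey embeddings give convergence pointwise almost everywhere and uniformly on every bounded interval; in particular $\psi_n(0) \to \psi(0)$.

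The heart of the argument is to control possible escape of mass to infinity, for which I would invoke P.-L.~Lions' concentration-compactness principle applied to the probability densities $\rho_n := |\psi_n|^2$. \emph{Vanishing} is excluded: if $\sup_{y \in \mathbb{R}} \int_{y-R}^{y+R}|\psi_n|^2\,dx \to 0$ for all $R>0$, then $\|\psi_n\|_{L^4} \to 0$ by Lions' lemma, and the localized Sobolev bound $|\psi_n(0)|^2 \le \|\chi\psi_n\|_\infty^2 \le \|(\chi\psi_n)'\|_2\|\chi\psi_n\|_2$ with a compactly supported cutoff $\chi$ equal to $1$ at the origin forces $\psi_n(0) \to 0$; then $\liminf\mathcal{E}_0(\psi_n) \ge 0 > \mathfrak{e}_0$, a contradiction. \emph{Dichotomy} is excluded by a strict binding inequality $\mathfrak{e}_0 < e(\lambda) + \tilde{e}(1-\lambda)$ for all $\lambda \in (0,1)$, where $e(\lambda) := \inf\{\mathcal{E}_0(\varphi) : \|\varphi\|_2^2 = \lambda\}$ and $\tilde{e}(\mu)$ is the analogous infimum for the delta-free functional; the scale invariance of the pure NLS functional gives $\tilde{e}(\mu) = -\mu^3\alpha^2/48$, and an analogous rescaling in $\mathcal{E}_0$, together with the explicit value $\mathfrak{e}_0 = -(\alpha^2 + 6\alpha\beta + 12\beta^2)/48$ from the preceding lemma, yields the strict inequality whenever $\beta > 0$. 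The remaining alternative is \emph{tightness up to translations}: there exist $y_n \in \mathbb{R}$ with $\psi_n(\cdot - y_n)$ tight.

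I would then show $\{y_n\}$ is bounded. If $|y_n|\to\infty$, tightness moves almost all the mass away from the origin, so $\psi_n(0)\to 0$, and $\liminf \mathcal{E}_0(\psi_n)$ is bounded below by the delta-free infimum $\tilde{e}(1) = -\alpha^2/48 > \mathfrak{e}_0$: contradiction. Hence after a further subsequence and translation one may take $y_n = 0$, and tightness together with the $H^1$-bound yields $\psi_n \to \psi$ strongly in $L^2(\mathbb{R})$; in particular $\|\psi\|_2 = 1$.

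To conclude, strong $L^2$-convergence combined with the $H^1$-bound and Gagliardo--Nirenberg ($\|\psi_n - \psi\|_4^4 \le C\|\psi_n' - \psi'\|_2\|\psi_n - \psi\|_2^3$) gives $\psi_n \to \psi$ in $L^4$; together with $\psi_n(0)\to\psi(0)$ and weak lower semicontinuity of $\|\cdot'\|_2^2$, this yields $\mathcal{E}_0(\psi)\le\liminf\mathcal{E}_0(\psi_n)=\mathfrak{e}_0$, so $\psi$ is a minimizer. Equality in the lower semicontinuity of the kinetic energy then forces $\|\psi_n'\|_2\to\|\psi'\|_2$, which upgrades the weak convergence in $H^1$ to strong convergence. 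The main obstacle is the strict binding inequality used to rule out dichotomy; once this scaling-based comparison is in place, the remaining steps are essentially standard.
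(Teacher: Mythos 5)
Your proposal takes a genuinely different route from the paper's. You run the full Lions concentration--compactness trichotomy (vanishing / dichotomy / compactness up to translations) on the densities $\left|\psi_{n}\right|^{2}$, and then must separately exclude vanishing, exclude dichotomy via strict subadditivity, and show the recentering translations stay bounded. The paper avoids this machinery entirely: it uses a quadratic IMS partition $\chi_{_{R}}^{2}+\tilde{\chi}_{_{R}}^{2}\equiv1$, the elementary scaling bounds $\mathcal{E}_{_{0}}\left(\varphi\right)\geq\mathfrak{e}_{_{0}}\left\Vert \varphi\right\Vert _{2}^{2}$ and $\mathcal{E}_{_{T}}\left(\varphi\right)\geq\mathfrak{e}_{_{T}}\left\Vert \varphi\right\Vert _{2}^{2}$ for $\left\Vert \varphi\right\Vert _{2}\leq1$ in (\ref{eq:C.21}), and the single binding inequality $\mathfrak{e}_{_{0}}<\mathfrak{e}_{_{T}}$, obtained by testing $\mathcal{E}_{_{0}}$ on the symmetric decreasing minimizer of the translation-invariant problem (whose value at the origin is positive). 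That one inequality does at a stroke the work that vanishing-exclusion, dichotomy-exclusion and the boundedness of your $y_{n}$ do in your scheme, and it yields tightness around the origin directly. Your endgame (strong $L^{2}$, then $L^{4}$ and pointwise-at-zero convergence, weak lower semicontinuity, norm convergence of the derivatives) coincides with the paper's Steps 2 and 3.

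There is, however, one genuine logical flaw as written. To exclude dichotomy you compute $e(\lambda)$ exactly by rescaling and then insert ``the explicit value $\mathfrak{e}_{_{0}}=-(\alpha^{2}+6\alpha\beta+12\beta^{2})/48$ from the preceding lemma.'' In the paper that lemma is proved by solving the Euler--Lagrange equation of a minimizer whose \emph{existence} is precisely the content of the present theorem; only the variational upper bound $\mathfrak{e}_{_{0}}\leq\mathcal{E}_{_{0}}\left(\phi_{_{0}}\right)$ is available before compactness is established, not the matching lower bound that your exact formula for $e(\lambda)$ requires. As stated, your strict subadditivity argument therefore presupposes what it is meant to prove. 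The gap is repairable without any explicit values: for $\left\Vert \varphi\right\Vert _{2}^{2}=\lambda\leq1$ one has $\mathcal{E}_{_{0}}\left(\varphi\right)\geq\lambda\,\mathfrak{e}_{_{0}}$ (write $\varphi=\sqrt{\lambda}\psi$ and use $\lambda^{2}\leq\lambda$ on the quartic term), and likewise $\tilde{e}(1-\lambda)\geq(1-\lambda)\,\mathfrak{e}_{_{T}}$, whence $e(\lambda)+\tilde{e}(1-\lambda)\geq\lambda\mathfrak{e}_{_{0}}+(1-\lambda)\mathfrak{e}_{_{T}}>\mathfrak{e}_{_{0}}$ for $\lambda<1$, using only $\mathfrak{e}_{_{0}}<\mathfrak{e}_{_{T}}$. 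The same remark applies to your exclusion of $\left|y_{n}\right|\rightarrow\infty$, although there only the non-circular upper bound on $\mathfrak{e}_{_{0}}$ is actually needed. With that substitution your argument closes correctly, essentially collapsing onto the paper's use of (\ref{eq:C.21}) and the binding inequality.
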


\begin{proof}
For $\varphi\in H^{1}\left(\mathbb{R}\right)$ it can be argued as
in the proof of Theorem \ref{thm:Differentiation} that 
\[
\mathcal{E}_{_{0}}\left(\varphi\right)\geq\frac{3}{4}\|\varphi^{\prime}\|_{2}^{2}-\|\varphi\|_{2}^{2}\left(\frac{\alpha}{2}\|\varphi\|_{2}^{2}+\beta\right)^{2}.
\]
Furthermore since $\psi_{_{n}}$ is a minimizing sequence, $\mathcal{E}_{_{0}}\left(\psi_{_{n}}\right)<\mathfrak{e}_{_{0}}+1$
for $n$ large. Then $\left\Vert \psi_{_{n}}\right\Vert _{H^{1}}<C$
and there exists a subsequence $\{\psi_{_{n_{k}}}\}_{k=1}^{\infty}$
and some $\psi\in H^{1}\left(\mathbb{R}\right)$ such that $\psi_{n_{k}}$
converges to $\psi$ weakly in $H^{1}\left(\mathbb{R}\right)$.

\noindent \textbf{Step 1} (Compactness). It shall be argued that the
subsequence $\{\psi_{n_{k}}\}_{k=1}^{\infty}$ satisfies 
\begin{equation}
\forall\delta>0,\ \exists R<\infty\ \ \text{s.t.}\ \ \|\psi_{n_{k}}\|_{L^{2}\left(\left\{ \left|x\right|<R\right\} \right)}^{2}>1-\delta.\label{eq:C}
\end{equation}
Essential to the argument is the binding inequality $\mathfrak{e}_{_{0}}<\mathfrak{e}_{_{T}}$
where $\mathfrak{e}_{_{T}}:=\underset{\left\Vert \varphi\right\Vert _{2}=1}{\inf}\mathcal{E}_{_{T}}\left(\varphi\right)$
and $\mathcal{E}_{_{T}}\left(\varphi\right):=\int_{\mathbb{R}}\left|\varphi'\right|^{2}dx-(\alpha/2)\int_{\mathbb{R}}\left|\varphi\right|^{4}dx$
is the translation-invariant problem admitting a symmetric decreasing
minimizer $\phi_{_{T}}$ \cite{Frank-Geisinger}. Indeed 
\[
\mathfrak{e}_{_{0}}\leq\mathcal{E}_{_{0}}\left(\phi_{_{T}}\right)=\mathcal{E}_{_{T}}\left(\phi_{_{T}}\right)-\beta\phi_{_{T}}^{2}(0)=\mathfrak{e}_{_{T}}-\beta\phi_{_{T}}^{2}(0)<\mathfrak{e}_{_{T}}.
\]
Moreover it should be noted 
\begin{equation}
\mathcal{E}_{_{0}}\left(\varphi\right)\geq\mathfrak{e}_{_{0}}\left\Vert \varphi\right\Vert _{2}^{2}\ \ \ \text{and}\ \ \ \mathcal{E}_{_{T}}\left(\varphi\right)\geq\mathfrak{e}_{_{T}}\left\Vert \varphi\right\Vert _{2}^{2}\ \ \text{when}\ \ \left\Vert \varphi\right\Vert _{2}\leq1.\label{eq:C.21}
\end{equation}
Also a quadratic partition of unity is chosen, $\chi^{2}+\tilde{\chi}^{2}\equiv1$,
where $0\leq\chi\leq1$ is a smooth function with $\chi(x)=1$ when
$|x|<1/2$ and $\chi(x)=0$ when $|x|>1$. Denoting $\chi_{_{_{R}}}=\chi\left(R^{-1}\cdot\right)$
it follows from (\ref{eq:C.21}) that 
\begin{align}
\mathcal{E}_{_{0}}\left(\psi_{n_{k}}\right)= & \,\mathcal{E}_{_{0}}\left(\chi_{_{R}}\psi_{n_{k}}\right)+\mathcal{E}_{_{T}}\left(\tilde{\chi}_{_{R}}\psi_{n_{k}}\right)\nonumber \\
 & -2\int_{\mathbb{R}}\chi_{_{R}}^{2}\tilde{\chi}_{_{R}}^{2}\left|\psi_{n_{k}}\right|^{4}dx-\int_{\mathbb{R}}\left|\psi_{n_{k}}\right|^{2}\left(\left|\chi_{_{R}}^{\prime}\right|^{2}+\left|\tilde{\chi}_{_{R}}^{\prime}\right|^{2}\right)dx\nonumber \\
\geq & \left(\mathfrak{e}_{_{0}}-\mathfrak{e}_{_{T}}\right)\left\Vert \chi_{_{R}}\psi_{n_{k}}\right\Vert _{2}^{2}+\mathfrak{e}_{_{T}}\nonumber \\
 & -2\int_{\mathbb{R}}\chi_{_{R}}^{2}\tilde{\chi}_{_{R}}^{2}\left|\psi_{n_{k}}\right|^{4}dx-\int_{\mathbb{R}}\left|\psi_{n_{k}}\right|^{2}\left(\left|\chi_{_{R}}^{\prime}\right|^{2}+\left|\tilde{\chi}_{_{R}}^{\prime}\right|^{2}\right)dx.\label{eq:two terms}
\end{align}
Since $\chi,\tilde{\chi}$ have bounded derivatives 
\[
\int_{\mathbb{R}}\left|\psi_{_{n_{k}}}\right|^{2}\left(\left|\chi_{_{R}}^{\prime}\right|^{2}+\left|\tilde{\chi}_{_{R}}^{\prime}\right|^{2}\right)dx<CR^{-2}
\]
for some $C>0$. Furthermore with $D_{R}:=\left\{ R/2\leq|x|\leq R\right\} $
\[
\int_{\mathbb{R}}\chi_{_{R}}^{2}\tilde{\chi}_{_{R}}^{2}\left|\psi_{n_{k}}\right|^{4}dx\leq\|\psi_{_{n_{k}}}\|_{L^{^{4}}\left(D_{R}\right)}^{4},\ \text{and}\ \|\psi_{_{n_{k}}}\|_{L^{^{4}}\left(D_{R}\right)}^{4}\longrightarrow\|\psi\|_{L^{^{4}}\left(D_{R}\right)}^{4}
\]
by Rellich-Kondrashov, so the first term in (\ref{eq:two terms})
can also be made arbitrarily small with $R$ chosen to be large enough
uniformly in $k$. Hence for any $\delta>0$ there is some $R$ such
that for all $k$ 
\begin{equation}
\mathcal{E}_{_{0}}\left(\psi_{n_{k}}\right)\geq\left(\mathfrak{e}_{_{0}}-\mathfrak{e}_{_{T}}\right)\left\Vert \chi_{_{R}}\psi_{n_{k}}\right\Vert _{2}^{2}+\mathfrak{e}_{_{T}}-\delta(\mathfrak{e}_{_{T}}-\mathfrak{e}_{_{0}})/2.\label{eq:oneside}
\end{equation}
Since $\{\psi_{n_{k}}\}_{k=1}^{\infty}$ is a minimizing sequence
for $\mathfrak{e}_{_{0}}$, for $k$ large 
\begin{equation}
\mathcal{E}_{_{0}}\left(\psi_{n_{k}}\right)\leq\mathfrak{e}_{_{0}}+\delta(\mathfrak{e}_{_{T}}-\mathfrak{e}_{_{0}})/2.\label{eq:otherside}
\end{equation}
Compactness now follows from (\ref{eq:oneside}) and (\ref{eq:otherside}).

\noindent \textbf{Step 2} (Weak Limit is a Minimizer). By Rellich-Kondrashov
and the compactness property in (\ref{eq:C}) 
\begin{equation}
\|\psi_{n_{k}}-\psi\|_{2}\rightarrow0\ \ \ \text{and}\ \ \ \|\psi\|_{2}=1.\label{eq:A.II.1}
\end{equation}
Since $\left\Vert \psi_{n}\right\Vert _{H^{1}}<C$, by Sobolev and
Hölder's inequalities 
\[
\int_{\mathbb{R}}\left(|\psi_{n_{k}}|^{4}-\left|\psi\right|^{4}\right)dx\leq C\|\psi_{n_{k}}-\psi\|_{2}\longrightarrow0.
\]
Furthermore by Theorem 8.6 in \cite{Lieb-Loss} $\psi_{n_{k}}(0)\rightarrow\psi\left(0\right)$,
so 
\begin{equation}
\frac{\alpha}{2}\int_{\mathbb{R}}|\psi_{n_{k}}|^{4}dx+\beta|\psi_{n_{k}}\left(0\right)|^{2}\rightarrow\frac{\alpha}{2}\int_{\mathbb{R}}\left|\psi\right|^{4}dx+\beta\left|\psi\left(0\right)\right|^{2}.\label{eq:A.II.3}
\end{equation}
Then since $\underset{k\rightarrow\infty}{\liminf}\,\|\psi_{n_{k}}^{\prime}\|_{2}\geq\|\psi^{\prime}\|_{2}$,
$\mathfrak{e}_{_{0}}=\underset{k\rightarrow\infty}{\lim}\mathcal{E}_{_{0}}(\psi_{n_{k}})\geq\mathcal{E}_{_{0}}\left(\psi\right)\geq\mathfrak{e}_{_{0}}$
and $\mathcal{E}_{_{0}}\left(\psi\right)=\mathfrak{e}_{_{0}}$.

\noindent \textbf{Step 3} (Convergence in $H^{1}(\mathbb{R})$). From
(\ref{eq:A.II.3}) 
\begin{align*}
\lim_{k\rightarrow\infty}\|\psi_{n_{k}}^{\prime}\|_{2}^{2} & =\lim_{k\rightarrow\infty}\left(\mathcal{E}_{_{0}}(\psi_{n_{k}})+\frac{\alpha}{2}\|\psi_{n_{k}}\|_{4}^{4}+\beta\left|\psi_{n_{k}}(0)\right|^{2}\right)\\
 & =\mathfrak{e}_{_{0}}+\frac{\alpha}{2}\|\psi\|_{4}^{4}+\beta\left|\psi\left(0\right)\right|^{2}=\|\psi^{\prime}\|_{2}^{2},
\end{align*}
and since $\psi_{_{n_{k}}}\rightharpoonup\psi$ in $H^{1}$, $\|\psi_{n_{k}}^{\prime}-\psi^{\prime}\|_{2}\rightarrow0.$
Strong convergence in $H^{1}$ now follows from (\ref{eq:A.II.1}). 
\end{proof}

\section{Bound on the Effective Coulomb Potential}

\noindent Recalling the effective Coulomb potential $V_{\mathcal{U}}^{B}$
from (\ref{eq:uppercut}), 
\begin{lem}
\label{lem:B-label1}For any $L>0$ and $\phi\in H^{1}(\mathbb{R})$
one has for $B>1$ 
\begin{align*}
 & \left|\int_{\mathbb{R}}V_{\mathcal{U}}^{B}\left(x\right)\left|\phi\left(x\right)\right|^{2}dx-\left(\ln B-2\ln\ln B\right)\left|\phi(0)\right|^{2}\right|\\
 & \leq L^{-1}\|\phi\|_{2}^{2}+8\sqrt{L}\|\phi^{\prime}\|_{2}^{3/2}\|\phi\|_{2}^{1/2}+\left|\mathcal{G}\left(B,L\right)\right|\|\phi^{\prime}\|_{2}\|\phi\|_{2};
\end{align*}
\[
\mathcal{G}\left(B,L\right):=2\ln L+2\ln\ln B+2\int_{0}^{\infty}e^{-u}\ln\left(\sqrt{\frac{1}{u}+\frac{2}{BL^{2}}}+\sqrt{\frac{1}{u}}\right)du-\ln2.
\]
\end{lem}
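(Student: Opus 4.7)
The plan is to split the integral at the scale $|x|=L$, replace $|\phi(x)|^2$ by $|\phi(0)|^2$ on the short interval, and control the three resulting errors using the pointwise bound $V_{\mathcal{U}}^{B}(x)\le |x|^{-1}$ together with the standard one-dimensional Sobolev embedding $\|\phi\|_{\infty}^{2}\le \|\phi'\|_{2}\|\phi\|_{2}$. Concretely, write
\[
\int_{\mathbb{R}}V_{\mathcal{U}}^{B}(x)|\phi(x)|^{2}\,dx - \bigl(\ln B - 2\ln\ln B\bigr)|\phi(0)|^{2} = I_{1} + I_{2} + I_{3},
\]
where $I_{1}:=\int_{-L}^{L}V_{\mathcal{U}}^{B}(x)\bigl(|\phi(x)|^{2}-|\phi(0)|^{2}\bigr)dx$, $I_{2}:=|\phi(0)|^{2}\bigl(\int_{-L}^{L}V_{\mathcal{U}}^{B}-(\ln B-2\ln\ln B)\bigr)$, and $I_{3}:=\int_{|x|>L}V_{\mathcal{U}}^{B}(x)|\phi(x)|^{2}dx$.

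The algebraic heart of the proof is the exact evaluation of $\int_{-L}^{L}V_{\mathcal{U}}^{B}(x)\,dx$, which, after interchanging the order of integration and using $\int_{-L}^{L}(x^{2}+a^{2})^{-1/2}dx = 2\ln(L/a+\sqrt{(L/a)^{2}+1})$ with $a=\sqrt{2u/B}$, factors as
\[
\int_{0}^{\infty}e^{-u}\Bigl(\ln B + 2\ln L - \ln 2 + 2\ln\bigl(\sqrt{\tfrac{1}{u}+\tfrac{2}{BL^{2}}}+\sqrt{\tfrac{1}{u}}\bigr)\Bigr)du = \bigl(\ln B - 2\ln\ln B\bigr) + \mathcal{G}(B,L).
\]
This is precisely why $\mathcal{G}(B,L)$ is defined the way it is, and it gives $I_{2}=|\phi(0)|^{2}\,\mathcal{G}(B,L)$, so that by Sobolev $|I_{2}|\le|\mathcal{G}(B,L)|\,\|\phi'\|_{2}\|\phi\|_{2}$, which is the third term on the right-hand side.

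The remaining two terms are controlled by $V_{\mathcal{U}}^{B}(x)\le |x|^{-1}$: for $I_{3}$ this gives directly $|I_{3}|\le L^{-1}\|\phi\|_{2}^{2}$. For $I_{1}$ I would use the fundamental theorem of calculus plus Cauchy--Schwarz in the form
\[
\bigl||\phi(x)|^{2}-|\phi(0)|^{2}\bigr| \le 2\int_{0}^{|x|}|\phi'||\phi|\,dt \le 2\|\phi\|_{\infty}\sqrt{|x|}\,\|\phi'\|_{2},
\]
and then, using $V_{\mathcal{U}}^{B}(x)\le|x|^{-1}$ and $\int_{-L}^{L}|x|^{-1/2}dx=4\sqrt{L}$, estimate $|I_{1}|\le 8\sqrt{L}\,\|\phi\|_{\infty}\|\phi'\|_{2}$; one last application of the Sobolev embedding $\|\phi\|_{\infty}\le\|\phi'\|_{2}^{1/2}\|\phi\|_{2}^{1/2}$ converts this into the middle term $8\sqrt{L}\,\|\phi'\|_{2}^{3/2}\|\phi\|_{2}^{1/2}$. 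Combining the three bounds by the triangle inequality yields the lemma. There is no serious obstacle; the only nonroutine step is recognizing that $\mathcal{G}(B,L)$ is exactly the discrepancy in step $I_{2}$, which both motivates the splitting at scale $L$ and dictates the precise form of the quoted $\mathcal{G}$.
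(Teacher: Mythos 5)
Your proposal is correct and follows essentially the same route as the paper: the identical three-way splitting at scale $L$, the bound $V_{\mathcal{U}}^{B}(x)\leq|x|^{-1}$ for the tail and the difference term, the Sobolev/FTC estimates giving the $8\sqrt{L}\,\|\phi'\|_{2}^{3/2}\|\phi\|_{2}^{1/2}$ term, and the exact evaluation $\int_{|x|\leq L}V_{\mathcal{U}}^{B}=\ln B-2\ln\ln B+\mathcal{G}(B,L)$ via Fubini. The only difference is that you carry out the computation of $\int_{|x|\le L}V_{\mathcal{U}}^{B}$ explicitly, which the paper leaves as a stated identity.
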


\begin{proof}
Writing $\int_{\mathbb{R}}V_{\mathcal{U}}^{B}\left(x\right)\left|\phi\left(x\right)\right|^{2}$
as
\[
|\phi(0)|^{2}\int_{|x|\le L}V_{\mathcal{U}}^{B}(x)+\int_{|x|\leq L}V_{\mathcal{U}}^{B}(x)\left(|\phi(x)|^{2}-|\phi(0)|^{2}\right)+\int_{|x|\geq L}V_{\mathcal{U}}^{B}(x)|\phi(x)|^{2}
\]
it is possible to bound 
\begin{align}
 & \int_{|x|\geq L}V_{\mathcal{U}}^{B}\left(x\right)\left|\phi(x)\right|^{2}dx\leq L^{-1}\int_{|x|\geq L}|\phi(x)|^{2}dx,\label{eq:c1-1}\\
 & \left|\int_{|x|\leq L}V_{\mathcal{U}}^{B}\left(x\right)\left(\left|\phi(x)\right|^{2}-\left|\phi(0)\right|^{2}\right)dx\right|\leq8\sqrt{L}\|\phi^{\prime}\|_{2}^{3/2}\|\phi\|_{2}^{1/2}\label{eq:c2-1}
\end{align}
and to evaluate the integral 
\[
\int_{|x|\le L}V_{\mathcal{U}}^{B}\left(x\right)dx=\ln B-2\ln\ln B+\mathcal{G}\left(B,L\right).
\]
To arrive at the bound in (\ref{eq:c2-1}) the following inequalities
are used 
\begin{equation}
\left|\phi(x)-\phi(0)\right|\leq\sqrt{\left|x\right|}\|\phi^{\prime}\|_{2}\ \text{and}\ \|\phi\|_{\infty}^{2}\leq\|\phi^{\prime}\|_{2}\|\phi\|_{2}.\label{eq:c3-1}
\end{equation}
The lemma now follows from (\ref{eq:c1-1}), (\ref{eq:c2-1}) and
the rightmost inequality of (\ref{eq:c3-1}). 
\end{proof}
\begin{cor}
\label{cor:B-2}For any $L>0$ and $\phi\in H^{1}(\mathbb{R})$ one
has for $B>1$ 
\begin{align*}
 & \left|\int\int_{\mathbb{R}\times\mathbb{R}}\left|\phi(x)\right|^{2}\frac{1}{\sqrt{2}}V_{\mathcal{U}}^{B}\left(\frac{x-y}{\sqrt{2}}\right)\left|\phi\left(y\right)\right|^{2}dxdy-\left(\ln B-2\ln\ln B\right)\|\phi\|_{4}^{4}\right|\\
 & \leq L^{-1}\|\phi\|_{2}^{4}+8\sqrt{L}\|\phi^{\prime}\|_{2}^{3/2}\|\phi\|_{2}^{5/2}+\left|\mathcal{G}\left(B,L/\sqrt{2}\right)\right|\|\phi^{\prime}\|_{2}\|\phi\|_{2}^{3}
\end{align*}
with $\mathcal{G}\left(B,L\right)$ as above. 
\end{cor}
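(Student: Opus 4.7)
The plan is to reduce directly to Lemma~\ref{lem:B-label1} by treating the outer variable $y$ as a parameter and applying that lemma to a shifted, rescaled copy of $\phi$. For fixed $y\in\mathbb{R}$ I would set $\phi_y(t):=\phi(y+\sqrt{2}\,t)$, which lies in $H^1(\mathbb{R})$, and record the scaling identities $\|\phi_y\|_2=2^{-1/4}\|\phi\|_2$, $\|\phi_y'\|_2=2^{1/4}\|\phi'\|_2$ and $|\phi_y(0)|^2=|\phi(y)|^2$. In particular $\|\phi\|_4^4=\int_\mathbb{R}|\phi(y)|^2|\phi_y(0)|^2\,dy$, which aligns the main term on the right-hand side of the corollary with the main term that Lemma~\ref{lem:B-label1} will extract.

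The first step is the change of variable $x=y+\sqrt{2}\,t$ (with $y$ held fixed and $dx=\sqrt{2}\,dt$); the Jacobian $\sqrt{2}$ cancels the prefactor $1/\sqrt{2}$ in front of $V_\mathcal{U}^B$, yielding
\[
\int_{\mathbb{R}\times\mathbb{R}}|\phi(x)|^2\,\tfrac{1}{\sqrt{2}}V_\mathcal{U}^B\!\left(\tfrac{x-y}{\sqrt{2}}\right)|\phi(y)|^2\,dx\,dy=\int_\mathbb{R}|\phi(y)|^2\left(\int_\mathbb{R}V_\mathcal{U}^B(t)|\phi_y(t)|^2\,dt\right)dy.
\]
Subtracting the aligned main term and applying the triangle inequality, the quantity to be estimated is at most $\int_\mathbb{R}|\phi(y)|^2\,|\mathcal{R}(y)|\,dy$, where $\mathcal{R}(y)$ denotes the inner discrepancy that Lemma~\ref{lem:B-label1} controls.

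Next I would apply Lemma~\ref{lem:B-label1} to $\phi_y$ with length scale $L/\sqrt{2}$ in place of $L$. The only routine check is that the powers of $2$ from the rescaling cancel exactly against the $\sqrt{2}$ inside the length scale: one computes $(L/\sqrt{2})^{-1}\|\phi_y\|_2^2=L^{-1}\|\phi\|_2^2$, $8\sqrt{L/\sqrt{2}}\,\|\phi_y'\|_2^{3/2}\|\phi_y\|_2^{1/2}=8\sqrt{L}\,\|\phi'\|_2^{3/2}\|\phi\|_2^{1/2}$, and $|\mathcal{G}(B,L/\sqrt{2})|\,\|\phi_y'\|_2\|\phi_y\|_2=|\mathcal{G}(B,L/\sqrt{2})|\,\|\phi'\|_2\|\phi\|_2$. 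In particular the resulting bound on $|\mathcal{R}(y)|$ is independent of $y$, and multiplying it by $\int_\mathbb{R}|\phi(y)|^2\,dy=\|\phi\|_2^2$ produces the three claimed error terms. There is no real obstacle here; the only non-routine observation is that the argument $L/\sqrt{2}$ appearing inside $\mathcal{G}$ is forced precisely by the $\sqrt{2}$-scaling built into $\phi_y$, which is what makes all the numerical factors cancel cleanly.
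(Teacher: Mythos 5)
Your proposal is correct and is exactly the reduction the paper intends: the paper's proof consists of the single sentence that the corollary follows from Lemma \ref{lem:B-label1}, and your substitution $\phi_y(t)=\phi(y+\sqrt{2}\,t)$ with length scale $L/\sqrt{2}$ is the computation that makes this work, with all the powers of $2$ cancelling as you verified and the $y$-independent inner bound integrating against $\|\phi\|_2^2$ to give the stated error terms.
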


\begin{proof}
The corollary follows from Lemma \ref{lem:B-label1}. 
\end{proof}
\noindent Now recalling the potential $V_{\mathcal{\mathcal{L}}}^{B}$
from (\ref{eq:Bath}), 
\begin{cor}
For any $L>0$ and $\phi\in H^{1}(\mathbb{R})$ one has for B>1 
\begin{align*}
 & \left|\int_{\mathbb{R}}V_{\mathcal{\mathcal{L}}}^{B}(x)\left|\phi\left(x\right)\right|^{2}dx-\left(\ln B-2\ln\ln B\right)\left|\phi\left(0\right)\right|^{2}\right|\\
 & \leq L^{-1}\|\phi\|_{2}^{2}+8\sqrt{L}\|\phi^{\prime}\|_{2}^{3/2}\|\phi\|_{2}^{1/2}+\left|\mathcal{D}(B,L)\right|\|\phi^{\prime}\|_{2}\|\phi\|_{2};
\end{align*}
\[
\mathcal{D}\left(B,L\right):=2\ln L+2\ln\ln B+\frac{2}{\sqrt{\frac{2}{BL^{2}}+1}+1}+2\ln\left(\sqrt{1+\frac{2}{BL^{2}}}+1\right)-\ln2.
\]
\end{cor}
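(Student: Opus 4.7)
The plan is to mirror the proof of Lemma \ref{lem:B-label1} verbatim, with $V_{\mathcal{U}}^{B}$ replaced by $V_{\mathcal{\mathcal{L}}}^{B}$. First I would split
\[
\int_{\mathbb{R}} V_{\mathcal{\mathcal{L}}}^{B}(x)|\phi(x)|^{2}\,dx = |\phi(0)|^{2}\int_{|x|\le L}V_{\mathcal{\mathcal{L}}}^{B}(x)\,dx + \int_{|x|\le L}V_{\mathcal{\mathcal{L}}}^{B}(x)\bigl(|\phi(x)|^{2}-|\phi(0)|^{2}\bigr)dx + \int_{|x|\ge L}V_{\mathcal{\mathcal{L}}}^{B}(x)|\phi(x)|^{2}\,dx,
\]
so that the claimed triangle estimate reduces to (i) an explicit evaluation of $\int_{|x|\le L}V_{\mathcal{\mathcal{L}}}^{B}$, and (ii) two error bounds controlled by the pointwise estimate $V_{\mathcal{\mathcal{L}}}^{B}(x)\le 1/|x|$, which is immediate from $\sqrt{2/B+x^{2}}+|x|\ge 2|x|$.

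For the tail, the pointwise bound gives $\int_{|x|\ge L}V_{\mathcal{\mathcal{L}}}^{B}(x)|\phi|^{2}\,dx\le L^{-1}\|\phi\|_{2}^{2}$. For the middle piece, combine $||\phi(x)|^{2}-|\phi(0)|^{2}|\le 2\|\phi\|_{\infty}|\phi(x)-\phi(0)|\le 2\|\phi\|_{\infty}\sqrt{|x|}\,\|\phi'\|_{2}$ with $V_{\mathcal{\mathcal{L}}}^{B}(x)\le 1/|x|$ to get the factor $\int_{|x|\le L}|x|^{-1/2}dx = 4\sqrt{L}$; applying $\|\phi\|_{\infty}^{2}\le\|\phi'\|_{2}\|\phi\|_{2}$ yields the $8\sqrt{L}\|\phi'\|_{2}^{3/2}\|\phi\|_{2}^{1/2}$ term. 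These are the same two inequalities (\ref{eq:c1-1})--(\ref{eq:c2-1}) already used in Lemma \ref{lem:B-label1}.

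The only genuine work, and the main thing to be careful about, is the explicit integration of $\int_{-L}^{L}V_{\mathcal{\mathcal{L}}}^{B}(x)\,dx$, which must produce the constant $\mathcal{D}(B,L)$ in the statement. Writing $a=\sqrt{2/B}$ and rationalizing, $V_{\mathcal{\mathcal{L}}}^{B}(x)=B\bigl(\sqrt{a^{2}+x^{2}}-|x|\bigr)$, so by the standard antiderivative $\int\sqrt{a^{2}+x^{2}}\,dx=\tfrac{1}{2}x\sqrt{a^{2}+x^{2}}+\tfrac{a^{2}}{2}\ln\bigl(x+\sqrt{a^{2}+x^{2}}\bigr)+C$ one finds, after the identity $\tfrac{L(\sqrt{a^{2}+L^{2}}-L)}{a^{2}}=\tfrac{1}{\sqrt{a^{2}/L^{2}+1}+1}$,
\[
\int_{-L}^{L}V_{\mathcal{\mathcal{L}}}^{B}(x)\,dx = \ln B - 2\ln\ln B + \mathcal{D}(B,L),
\]
which is exactly the coefficient required. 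The extra $2\ln\ln B$ comes from rewriting $\ln L = -\ln\ln B + \bigl(\ln L + \ln\ln B\bigr)$ and absorbing the latter into $\mathcal{D}(B,L)$; this bookkeeping is the only subtle step and is the analogue of the corresponding calculation in Lemma \ref{lem:B-label1}, now using the elementary antiderivative of $\sqrt{a^{2}+x^{2}}$ instead of the representation $V_{\mathcal{U}}^{B}(x)=\int_{0}^{\infty}e^{-u}/\sqrt{x^{2}+2u/B}\,du$. Collecting the three estimates and the explicit evaluation yields the corollary.
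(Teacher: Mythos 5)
Your proposal is correct and follows exactly the route of the paper, whose proof of this corollary consists precisely of evaluating $\int_{|x|<L}V_{\mathcal{L}}^{B}(x)\,dx=\ln B-2\ln\ln B+\mathcal{D}(B,L)$ and then repeating the argument of Lemma \ref{lem:B-label1}. Your explicit integration via $V_{\mathcal{L}}^{B}(x)=B\bigl(\sqrt{2/B+x^{2}}-|x|\bigr)$ and the bound $V_{\mathcal{L}}^{B}(x)\leq1/|x|$ check out and reproduce $\mathcal{D}(B,L)$ exactly.
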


\begin{proof}
Evaluating the integral
\[
\int_{\left|x\right|<L}V_{\mathcal{\mathcal{L}}}^{B}(x)\,dx=\ln B-2\ln\ln B+\mathcal{D}\left(B,L\right)
\]
 the argument follows the proof of Lemma \ref{lem:B-label1}. 
\end{proof}

\end{document}